\def\eatcell#1\unskip{}
\newcommand{\CC}{\mathbb{C}} 
\newcommand{\RR}{\mathbb{R}} 
\newcommand{\QQ}{\mathbb{Q}} 
\newcommand{\HH}{\mathbb{H}}
\newcommand{\ZZ}{\mathbb{Z}} 
\newcommand{\PP}{\mathbb{P}}
\newcommand{\C}{\mathcal{C}}
\newcommand{\E}{\mathcal{E}}
\newcommand{\F}{\mathcal{F}}
\newcommand{\Hh}{\mathcal{H}}
\newcommand{\M}{\mathcal{M}}
\newcommand{\N}{\mathcal{N}}
\newcommand{\calS}{\mathcal{S}}
\newtheorem{theorem}{Theorem}
\newtheorem{proposition}[theorem]{Proposition}
\newtheorem{lemma}[theorem]{Lemma}
\newtheorem{claim}[theorem]{Claim}
\DeclareMathOperator{\Hom}{Hom}
\DeclareMathOperator{\Sym}{Sym}
\DeclareMathOperator{\Aut}{Aut}
\DeclareMathOperator{\rk}{rk}
\DeclareMathOperator{\Tr}{Tr}
\DeclareMathOperator{\sgn}{sgn}
\DeclareMathOperator{\Div}{div}
\DeclareMathOperator{\Hilb}{Hilb}
\numberwithin{equation}{section}
\def\be{\begin{equation}}
\def\ee{\end{equation}}
\title{Some comments on symmetric orbifolds of K3}
\author{Roberto Volpato\thanks{volpato@pd.infn.it}}
\date{}
\affil{\small{Dipartimento di Fisica e Astronomia `Galileo Galilei' and\\ INFN sez. di Padova\authorcr
		Via Marzolo 8, 35131 Padova (Italy)}}
\begin{document}
	\maketitle

	\abstract{We consider two dimensional $\mathcal{N}=(4,4)$ superconformal field theories in the moduli space of symmetric orbifolds of K3. We complete a classification of the discrete groups of symmetries of these models, conditional to a series of assumptions and with certain restrictions.  Furthermore, we provide a partial classification of the set of twining genera, encoding the action of a discrete symmetry $g$ on a space of supersymmetric states in these models. These results suggest the existence of a number of surprising identities between seemingly different Borcherds products, representing Siegel modular forms of degree two and level $N>1$. We also provide a critical review of various properties of the moduli space of these superconformal field theories, including the groups of dualities, the set of singular models and the locus of symmetric orbifold points, and describe some puzzles related to our (lack of) understanding of these properties.

	}
	\pagebreak
	\tableofcontents
	

	\section{Introduction}
	
	Non-linear sigma models (NLSM) with target space a symmetric orbifold $\Sym^n(M_4)$, where $M_4=T^4$ or $K3$, have been the subject of intense study in the last two decades. The interest in these models is mostly due to the role they play in the AdS$_3$/CFT$_2$ correspondence as the holographic dual of AdS$_3\times S^3\times M_4$ \cite{Maldacena:1997re}. The symmetric orbifold models represent some very special points in the moduli space of two dimensional $\N=(4,4)$ superconformal field theories (SCFTs) at central charge $c=6n$. A generic point in this moduli space (at least in the connected component containing symmetric orbifolds) can be thought of as a non-linear sigma model on a hyperk\"ahler manifold of real dimension $4n$, which is a deformation of $\Hilb^n M_4$, the Hilbert scheme of $n$-points on the manifold $M_4$. In the context of type IIB string theory, these models arise as the infrared effective theories describing the worldsheet dynamics of a system of branes wrapping (some cycles in) the manifold $M_4$. The most studied example of this construction involves a system of $Q_5$ D5-branes wrapping $M_4\times S^1$ and $Q_1$ D1-branes wrapping $S^1$, with $Q_1Q_5=n$.
	
	This article is focused on the case $M_4=K3$. In this case, the relevant target spaces are known in algebraic geometry as hyperk\"ahler manifolds of K3$^{[n]}$ type. The symmetric orbifold $\Sym^nK3$ can be obtained as a suitable limit of such manifolds, where the geometry becomes singular. In analogy with the terminology in algebraic geometry, we will call the SCFTs in the moduli space of $Sym^n K3$ `of $K3^{[n]}$ type'. These SCFTs are generically not free, even in the symmetric orbifold limit, and this makes explicit computations particularly challenging, at least for finite $n$, where holographic methods do not apply -- for example, the partition function of a generic such model is not known. 
%

 The first major result of this article is a classification (up to isomorphisms) of the possible groups of discrete symmetries (satisfying certain conditions) of these SCFTs, for any $n$ and any point in the moduli space. This is the generalization to manifolds of K3$^{[n]}$ type of the analogous result for non-linear sigma models on a single K3 surface \cite{K3symm}. More precisely, we will consider symmetries acting by linear transformations on the states (or equivalently, on the fields) of the CFT, such that the $\N=(4,4)$ supercurrents are invariant, and commuting with the spectral flow transformation relating the Neveu-Schwarz and the Ramond sector, both on the left-moving and on the right-moving side (see conditions (1) and (2) in section \ref{s:callifsymm}). We will also require some technical restrictions on these symmetries, see condition (3) in section \ref{s:callifsymm}. These restrictions are clearer if one interprets the SCFT as the world-sheet theory for a system of branes in type IIB string theory, for example a bound state of D1-D5 branes. In this context, a first condition is that we focus on symmetries whose action extends to the whole type IIB string theory. The second constraint can be described very roughly as the condition that the symmetries (or rather their extension to the full type IIB string theory) do not to exchange branes with antibranes. In fact, it is quite surprising that these requirements are meaningful at all, i.e. that there might exist some symmetry that does \emph{not} satisfy them.  We will argue that this is the case in section \ref{s:dualities}. Finally, our analysis will exclude the models corresponding to a certain set of points in the moduli space. Conjecturally, this excluded set might be empty, i.e. the `models' we are excluding might be singular limits in the moduli space where the SCFT is not really well defined. However, while we will provide some reasonable arguments in favour of this conjecture (see section \ref{s:singular}), assuming that this is true seems to lead to some puzzles, related in particular to the locus of symmetric orbifolds. As long as these puzzles are not clarified, it seems safer to claim explicitly that these models, if they exist, are not part of our analysis. With all caveats in mind, we can now formulate our classification result (proposition \ref{th:main}, section \ref{s:callifsymm}): we will show that the group $G$ of symmetries of a (non-singular) SCFT of K3$^{[n]}$ type is isomorphic to a subgroup of the Conway group $Co_0$ (the group of automorphisms of the Leech lattice $\Lambda$), which fixes a sublattice of $\Lambda$ of dimension at least $3$. Conversely, given  any subgroup $G$ of $Co_0$ with this property and given any $n>1$, there exists a (non-singular) SCFT of $K3^{[n]}$ type having exactly $G$ as a group of symmetries. This is very similar to the result for NLSM on a single K3 surface, the only difference there being that the groups $G\subset Co_0$ were required to fix a $4$-dimensional, rather than $3$-dimensional, sublattice \cite{K3symm}. The fact that $K3^{[n]}$-models for $n>1$ admit symmetry groups that did not occur for NLSMs on a single K3 (for example, the Mathieu groups $M_{11}$ and $M_{22}$) might play a role in the context of Umbral moonshine.
 
 In order to obtain this result, the main input from physics is the global form of the moduli space of SCFTs of K3$^{[n]}$-type. This moduli space is believed to be obtained by taking the quotient of a certain Grassmannian by a discrete group of dualities; from this quotient, one needs to exclude a subset corresponding to `singular' limits where the SCFT is not well defined  \cite{Dijkgraaf:1998gf,Seiberg:1999xz}. The reader should be warned that our results are conditional to some assumptions (described in section \ref{s:callifsymm}) about the precise structure of the duality group and the location of the singular models. While these assumptions are quite standard, the arguments supporting them are not as stringent as, for example, the case of the moduli space of NLSMs on K3.
 
 If one relaxes our conditions  and considers also symmetries acting by (roughly speaking) charge conjugation on the D-branes, then the corresponding groups can, in some cases, get larger, and become $\ZZ_2$-extensions of the groups described above. We were not able to determine exactly at which points in the moduli space these enlargement occurs and the precise group structure of the extensions.
 
There are analogous attempts in algebraic geometry to classify the groups of symplectic automorphisms of hyperk\"ahler manifolds of K3$^{[n]}$ type (see  \cite{Mongardi2015,Mongardi2016,huybrechts,HohnMason2019}), which are subgroups of the groups of symmetries of the corresponding NLSM. For a single K3 surface, the problem was solved a long time ago by a famous theorem by Mukai \cite{mukai1988finite}. The classification of these `geometric' symmetries seems more complicated than the one of symmetries of SCFTs. In particular, certain groups $G$ seem to appear only for some manifolds of K3$^{[n]}$-type only for $n$ `large enough'; by contrast, for SCFTs, a certain group $G$ either occurs for all $n>1$ or does not occur at all. The results of \cite{K3symm}, valid for non-linear sigma models on K3, were subsequently interpreted in \cite{huybrechts} in terms of a classification of the groups of autoequivalences of the derived category of coherent sheaves on K3 surfaces. It is tempting to conjecture that the results of this article admit analogous interpretations in terms of derived categories of hyperk\"ahler manifolds of K3$^{[n]}$ type.

\bigskip

The second result of this article concerns the so called twining genera. Given a SCFT of K3$^{[n]}$-type with a group of symmetries $G$, one can define a $g$-twining elliptic genus (twining genus, for short) $\phi_g^{K3^{[n]}}(\tau,z)$ obtained from the elliptic genus by inserting the symmetry $g$ in the trace (see the definition \eqref{twindef}). One case where a twining genus can be easily computed is when the SCFT of $K3^{[n]}$ type is the $n$-th symmetric orbifold of a NLSM $\C_{K3}$ on a K3 surface, and the symmetry $g$ is induced by a symmetry $g'$ of the seed theory $\C_{K3}$. In this case, provided that the twining genera $\phi_{g'^r}^{\C_{K3}}$ of $g'$ and of all its powers $g'^r$ are known for $\C_{K3}$, one can compute a generating function  (sometimes known as the `second quantized twining genus')
\be \Psi_{g'}(\sigma,\tau,z)=\sum_{n=0}^\infty p^n \phi^{Sym^n\C_{K3}}_{g'}\ ,\qquad\qquad p:=e^{2\pi i\sigma}\ ,
\ee
 for the elliptic genera $\phi^{Sym^n\C_{K3}}_{g'}$ of the induced symmetries in the  symmetric orbifold $\Sym^n\C_{K3}$, for all $n$. Upon including an `automorphic correction', the second quantized twining genus becomes a (meromorphic) Siegel modular form $1/\Phi_g$ of genus $2$ under a certain congruence subgroup of $Sp(4,\ZZ)$; explicitly, \be 1/\Phi_{g}(\sigma,\tau,z)=\frac{\Psi_g(\sigma,\tau,z)}{p\psi_g(\tau,z)}\ .\ee
 Both $\Psi_g$ and $\Phi_g$ can be defined explicitly in terms of infinite products, whose exponents depend on the Fourier coefficients of  $\phi_{g'^r}^{\C_{K3}}$ for all powers $g'^r$. The set of possible twining genera of NLSM on K3 are known \cite{Cheng:2016org,Paquette:2017gmb} (although, for some symmetries $g$, there are more than one `candidate' twining genus, and it is not known which one is actually realized in a given NLSM, see \cite{Paquette:2017gmb}), so the computation can be effectively performed.
 
Standard arguments (reviewed in section \ref{s:conjclass}) show that a  twining genus $\phi_g^{K3^{[n]}}$ is invariant under deformations of the moduli that preserve the symmetry $g$, i.e. deformations generated by exactly marginal operators that are $g$-invariants. This means that there are connected families of models where the twining genus $\phi_g^{K3^{[n]}}$ is well-defined and constant -- for NLSMs on K3, an analogous argument was developed in \cite{Cheng:2016org}. Therefore, it is sufficient to compute the twining genus at a suitable point (for example, a symmetric orbifold), in order to obtain the genus for all models in the corresponding family. 

For this reason, it is interesting to study what are the possible families of models with a given symmetry $g$, and in particular families containing a symmetric orbifold point. We prove that, for each $n$, there are only finitely many distinct families of SCFTs of K3$^{[n]}$ type with a non-trivial symmetry $g$ (once we identify families related by dualities). We were not able to determine exactly how many such families exist, in general. However, we provide a complete classification of families containing symmetric orbifold points, and such that the corresponding symmetry $g$ is induced by a symmetry of the seed theory (see Table \ref{tab:big}). We also argue that there exist families of SCFTs of K3$^{[n]}$ that are not of this kind, i.e. whose symmetry $g$ is never the  symmetry of a seed theory $\C_{K3}$ in a symmetric orbifold point $\Sym^n\C_{K3}$. For these families, we do not know any method for computing the corresponding twining genus. 
 
 A surprising result is the following (see section \ref{s:elevenandfriends}). Consider a pair of NLSMs on K3, $\C_{K3}$ and $\C'_{K3}$,   with symmetries, respectively, $g$ and $g'$ of the same order, but not belonging to the same family. In general, one expects the corresponding twining genera $\phi^{\C_{K3}}_g$ and $\phi^{\C'_{K3}}_{g'}$ to be different. Our classification, however, shows that for some $n$ (sometimes for all $n>1$), the symmetric orbifolds $Sym^n \C_{K3}$ and $Sym^n \C'_{K3}$ with the induced symmetries $g$ and $g'$ belong to the same family. This implies that, while the twining genera of the seed theory are different, the ones of the $n$-th symmetric orbifold are the same. This phenomenon happens, for example, for symmetries of order 11: there are $4$ isolated points in the moduli space of NLSMs on K3 having a symmetry of order $11$. There are two possible twining genera for symmetries of order $11$ \cite{Paquette:2017gmb}. While one cannot determine directly which twining genus is realized in each of the four isolated points, it is reasonable to expect that not all of them are the same; this expectation also fits nicely with a conjecture in \cite{Cheng:2016org}, related to the so-called Umbral Moonshine phenomenon \cite{EOT,Cheng:2012tq,Cheng:2013wca,Cheng:2014zpa}. On the other hand, in this article we show that, for each $n>1$, the $n$-th symmetric products of the four NLSM on K3 all sit in the same family with a symmetry of order $11$. This means that the second quantized twining genera $\Psi_g$ of all these four K3 NLSM must have the same coefficients for all powers $p^n$ with $n>1$. On the other hand, the second quantized twining genera are given by infinite products whose exponents are determined the twining genera of the seed theory. If the twining genera of the seed theory are really different, we would have different infinite products that only differ for the $p=1$ term, thus implying an infinite number of non-trivial cancellations occurring. This seems even more puzzling if one consider the `automorphic corrected' Siegel modular forms. In this case, one would have that the difference of the two Siegel modular forms contains only a term with power $p^0$, which seems to be incompatible with the fact that this difference must be modular. All these arguments seem to falsify the conjecture that there are two distinct twining genera of order $11$ for NLSM on K3. However, miracles seem to happen. Some numerical experiments suggest that the two different second quantized twining genera do have the same $p^n$ terms for $n>1$, at least up to $n=12$. And even the modularity argument is avoided: indeed, for order $11$ symmetries, the corresponding Siegel modular forms turn out to have weight $0$. Furthermore, a direct calculation shows that the coefficients of the $p^0$ term in the corresponding Siegel modular forms, which in general are functions of $\tau$ and $z$, differ only by a constant. Thus, one would have that the difference of two meromorphic Siegel modular forms of weight $0$ is just a constant, which is perfectly compatible with modularity. In fact, this argument might be used to prove that the two meromorphic Siegel forms are the same up to a constant. Indeed, the location of the poles is strongly constrained by wall crossing \cite{Paquette:2017gmb}, so it is not unlikely that the two forms have exactly the same poles. If this is the case, then the difference must be a holomorphic Siegel form of weight zero, which is necessarily a constant.  While we focused here in the introduction on symmetries of order $11$, similar phenomena occur for several other symmetries: in total, we have $13$ pairs of distinct Borcherds products that differ only by a constant. In \emph{all} these cases, the corresponding Siegel modular forms have weight $0$ and the $p^0$-term in the difference is independent of $\tau$ and $z$. While such coincidences, if true, might be explained by modularity, it is still remarkable that there are so many examples where different infinite products lead to the same modular forms up to a constant. This is not an uncommon phenomenon for genus one modular forms, in particular products of eta functions, but we were not aware of any such example at genus two.
	
The paper is organized as follows. In section \ref{s:SCFTs}, we describe some general properties of the moduli space of SCFTs of $K3^{[n]}$ type. Most of these results are well-known, but we propose a new perspective on various aspects that are relevant for our subsequent analysis. In  particular, in section \ref{s:dualities} we discuss the group of dualities of these superconformal field theories, and its relationship with the well-known group of dualities to the type IIB string theory on $K3$. In section \ref{s:singular}, we discussed the limits in the moduli space where the models become singular; this is mostly based on \cite{Seiberg:1999xz}, but we will suggest the possibility that those results might be modified. In section \ref{s:symmorbmoduli}, we discuss the locus of symmetric orbifold points in the moduli space. The analysis of this section, together with \ref{s:dualities} and \ref{s:singular}, leads to some puzzles, whose solution is not clear to us. These puzzles suggest that our (at least, the author's) understanding of the moduli space of SCFTs of $K3^{[n]}$-type is not as watertight as, for example, the one of NLSM on a single K3 surface. In section \ref{s:mainres}, we describe our new results. In section \ref{s:callifsymm} we classify the groups of symmetries of $K3^{[n]}$ models. In section \ref{s:twining}, we review the main properties of twining genera, whose classification is described in section \ref{s:conjclass}. Finally, in section \ref{s:elevenandfriends}, we discuss the consequences of these results for the Borcherds products corresponding to the second quantized twining genera. Section \ref{s:conclusions} contains some suggestions for future research. Most of the technical results are contained in the appendices, where we also review the basic properties of the moduli space of type IIB on K3 and some basic facts about lattices that are needed in the rest of the paper.

%
%
	
	\section{Superconformal field theories of $K3^{[n]}$-type}\label{s:SCFTs}

	The moduli space $\M_n$ of the $N=(4,4)$ models at $c=6n$, $n>1$, in the connected component of $\Sym^n(K3)$, is believed to be an open subset in the quotient \cite{Dijkgraaf:1998gf}
	\be \M_n\subseteq O^+_n\backslash Gr^+(4,21) \ ,\qquad n>1\ ,
	\ee where
	\be Gr^+(a,b):= O^+(a,b,\RR)/(SO(a,\RR)\times O(b,\RR))\ ,\ee is the Grassmannian parametrising oriented positive definite $a$-dimensional subspaces of the space $\RR^{a,b}$. Here, $O^+(a,b,\RR)\equiv O^+(a,b)$ denotes the subgroup of the real orthogonal group $O(a,b,\RR)$ preserving the orientation of a positive definite $a$-dimensional subspace in $\RR^{a,b}$. 
	
	The discrete group $O^+_n$, which depends on $n$, will be discussed at length in section \ref{s:dualities}. We will argue the $O^+_n$ is a subgroup of the group of automorphisms $O^+(\Gamma^{5,21})=O(\Gamma^{5,21})\cap O^+(5,21,\RR)$ of an even unimodular lattice $\Gamma^{5,21}\subset \RR^{5,21}$ of signature $(5,21)$.   In particular, $O^+_n$ contains the subgroup $Stab^+(v)$ of $O(\Gamma^{5,21})$ that fixes a primitive vector $v\in \Gamma^{5,21}$ of squared norm $v^2=2n-2$. The space $Gr^+(4,21)$ is interpreted as the Grassmannian of oriented positive definite four-dimensional subspaces $\Pi$ in the space $v^\perp \cap \RR^{5,21}\cong \RR^{4,21}$, where $\RR^{5,21}=\Gamma^{5,21}\otimes \RR$. With this identification, there is a natural action of $Stab^+(v)$ on $\RR^{4,21}$, which induces an action on $Gr^+(4,21)$.
	
	As will be discussed in section \ref{s:singular}, the open subspace $\M_n$ is the complement in $ O^+_n\backslash Gr^+(4,21)$ of some `singular loci' of real codimension at least $4$; as a consequence, $\M_n$ is connected.  In the limit where we approach such a singular point, the superconformal field theory is supposed to show some pathological behaviour, e.g. some correlators might diverge. 
	
	As reviewed in section \ref{s:stringsonK3}, in the context of type IIB string theory compactified on a K3 surface, $\Gamma^{5,21}$ can be interpreted as the lattice of charges carried by string-like objects in the six dimensional theory, $O^+(\Gamma^{5,21})$ is the group of U-dualities, and $v$ is the charge of one of these strings, whose dynamics in the infrared (or near-horizon) limit, is described by an $\N=(4,4)$ SCFT of central charge $6n$. For example, the world-sheet dynamics of a system of $Q_1$ D1-branes and $Q_5$ D5-branes wrapping the K3 surface is described by a NLSM on a deformation of $\Sym^n(K3)$, with  $n=Q_1Q_5-1$. We denote by \be
	L_v\equiv L_n:=v^\perp \cap \Gamma^{5,21}\ee the sublattice of $\Gamma^{5,21}$ orthogonal to such a vector $v$. Since any two primitive vectors of the same length in $\Gamma^{5,21}$ are related by an automorphism in $O^+(\Gamma^{5,21})$ (see appendix \ref{a:lattices}), it follows that $L_v$ is uniquely determined by the integer $n$ up to isomorphisms (for this reason, we will sometimes use the notation $L_n$). In particular, $L_n$ is isomorphic to the orthogonal sum $\langle w\rangle\oplus \Gamma^{4,20}$ of a $1$-dimensional lattice $\langle w\rangle$ with a generator $w$ of squared norm $2-2n$ and the even unimodular lattice $\Gamma^{4,20}$ of signature $(4,20)$. 
	
	The moduli space $\M_n$ can be understood as a subset of the moduli space
	\be \M_{IIB}=O^+(\Gamma^{5,21})\backslash Gr^+(5,21)\ ,
	\ee  of type IIB string theory on a K3 surface  (see appendix \ref{s:stringsonK3}). Notice that $\M_{IIB}$ is a quotient of the Grassmannian parametrizing  positive definite $5$-dimensional subspaces $Z$ within $\Gamma^{5,21}\otimes\RR\cong \RR^{5,21}$, taken modulo automorphisms of the lattice $\Gamma^{5,21}$. The subspace $\M_n$ can be interpreted as the space of attractor moduli for a string of primitive charge $v\in \Gamma^{5,21}$, with $v^2=2n-2$. The attractor conditions constrain the space $Z$ to contain the vector $v$, so that the Grassmannian $Gr^+(4,21)$ parametrizes the four dimensional subspaces $\Pi=Z\cap v^\perp$ inside $\RR^{4,21}=v^\perp\cap \RR^{5,21}$ \cite{Dijkgraaf:1998gf}.
	
	Apart from the interpretation in terms of type IIB moduli, the space $\M_n$ parametrizes the possible hyperk\"ahler metric and B-field on a hyperk\"ahler manifold of $K3^{[n]}$-type, i.e. deformation equivalent to the Hilbert scheme of $n$-points in a K3 surface (see appendix \ref{a:Hilbertscheme}).
	
	The space $\M_n$ contains a sublocus $\M_n^{sym}$ given by the $n$-th symmetric orbifolds of  NLSM on K3, which has codimension $4$ and is isomorphic to the moduli space $\M_1$ of NLSM on K3.  $\M_1$ can be described as an open subset in the quotient of a Grassmannian
	\be  \M_1\subset O^+(\Gamma^{4,20})\backslash Gr^+(4,20)\ ,
	\ee obtained by excluding from the Grassmannian the points corresponding to $4$-dimensional subspaces $\Pi\subset \Gamma^{4,20}\otimes\RR\cong \RR^{4,20}$ that are orthogonal to a root $\Pi\subset r^\perp$, where $r\in \Gamma^{4,20}$, $r^2=-2$ \cite{Aspinwall:1996mn,Nahm:1999ps}. In other words, there is an injective map $\Sym^n:\M_1\to \M_n$ with image $\M_n^{sym}$. Clearly, $O^+(\Gamma^{4,20})$ is a subgroup of $O^+ (L_n)$. In section \ref{s:symmorbmoduli}, we will try to identify the locus $\M_n^{sym}$ within $\M_n$.

\subsection{The duality group}\label{s:dualities}

In this section, we analyze the group of dualities $O^+_n$ of SCFT of $K3^{[n]}$ type. The duality group must necessarily preserve the Zamolodchikov metric on the moduli space, and therefore it will be a subgroup of the orthogonal group $O(4,21)$. Furthermore, an analysis similar to \cite{Cheng:2016org}, section 3.2, shows that a duality in $O(4,21)$ preserves the world-sheet orientation of the SCFT if and only if it is contained in the $O^+(4,21)$ subgroup that preserves the orientation of $4$-dimensional subspaces in $\RR^{4,21}$. We will denote by $O_n$ the group of dualities of SCFTs of $K3^{[n]}$-type that might or might not preserve the world-sheet orientation, and by $O_n^+=O_n\cap O^+(4,21)$ the subgroup that does preserve the orientation. It is a matter of definitions whether two point in the moduli space related by a $O_n\setminus O^+_n$ transformation should be identified or not; in this article, we find it convenient to distinguish such points, so only the subgroup $O^+_n$ will be relevant for the definition of the moduli space.

As mentioned in the previous sections, these SCFTs describe the infrared dynamics of a stringy-like object of charge $v\in \Gamma^{5,21}$, with $v^2=2n-2$, in type IIB on K3 (or K3$\times S^1$).  The U-duality group of type IIB string theory on K3 is the group $O(\Gamma^{5,21})$ of automorphisms of the lattice $\Gamma^{5,21}$. This U-duality group contains a subgroup $Stab(v)$ that fixes the vector $v$. Dualities in the group $Stab(v)$ map an object of charge $v$ into itself, and preserve (setwise) the space $v^\perp\subset \Gamma^{5,21}\otimes \RR$ and therefore the Grassmannian $Gr(4,21)$ of positive definite $4$-dimensional subspaces inside $v^\perp\cong \RR^{4,21}$. Therefore, any two points in $Gr(4,21)$ related by a transformation in $Stab(v)$ must be dual to each other, i.e. they define equivalent superconformal field theories, so that $Stab(v)\subseteq O_n$. The subgroup of such dualities that preserve the world-sheet orientation is given by
\be Stab^+(v)=Stab(v)\cap O^+(4,21)\ ,
\ee and only this group will be included in $O^+_n$. 

Can the group $O^+_n$ be strictly larger than $Stab^+(v)$? Clearly, if a duality of the  SCFTs of $K3^{[n]}$-type outside of $Stab(v)$ extends to a duality of the full type IIB, then it must act by automorphisms on the lattice of charges $\Gamma^{5,21}$, so it must be an element of $O(\Gamma^{5,21})$. Furthermore, it must preserve the space $v^\perp \subset \Gamma^{5,21}\otimes \RR$, in order to have a well-defined action on the moduli space of SCFTs of $K3^{[n]}$-type. The only possibilities are that the duality either acts trivially on $v$ (but in this case, it would be in $Stab(v)$), or it acts by $v\mapsto -v$. The latter transformation, if extended to the whole type IIB string theory, would exchange the effective string we are considering with its charge conjugate. In fact, as explained in section \ref{s:symmorbmoduli}, the existence of such a duality follows from the existence of a certain symmetry at the points in the moduli space corresponding to symmetric orbifold $\Sym^n K3$. This symmetry acts non-trivially on the exactly marginal operators deforming the SCFT away from the symmetric orbifold locus. If two exactly marginal operators are related by such a symmetry, the corresponding deformed models must be dual to each other. In section \ref{s:symmorbmoduli}, we will argue that such a duality can only be extended to a $O(\Gamma^{5,21})$ transformation that maps $v$ into $-v$ (at least for $n>2$). So, it seems that this kind of dualities have to be included in $O^+_n$, which therefore must contain the group
\be Stab^+(\ZZ v):= \{h\in O(\Gamma^{5,21})\mid h(v)=\pm v\} \cap O^+(4,21)\ ,
\ee (as the notation suggests, this is the setwise stabilizer of the sublattice $\ZZ v\subset \Gamma^{5,21}$ in $O(\Gamma^{5,21})\cap O^+(4,21)$).  

Finally, let us discuss the possibility that $O^+_n$ contains some duality $h\in O^+(4,21)$ of the SCFTs that cannot be extended to an element in $O(\Gamma^{5,21})$. Recall that one of the arguments showing that $O(\Gamma^{5,21})$ is the full group of dualities of type IIB is that it is a maximal discrete subgroup of $O(5,21)$, so that any larger group of dualities would make the moduli space of type IIB on K3 non-Hausdorff \cite{Aspinwall:1995td,Aspinwall:1996mn}. In the case we are considering, the groups $Stab^+(v)$ and $Stab^+(\ZZ v)$ are subgroups of the group $O^+(L_v)$ of automorphisms of the lattice $L_v$. For generic $n$, however, they are strictly smaller than $O^+(L_v)$.\footnote{In particular $Stab^+(v)$ is a proper subgroup of $Stab^+(\ZZ v)$, and therefore of $O^+(L_v)$, for all $n>2$, while for $n=2$ one has $Stab^+(v)=Stab^+(\ZZ v)=O^+(L_v)$. The group $Stab^+(\ZZ v)$ is a proper subgroup of $O^+(L_v)$ whenever $v^2=2n-2$ is not a prime power. } Therefore, even requiring the moduli space of SCFTs of $K3^{[n]}$ type to be Hausdorff does not rule out the possibility that the duality group is a subgroup of $O^+(L_v)$ larger than $Stab^+(\ZZ v)$. It might be possible to exclude these kind of dualities by showing explicit examples (maybe of symmetric orbifolds) of models that are related by such automorphisms of $L_v$ but are not equivalent to each other. In the absence of a full proof in this sense, in the following we will keep in mind the possibility that the duality group $O^+_n$ of the SCFTs are subgroups of $O^+(L_v)$ larger than $Stab^+(\ZZ v)$. 

Could the group of dualities include also elements of $O^+(4,21)$ that are not automorphisms of the lattice $L_v$? While we were not able to prove this rigorously, it seems reasonable to expect that any such transformation would generate a group that is not discrete, and such that the moduli space is not Hausdorff. From a different perspective, recall that for a NLSM on a single K3 surface, the lattice $\Gamma^{4,20}$ can be interpreted as the lattice of R-R charges of D-branes, and can be constructed within the SCFT itself by looking at the couplings of R-R ground fields with $1/2$ BPS boundary states. One can consider boundary states also in SCFTs of $K3^{[n]}$ type; it is reasonable to expect $L_v$ to be the lattice of charges for a suitable class of boundary states. If this is the case, then a symmetry of the SCFT should preserve this lattice, and therefore act on it by automorphisms. These arguments seem to suggest that the duality group $O^+_n$ must be a subgroup of $O^+(L_v)$. In the following sections, we will assume that this is the case, and that $O^+_n$ is some intermediate subgroup between $Stab^+(\ZZ v)$ and $O^+(L_v)$
\be Stab^+(\ZZ v)\subseteq O^+_n\subseteq O^+(L_v)\ .
\ee
From the physics viewpoint, elements of $O^+_n\subseteq O^+(L_v)$  that are not contained in $Stab^+(\ZZ v)$ would correspond to `accidental' equivalences between $K3^{[n]}$-models, that do not descend from dualities of the full string theory. For definiteness, let us think about such an accidental equivalence as an element of $O(5,21,\RR)$ that fixes the vector $v$ and relates two distinct points of $\M_{IIB}$, both satisfying the attractor condition for a string of charge $v$. Then, these two points are physically inequivalent from the point of view of the full string theory, but  cannot be distinguished  just looking at the low energy dynamics of the string of charge $v$.

\subsection{Singular limits}\label{s:singular}

As explained in section \ref{s:stringsonK3}, the moduli space $\M_{IIB}$ of type IIB superstrings compactified on K3 is a Grassmannian of oriented positive definite $5$-planes $Z$ within $\Gamma^{5,21}\otimes \RR$, modulo $O^+(\Gamma^{5,21})$. For each point in $\M_{IIB}$, one has an orthogonal decomposition $v\equiv (v_+,v_-)$ of any vector $v\in \Gamma^{5,21}$ into a (positive norm) component $v_+$ along $Z$ and a (negative norm) component $v_-$ along $Z^\perp$. Let $v\in \Gamma^{5,21}$ be a primitive vector with $v^2=2n-2> 0$, representing the charge of some BPS object (the case $v^2=0$ is a bit peculiar, so let us not consider it here). Then, the attractor manifold for an object of charge $v$ is the locus of subspaces $Z $ containing $v$, so that $v=(v_+,0)$ in the decomposition above. This locus can be identified with the moduli space of non-linear sigma models on hyperk\"ahler manifolds of $K3^{[n]}$ type, i.e. deformations of the Hilbert scheme of $n$ points on the K3 manifold $X$ (or the moduli space of sheaves on $X$ with Mukai vector $v$). More precisely,  a point $Z$ in the attractor manifold of $v$ determines an oriented positive definite $4$-plane $\Pi:=Z\cap v^\perp$ in $v^\perp \cong L_v\otimes \RR$, where $L_v=v^\perp \cap \Gamma^{5,21}$. 

Seiberg and Witten \cite{Seiberg:1999xz} described the conditions on the attractor point $Z$ such that the corresponding SCFT be singular. This only happens when the BPS object of charge $v$ can decay into two BPS objects of charges $v'$ and $v''$.  Formally, this translates into the existence of two vectors $v',v''\in \Gamma^{5,21}$, that are not proportional to $v$ (otherwise $v$ would not be primitive), and such that
\begin{align}
&(v')^2,(v'')^2\ge -2\ ,\label{cond1}\\
&v=v'+v''\ ,\label{cond2}\\
&|v_+|=|v'_+|+|v''_+|\ .\label{cond3}
\end{align} (note that $v_+=v$, because we are at the attractor manifold). Condition \eqref{cond2} is just charge conservation. The mass of a BPS object of charge $v$ is proportional to the length $|v_+|$ of its component along $Z$. For $v=v'+v''$, one has $|v_+|\le |v'_+|+|v''_+|$ and the decay can occur only when the equality is satisfied, which is condition \eqref{cond3}. The condition \eqref{cond1} ensures the existence of BPS objects with charges $v'$ and $v''$. 

To be precise, the condition \eqref{cond1} is slightly different from the condition $(v')^2,(v'')^2\ge 0$ given in \cite{Seiberg:1999xz}, which seems to be appropriate only for compactifications on a torus $T^4$. For compactifications on K3, there are BPS objects carrying charge with square norm $-2$: the easiest such example is a single D5-brane wrapping the K3 manifold, with carries $Q_5=+1$ five brane charge a geometrically induced $Q_1=-1$ one-brane charge, so that $v^2=2Q_1Q_5=-2$.  Notice that the near horizon geometry of such objects is rather complicated \cite{Johnson:1999qt}. It seems appropriate to exclude the points in the moduli space where the system of branes that we are considering can decay into one of these objects (plus some other decay product).  The conditions \eqref{cond1}--\eqref{cond3} also lead us to exclude from the moduli space the points where some non-perturbative (at the attractor point) strings become tensionless \cite{Witten:1995zh}. This phenomenon can occur, for example, when the K3 surface contains a holomorphic $2$-cycle with self-intersection $-2$ (a $\PP^1$) whose volume is shrunk to a point, and with no B-field flux through it. A D3-brane wrapping such a cycle becomes tensionless at such a point in the moduli space. It is known that the SCFT describing the worldsheet dynamics of a fundamental type IIB string on K3, which is a NLSM on a single copy of K3, becomes singular at this point in the moduli space, due to the presence of such tensionless non-perturbative strings \cite{Witten:1995zh,Aspinwall:1996mn}. It is reasonable to expect that, at such a point in the moduli space, the same will happen for the superconformal field theory of type $K3^{[n]}$ describing the infrared dynamics of an effective string.

It is sometimes useful to repackage the conditions \eqref{cond1}--\eqref{cond3} in a different way.

Let us first consider the case where $(v')^2,(v'')^2\ge 0$. The conditions \eqref{cond2},\eqref{cond3} imply that $v'_+=\alpha v$ and $v''_+=(1-\alpha)v$, with $0< \alpha< 1$, and $v'_-=-v''_-\neq 0$ (if $v'_-=0$ then $v'$ and $v''$ would be proportional to $v$). Up to an exchange of $v'$ and $v''$, we can assume that $\alpha\le 1/2$, so that $|v'_+|\le | v''_+|$ and $(v')^2\le (v'')^2$ ($(v')^2=(v'_+)^2-(v'_-)^2= (v'_+)^2-(v''_-)^2\le (v''_+)^2-(v''_-)^2=(v'')^2$). Notice that
\be (v,v')=\alpha v^2=\alpha(2n-2)\in \ZZ_{> 0}\ ,
\ee because $(v,v')\in \ZZ$ for any two vectors $v,v'\in \Gamma^{5,21}$. This also implies that $(2n-2)v_-=(2n-2)v'-\alpha(2n-2)v\in \Gamma^{5,21}$.
The sublattice spanned by $v,v'$ has  signature $(1,1)$ (it contains a negative direction corresponding to $v_-$), so that \be \det \left(\begin{smallmatrix} v^2 & (v,v')\\ (v,v') & (v')^2\end{smallmatrix}\right)<0\qquad \Leftrightarrow\qquad 
(v,v')^2>(v')^2 v^2 \ .\ee Thus, the conditions \eqref{cond1}, \eqref{cond2}, \eqref{cond3} imply
\be 0\le (v')^2v^2<(v,v')^2\le \left(\frac{v^2}{2}\right)^2\ ,\label{condeq}
\ee where the last inequality follows from $\alpha=\frac{(v,v')}{v^2}\le 1/2$. Vice versa, suppose that there is $v'\in \Gamma^{5,21}$, such that \eqref{condeq} holds. Up to an exchange $v'\leftrightarrow -v'$, we can assume that $(v,v')\ge 0$. Eq. \eqref{condeq} implies that the primitive sublattice $M\subset \Gamma^{5,21}$ of rank $2$ containing both $v$ and $v'$ has signature $(1,1)$. Let $x\in M$ be a generator of $v^\perp\cap M$, so that $x^2<0$. If $Z\perp x$, then $v'$ and $v'':=v-v'$ satisfy all conditions \eqref{cond1}, \eqref{cond2}, \eqref{cond3}. Indeed, $(v')^2\ge 0$ follows from \eqref{condeq} and $v^2\ge 0$. Furthermore, $v'$ and $v''$ can be written as $v'=\alpha v+\beta x$ and $v''=(1-\alpha)v-\beta x$, with $\alpha= \frac{(v,v')}{v^2}$ and $\beta=\frac{(x,v')}{x^2}$, and by \eqref{condeq} and the condition $(v,v')\ge 0$ one has $0\le \alpha \le 1/2$. Thus, $(v'')^2\ge (v')^2\ge 0$, and $v'_+=\alpha v$, $v''_+=(1-\alpha)v$, with $\alpha,(1-\alpha)\ge 0$, so that also \eqref{cond3} holds.

Notice that $x\in v^\perp\cap \Gamma^{5,21} =L_v$, and that the condition $Z\perp x$ is equivalent to $\Pi\perp x$. 

Let us now consider the case where one of the charges $v'$ or $v''$ has squared norm $-2$. Up to an exchange, we can suppose that $(v')^2=-2$. In this case, the primitive  sublattice $M$ containing both $x$ and $v$ must contain $v'$ satisfying  
\be\label{condeq2} (v')^2=-2\qquad \text{and}\qquad 0\le (v',v)\le \frac{v^2}{2}\ .
\ee The latter equations are equivalent to \eqref{cond1}, \eqref{cond2} and \eqref{cond3} for $v'$ and $v''=v-v'$ with $(v')^2=-2$.

Assuming that the conditions in \cite{Seiberg:1999xz} are necessary and sufficient for singularity, we conclude that the SCFT corresponding to $\Pi$ is singular if and only if $\Pi$ is orthogonal to $x\in L_v$, such that the primitive sublattice $M\subset \Gamma^{5,21}$ contains a vector $v'$ satisfying either \eqref{condeq} (case $(v')^2\ge 0$) or \eqref{condeq2}.

The conditions \eqref{condeq} and \eqref{condeq2} are very similar to the condition describing the K\"ahler cone of a hyperkahler manifold $X$ of type $K3^{[n]}$ \cite{Mongardi2015,Mongardi2016}. Recall that the lattice $H^2(X,\ZZ)$ is isomorphic to the orthogonal complement in $\Gamma^{4,20}$ of a primitive vector $v$ of norm $2n-2$. The K\"ahler cone, i.e. the region of $H^2(X,\RR)$ spanned by the K\"ahler classes of $X$, is delimited by some walls, which are the hyperplanes orthogonal to certain`wall divisors' $D\in H^2(X,\ZZ)$.  Then, $D\in H^2(X,\ZZ)$ is a wall divisor if the primitive lattice of rank $2$ $T_D\subset \Gamma^{4,20}$ containing both $v$ and $D$ has a generator $r_D$ satisfying either 
\be\label{geomsing1} (r_D)^2=-2\qquad \text{and}\qquad 0\le (r_D,v)\le \frac{v^2}{2}\ .
\ee or
\be\label{geomsing2} 0<(r_D)^2v^2<(v,r_D)^2\le \left(\frac{v^2}{2}\right)^2\ ,
\ee which have the same form as the conditions \eqref{condeq} and \eqref{condeq2}. This is not completely surprising. In the limit where a K\"ahler class reaches the boundary of the K\"ahler cone, one of the holomorphic cycles shrinks to zero volume, and the geometry of the hyperk\"ahler manifold becomes singular (for example, an orbifold). In general, a non-linear sigma model whose target space is such a singular geometry can be perfectly well-defined -- orbifolds are the main examples of these phenomena. On the other hand, what usually happens is that this consistent  orbifold has some fixed non-zero B-field. If we turn off the B-field, then one  expects the non-linear sigma model to become singular as well. Therefore, for zero B-field there should be a precise correspondence between singular geometries and singular CFT.  The similarity of the equations \eqref{geomsing1}--\eqref{geomsing2} with \eqref{condeq}--\eqref{condeq2}, therefore, gives further support to our analysis of singularities.

\bigskip

On the other hand, as we will see in section \ref{s:symmorbmoduli}, the analysis of this section seems to be in contradiction with the properties of symmetric orbifolds. In section \ref{s:symmorbmoduli} we will try to determine the locus of symmetric orbifolds models within the moduli space $\M_n$, for all $n$. The outcome is puzzling: all symmetric orbifolds occur at points in the moduli space that, according to the analysis above, should be singular! Since we are not sure what the resolution of this puzzle is, the results of the present section cannot be completely trusted.

\subsection{The symmetric orbifold locus}\label{s:symmorbmoduli}

In this section, we will discuss the locus $\M^{sym}_n\subset \M_m$ in the Hilbert space of SCFT of  $K3^{[n]}$ type corresponding to symmetric orbifolds.

If $\C_S$ denotes the NLSM on the K3 surface $S$, corresponding to a certain choice of hyperk\"ahler structure and B-field, then the symmetric orbifold $\Sym^n\C_S$ (where orbifold is understood in a CFT sense) is a NLSM on a hyperk\"ahler manifold $X$ of K3$^{[n]}$ type. In particular, one expects the geometry of the target space to be the one of  the symmetric orbifold (in a geometric sense) $\Sym^n S$, i.e. with $\alpha_1=\alpha_2=\lambda=0$ in the notation of appendix \ref{a:Hilbertscheme} 

The space of states of the orbifold CFT $\Sym^n\C_S$ is the sum over $[\sigma]$-twisted sectors for each conjugacy class $[\sigma]$ of $S_n$; the untwisted sector corresponds to $[\sigma]$ being the identity class. In turn, each conjugacy class for the symmetric group $S_n$ can be labeled by the cycle shape  $\prod n_i$, with $\sum_i n_i=n$ of the class. There are $84$ exactly marginal operators in $\Sym^n\C_S$, preserving the $\N=(4,4)$ superconformal algebra, corresponding to deformations of the CFT in the $84$ dimensional moduli space of NLSM of K3$^{[n]}$ type. Out of these $84$ deformations, $80$ are contained in the untwisted sector and can be naturally identified with deformations of the `seed' CFT $\C_S$ -- they are literally obtained by symmetrization $\sum_{\sigma\in S_n}\sigma(\chi\otimes1\otimes \ldots\otimes 1)$ of the exactly marginal operators $\chi$ on $\C_S$. The remaining $4$ moduli are contained in a twisted sector. Recall that the $\N=(4,4)$ preserving exactly marginal operators are fields of conformal weight $(1,1)$ and neutral under $SU(2)_L\times SU(2)_R$ R-symmetry, that are obtained as supersymmetric descendants of NS-NS fields with conformal weights $(h,\bar h)=(1/2,1/2)$ and R-symmetry charges $(q,\bar q)=(1/2,1/2)$ (we normalize the charges so that they are half-integral).  It was shown in \cite{Lunin:2001pw}   that the lowest chiral operators (i.e. with $h=q$) in the sector corresponding to a cycle shape $\prod_i n_i$ has
\be h=q=\sum_i \frac{n_i-1}{2}\ .
\ee It is clear then that fields with $h=q=1/2$ and $\bar h=\bar q=1/2$ can only occur in the untwisted sector or in the $[\sigma]$-twisted sector where $\sigma$ is a single transposition (i.e. with cycle shape $1^{n-2}2^1$). In particular, there is a unique such field in the $[\sigma]$-twisted sector, and it coincides with the $[\sigma]$-twisted ground field.\footnote{For general $[\sigma]$ the $[\sigma]$-twisted ground field is usually not chiral.} This field has four supersymmetric descendants of weights $(1,1)$ and neutral under R-symmetry, which are the twisted sector exactly marginal operators.
Therefore, the $4$ marginal operators not associated with deformations of the seed theory come from the $[\sigma]$-twisted sector where $[\sigma]$ is the class of a single transposition, and can be identified with deformations of the remaining moduli $\alpha_1,\alpha_2,\lambda,\beta$ of the NLSM on $X$ (see appendix \ref{a:Hilbertscheme}). 

Notice that every symmetric orbifold has a $\ZZ_2$ symmetry (that we will call a \emph{quantum symmetry}) acting by multiplication by $(-1)^{\sgn(\sigma)}$ on the $[\sigma]$-twisted sector. Indeed, the OPE of a $[\sigma_1]$- and a $[\sigma_2]$-twisted fields can contains a field in the $[\sigma_3]$-twisted sector only if $\sigma_3$ is conjugate to $\sigma_1h\sigma_2h^{-1}$ for some $h\in S_n$; then it is clear that $\sgn(\sigma_3)=\sgn(\sigma_1h\sigma_2h^{-1})=\sgn(\sigma_1)\sgn(\sigma_2)$, so that the symmetry preserves the OPE. This $\ZZ_2$ symmetry acts trivially on the $80$ marginal operators in the untwisted sector and changes the sign of the $4$ twisted sector marginal operators, since they are contained in a $[\sigma]$-twisted sector with $\sigma$ an odd permutation. 
	

The presence of the quantum symmetry will be our main tool in trying to identify the locus $\M_{n}^{sym}\subset \M_n$. Suppose that a certain $K3^{[n]}$ model has a symmetry $t$ acting on exact marginal operators in the same way as the quantum symmetry of the symmetric orbifold. This means that, with respect to the action of $t\in O(21)\subset  SO(4)\times O(21)\subset O^+(4,21)$, the $(4,21)$ representation of $SO(4)\times O(21)$ splits as the sum of a $4$-dimensional subspace (a four-dimensional representation of $SO(4)$) with $t$-eigenvalue $-1$ and an $80$-dimensional subspace ($20$ four-dimensional representations of $SO(4)$) with eigenvalue $+1$. As a consequence, in the $4\oplus 21$ representation of $SO(4)\times O(21)$ (i.e. on $\RR^{4,21}\equiv L_v\otimes \RR$), $t$ has a single eigenvector with eigenvalue $-1$ in the $21$-dimensional component, and fixes the  $24$-dimensional  orthogonal complement. Let $s\in \RR^{21}\subset \RR^{4,21}$ be a generator of the $-1$ eigenspace (this implies that $s^2<0$). Then, $t$ acts on any $x\in \RR^{4,21}$ by a reflection with respect to the hyperplane $s^\perp$
\be x\mapsto t(x)=x-\frac{2(s,x)}{(s,s)} s\ .
\ee Notice that this formula is invariant under rescalings of $s$, $s\to \alpha s$, $\alpha\in \RR\setminus\{0\}$. It is easy to check that the transformation $t$ is in $O(4,21)$. In order to be an automorphism of the lattice $L_v$, it must be that $x-t(x)\in L_v$ for every $x\in L_v$. This implies, in particular, that $s$ is proportional to a lattice vector; without loss of generality, we can rescale it so that $s$ is a primitive vector in $L_v$. With this choice, the condition for $t$ to be a lattice automorphism is
\be\label{condint} \frac{2(s,x)}{(s,s)}\in \ZZ\qquad \forall x\in L_v\ .
\ee
Let $\Div(s)$ be the greatest common divisor of $(s,x)$, $x\in L_v$; for the lattice $L_v$, the possible values of $\Div(s)$ are the divisors of $2n-2$ (see appendix \ref{a:lattices}). Then, the condition \eqref{condint} can be written as $(s,s)|2\Div(s)$. On the other hand, we also have $\Div(s)|(s,s)$, from which we find the two possibilities $\Div(s)=-(s,s)$ or $\Div(s)=-(s,s)/2$ (note that $(s,s)<0$ and $\Div(s)>0$).

In order to identify the vector $s$, we need some additional information. The  locus $\F_t$ of $K3^{[n]}$-models with symmetry the reflection $t$ is an $80$-dimensional connected orbifold: it is given by the quotient $\F_t=H_s\backslash Gr_{s^\perp}$ of the Grassmannian $Gr_{s^\perp}\cong Gr^+(4,20)$ of four dimensional oriented positive definite subspaces in $s^\perp\cong \RR^{4,20}$, quotiented by the subgroup $H_s$ of the duality group that fixes $s$ up to a sign. On the other hand, the locus $\M^{sym}_n$ of symmetric orbifolds is the image of an isometry $\M_1\to \M_n$, where  $\M_1=O^+(\Gamma^{4,20})\backslash Gr^+(4,20)$. This isometry can be `lifted' at the level of Teichm\"uller spaces: there exists an isometric embedding $i:Gr^+(4,20)\to Gr^+(4,21)$ such that $i(Gr^+(4,20))$ maps to $\M_n^{sym}$ under the projection $Gr^+(4,21)\to Gr^+(4,21)/O^+_n$. Such an embedding is not unique, but the different choices are related to each other by dualities. The image $i(Gr^+(4,20))$ forms a single orbit under the action of a subgroup $SO^+(4,20,\RR)\subset SO^+(4,21,\RR)$ acting (not faithfully) on $Gr^+(4,21)$ by isometries. Let us choose a point in $i(Gr^+(4,20))$, corresponding to a four-plane $\Pi\subset \RR^{4,21}$ and to a non-singular symmetric orbifold in $\M_n^{sym}$. This symmetric orbifold has the symmetry $t$, which is unbroken by infinitesimal deformations of the model within $\M_n^{sym}$. This means that $\Pi$, and each of the four-planes in a neighborhood of $\Pi$ within $i(Gr^+(4,20))$,  must be orthogonal to some vector in $O^+_n\cdot s$,  the $O^+_n$-orbit of $s\in L_v$. If $\Pi\in i(Gr^+(4,20))$ is generic enough, there cannot be more than one vector $s'\in O^+_n\cdot s$ satisfying $s'\perp \Pi$ (the Grassmannian of four planes orthogonal to two or more vectors in $O^+_n\cdot s$ has dimension smaller than $80$), and by continuity all four-planes in a neighborhood of $\Pi$ must be orthogonal to the same $s'$. We can choose the embedding $i$ in such a way that $s'=s$. As a consequence, the infinitesimal action of the subgroup $SO^+(4,20)\subset SO^+(4,21)$ generating $i(Gr^+(4,20))$ preserves setwise the subspace $s^\perp\subset \RR^{4,21}$. This condition is sufficient to identify the subalgebra of generators of $SO^+(4,20)$ within the Lie algebra of $SO^+(4,21)$, and therefore the group $SO^+(4,20)$ itself as the    subgroup of $SO^+(4,21)$ that acts trivially on  $s$. Therefore,  $i(Gr^+(4,20))$ is exactly the Grassmannian $Gr_{s^\perp})$ of subspaces $\Pi\subset \RR^{4,21}$ orthogonal to $s$. This means that also the images of $i(Gr^+(4,20))=Gr_{s^\perp}$ under $Gr^+(4,21)\to O^+_n\backslash Gr^+(4,21)$ must be the same, and that the discrete groups we quotient by are isomorphic $H_s\cong O^+(\Gamma^{4,20})$.

It is easy to check that this isomorphism of discrete groups holds only if $\Div(s)=2n-2$. Indeed, $H_s$ is a group of automorphisms of the sublattice $s^\perp \cap L_v$. Up to dualities, one can always take both $v$ and $s$ to be contained in some $\Gamma^{2,2}\subset \Gamma^{5,21}$; therefore, $s^\perp \cap L_v$ contains a copy of $\Gamma^{3,19}$ as a primitive sublattice. By acting by $H_s\cong O^+(\Gamma^{4,20})$ on this $\Gamma^{3,19}$, one obtains a lattice isomorphic to $\Gamma^{4,20}$, which must be contained in $s^\perp \cap L_v$. But $s^\perp \cap L_v$ has signature $(4,20)$ and is even, and these conditions imply that $s^\perp \cap L_v\cong \Gamma^{4,20}$. One has the lattice inclusions $\Gamma^{4,20}\oplus \langle s\rangle \subseteq L_v\subseteq (\Gamma^{4,20}\oplus \langle s\rangle )^*$, but $(\Gamma^{4,20}\oplus \langle s\rangle )^*=\Gamma^{4,20}\oplus \langle \frac{s}{\Div(s)}\rangle$, and since $s$ is primitive we must have the equality $L_v=\Gamma^{4,20}\oplus \langle s\rangle$. This isomorphism implies that $\Div(s)=-(s,s)=2n-2$. 

We conclude that the symmetric orbifold locus $\M_n^{sym}$ corresponds to the Grassmannian of $4$-dimensional subspaces orthogonal to a vector $s$ with $\Div(s)=-(s,s)=2n-2$. As we explain below, for generic $n$ this analysis is not sufficient to identify the vector $s$ (and therefore $\M_n^{sym}$) up to dualities. However, it is enough to reach a puzzling conclusion: for any $s\in L_v$ with $\Div(s)=-(s,s)=2n-2$, a four-dimensional subspace $\Pi$  orthogonal to $s$ corresponds to a singular model, according to the arguments of \cite{Seiberg:1999xz} and of section \ref{s:singular}. Indeed, since $\langle s,v\rangle^\perp \cap \Gamma^{5,21}\cong s^\perp\cap L_v\cong  \Gamma^{4,20}$, the vectors $s$ and $v$ are contained in a sublattice $\Gamma^{1,1}\subset \Gamma^{5,21}$. Let $u,u^*$ be generators of this $\Gamma^{1,1}$, such that $(u,u)=(u^*,u^*)=0$, $(u,u^*)=1$. Then, $s$ and $v$ can be written as
\be v=au+bu^*\ ,\qquad \qquad  s=-bu+au^*\ ,
\ee for some $a,b\in \ZZ$ coprime and such that $ab=2n-2$ (in fact, the different duality orbits of vectors $s$ are labeled by this pair of integers). We can always choose $u,u^*$ in such a way that $a,b>0$. Then, whenever $Z=\langle v\rangle\oplus \Pi$ is orthogonal to $s$, we have that $v':=u$ and $v'':=(a-1)u+bu^*$ satisfy the conditions \eqref{cond1}--\eqref{cond3} (and,  in fact, even the stricter conditions in \cite{Seiberg:1999xz}). There are a few possible resolutions of this puzzle:
\begin{itemize}
	\item The symmetric orbifold model is not a well defined CFT. This seems quite a radical departure from what we know from conformal field theory and string theory.
	\item We have misidentified the locus of symmetric orbifolds $\M_n^{sym}$. This means that there must be a hole in one of the arguments we used.  One of the ingredients is the existence of the quantum symmetry of symmetric orbifolds. This is just based on the standard properties of non-abelian orbifolds, so it seems quite a safe statement, although  there might be some potential subtleties with the definition of the symmetry in sectors other than the NS-NS. Another ingredient we need is the fact that two models obtained by deforming the symmetric orbifold by exactly marginal operators related  by a symmetry are dual to each other. This should be true, provided that conformal perturbation theory correctly reproduces the correlators of the deformed models in some neighborhood of each point in the moduli space. This is clearly a delicate assumption, but without this assumption the whole construction of the moduli space $\M_n$ should be reconsidered. One of the most delicate ingredients is the assumption that the group of dualities acts by automorphisms of the lattice $L_v$. In section \ref{s:dualities} we discussed why it is unlikely that the duality group is larger. It would be nice to have a rigorous argument showing that this assumption follows from the requirement that the moduli space $\M_n$ be Hausdorff. Notice that, for the arguments of this section to work, we don't need to assume that the duality group is a \emph{proper} subgroup of $O^+(L_v)$.
	\item We have misidentified the locus of singular models in $Gr(4,21)$. This seems to us as the most likely possibility. The arguments described in section \ref{s:singular} are certainly reasonable, but do not seem to be conclusive. In particular, these arguments are not formulated within the framework of the conformal field theory itself, but follow from analysing the dynamics of the system of branes, which is described by the $K3^{[n]}$ model only in the low energy limit. It could in principle happen that, in the limit where we reach a singular point in the moduli space, the $K3^{[n]}$-model is still a well-defined CFT, even if it is not a good description of the low energy dynamics of the brane system anymore. Another (related) possibility is that, in some occasions, the additional `non-compact' degrees of freedom that appear at the singular point are decoupled from the SCFT itself, so that the latter remains consistent. In order to confirm or to rule out similar scenarios, one would need a more detailed understanding of the dynamics of the brane system at singular point.  Unfortunately, a precise description of the set of singular models is needed for the classification of symmetries of SCFTs of $K3^{[n]}$ type that we will attempt in the following sections. This means that our results will be conditional to a long list of assumptions and restrictions. We hope to be able to drop some of these assumptions in future works.
\end{itemize}

Finally, let us show that our analysis is not sufficient to unambiguously determine $s$ up to dualities. Indeed, while there is a unique orbit of vectors $s$ with $ \Div(s)=-(s,s)=2n-2$ with respect to the full automorphism group $O(L_v)$ of the lattice $L_v$, the duality group $O_n^+$ is the (generally, proper)  subgroup of $O(L_v)$ acting by $\pm 1$ on the discriminant group $A_{L_v}:=L_v^*/L_v$. By Eichler principle (see appendix \ref{a:lattices}), since $L_v$ contains a $\Gamma^{2,2}$ sublattice, then any two primitive vectors $s,s'$  are related by the group $O^0(L_v)$ acting trivially on $L_v^*/L_v$ if and only if they have the same norm, the same divisor, and $\frac{s}{\Div(s)}\equiv \frac{s'}{\Div(s')}\mod L_v$. In particular, we have $L_v^*/L_v\cong \ZZ_{2n-2}$, and we are interested in vectors $s$ with $\Div(s)=-(s,s)=2n-2$, so that $\frac{s}{\Div(s)}$ has norm $\frac{1}{2n-2} \mod 2\ZZ$ and order $2n-2$ in $L_v^*/L_v$, so it is a generator.  On the other hand, in general, there are several different generators with the same norm and order in $L_v^*/L_v$, and one can show that all of them can be lifted to elements of the form $\frac{\tilde s}{2n-2}$, where $\tilde s$ has $\Div(\tilde s)=-(\tilde s,\tilde s)=2n-2$.\footnote{This is because, for a lattice containing a copy of $\Gamma^{1,1}$, the natural map $O(L_v)\to O(L_v^*/L_v)$ is surjective. Here, $O(L_v^*/L_v)$ is the group of automorphisms of $L_v^*/L_v\cong \ZZ_{2n-2}$ preserving the quadratic form (defined modulo $2\ZZ$). Every two generators of $\ZZ_{2n-2}$ with norm $\frac{1}{2n-2}$ are related by some element of $O(L_v^*/L_v)$, and this element lifts to some automorphism in $O(L_v)$. The latter automorphism maps $s$ to a vector $\tilde s$ with the same norm and divisor.}  All such $\tilde s$ are in the same $O(L_v)$-orbit, but are not related by $O^0(L_v)$ transformations.

\section{Symmetries and twining genera}\label{s:mainres}
	
In this section we will discuss the classification (up to isomorphisms) of the possible groups of symmetries of a (non-singular) $\N=(4,4)$ superconformal field theories with central charge $c=6n$, $n>1$, in the moduli space non-singular NLSM on hyperk\"ahler manifolds of $K3^{[n]}$ type. The results and the methods are very similar to the ones relevant for non-linear sigma models on K3 \cite{K3symm}.

\subsection{A (restricted) classification of symmetries}\label{s:callifsymm}

Let us consider $v\in \Gamma^{5,21}$, with $v^2=2n-2$, and suppose that the five plane $Z$ is at an attractor point for $v$, i.e. $v\in Z$. As usual, let $L_n\equiv L_v:=v^\perp \cap \Gamma^{5,21}$ and $\Pi=Z\cap v^\perp$, so that $\Pi$ is a positive definite four plane in $L_n\otimes\RR$ and determines a point $\C$ in the moduli space $\M_n$ of NLSM on $K3^{[n]}$. We assume that this point is non-singular, in the sense of section \ref{s:singular}, so that we expect the corresponding CFT to be well defined. Our goal is to determine the possible group of symmetries of the CFT $\C$.

 Let us try to be more precise. By symmetries of a conformal field theory we mean a linear transformation of the fields that preserves the OPE, fixes the vacuum vector and the stress-energy tensor. We will focus on symmetries that satisfy some additional properties:
\begin{itemize}
	\item[(1)] They commute with the full $\N=(4,4)$ superconformal algebra (and not only the Virasoro algebra).
	\item[(2)] They commute with left- and right-moving spectral flow isomorphism relating the Neveu-Schwarz with the Ramond sector of the theory;
\end{itemize}
While there are certainly interesting symmetries that \emph{do not} satisfy these conditions, there are some good reasons to require them. The first reason is practical: without these conditions, the classification problem is much more complicated, and in fact it is not solved even in the case of non-linear sigma models on K3. Secondly,  these properties assure that the action on the low-energy six dimensional type IIB string compactification preserves space-time supersymmetry. Finally, these conditions are the exact analogue of the ones required in \cite{K3symm} for NLSM on K3 and in \cite{Volpato:2014zla} for NLSM on $T^4$.

Let $G$ be the group of symmetries of the CFT $\C$ satisfying the conditions (1) and (2) above. Then, $G$ has an action on the $84$ dimensional space of exactly marginal operators of the model, which can be identified with the tangent space to the Grassmannian $Gr^+(4,21)$ at the point corresponding to $\C$.\footnote{To be precise, $\C$ determines a point in the quotient $Gr^+(4,21)/O_n^+$, and one has to choose a lift of this point to $Gr^+(4,21)$. What follows is independent of this choice.} These fields are suitable supersymmetric descendants of the $84$-dimensional space of superconformal primaries of weights $(1/2,1/2)$, which decomposes into $21$ representations of the superconformal algebra.  By (1), the symmetry $G$ will act by a $O(21)$ transformation on these $21$ representations, i.e. the group of $O(4,21)$ transformations that leave the positive definite $4$-dimensional subspace $\Pi$ point-wise fixed. If two exactly marginal operators are related by a symmetry $g$, then the models obtained by the corresponding deformations will be equivalent, i.e. related by a duality. This means that the action on the $84$-dimensional space must be induced by a transformation in the duality group $O^+_n$. We conclude that there must be a homomorphism $G\to O^+_n\cap O(21)$. This homomorphism must be surjective: every duality in $O^+_n$ fixing the $4$-dimensional space $\Pi$ will map  the theory $\C$ into itself in a non-trivial way (as can be seen from the action on the exactly marginal operators), so it must lift to a symmetry of the theory satisfying (1) and (2). One can also argue that the homomorphism is injective. Indeed, any element $g$ in the kernel would act trivially on the whole space of exactly marginal operators of the theory. As a consequence, the kernel of this homomorphism is preserved by any deformation, and since the moduli space is connected, it must be a symmetry of every SCFT in the moduli space. Therefore, in order to exclude a non-trivial kernel for every model in the moduli space, it is sufficient to choose one $K3^{[n]}$-model $\C$ and show that $\C$ has no symmetries satisfying (1) and (2) and acting trivially on all exactly marginal operators. A suitable model for this analysis is given by $\C=\Sym^n \C'$, where $\C'$ is the NLSM on K3 with `the largest symmetry group', described in \cite{Gaberdiel:2013psa}. The details are relegated to appendix \ref{a:nokernel}. We conclude that the group of symmetries $G$ satisfying (1) and (2) is isomorphic to $O(21)\cap O_n^+$, i.e. it is the subgroup of $O_n^+\subset O(4,21)$ fixing pointwise the $4$-dimensional subspace $\Pi\subset \RR^{4,21}$.

\bigskip

As discussed in section \ref{s:symmorbmoduli}, the group $O^+_n\subseteq O^+(L_n)$ contains a normal subgroup $O^{0+}(L_n)$ of $O^+(L_n)$ fixing the discriminant group $L_n^*/L_n$. This group can be identified with  the subgroup  $Stab^+(v)$ of $O(\Gamma^{5,21})$ fixing the vector $v$.   For the purpose of the classification, it is convenient to restrict ourselves to symmetries that satisfy (1), (2) and
\begin{itemize}
	\item[(3)] They are contained in the subgroup $O^{0+}(L_n)$ of $O_n^+$. 
\end{itemize}  Physically, the symmetries in the subgroup $Stab^+(v)$ of $O^+_n$ are the ones that lift to dualities of the full type IIB string theory and that fix the charge $v$ of the string-like object we are considering. This is quite a natural restriction from a space-time viewpoint. One can expect $O^+_n$ to be strictly larger than $Stab^+(v)$. In particular, for $n>2$, the set $O^+_n\setminus Stab^+(v)$ should contain elements in $O(\Gamma^{5,21})$ that flip the sign of $v$: if our analysis of section \ref{s:symmorbmoduli} is correct, then for $n>2$ every symmetric product $\Sym^n \C'$ of a NLMS $\C'$ on a K3 surface has a self-duality in this set, which in turn satisfies (1) and (2). 
 From the perspective of the non-linear sigma model,  it would be interesting to consider the larger group of symmetries where (3) is not necessarily satisfied. We will not attempt a classification of these groups in this work.

\bigskip
 We will also put some restriction on the set of models we will consider. As argued in section \ref{s:singular}, the set $S_n^{roots}\subset O_n^+\backslash Gr^+(4,21)$ corresponding to $4$-planes $\Pi$ that are orthogonal to some root $r\in \Gamma^{5,21}$, $r^2=-2$, should be  contained in the set of singular models, and therefore should be excluded from the moduli space $\M_n$. On the other hand, the analysis of symmetric orbifolds (in particular, for $n=2$), seems to be in contradiction with this conclusion. For this reason, it is preferable to take an agnostic view  on this argument and explicitly exclude such models from our classification. This restriction greatly simplifies the classification problem we are considering -- in fact, we were not able to generalize this classification so as to include these models. 

\bigskip

Finally, we will assume that the set of singular models is not larger than the set $S$ described in section \ref{s:singular}. This assumption will be needed for the second part of theorem \ref{th:main}, in order to show the existence of a non-singular model with a given symmetry group $G$.\footnote{The assumption  was also used to argue that the moduli space $\M_n$ is connected, and show that there is no symmetry acting trivially on all exactly marginal deformations.  However, the assumption about connectedness of $\M_n$ is much weaker than the one we need for the second part of theorem \ref{th:main}.} 

Our discussion leads us to conclude:

\begin{claim}
	Let $\C_\Pi$ be a non-singular SCFT of $K3^{[n]}$ type, $n\ge 1$, corresponding to a $4$-subspace $\Pi\subset L_n\otimes \RR\cong \RR^{4,21}$,  such that $\Pi$ is not orthogonal to any $r\in L_n$, $r^2=-2$. The group of symmetries $G_\Pi$ satisfying the conditions (1), (2), and (3) above is isomorphic to the subgroup of $O^+(L_n)$ that acts trivially on $L_n^*/L_n$ and fixes $\Pi\subset L_n\otimes \RR$ pointwise.
\end{claim}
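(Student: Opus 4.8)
The plan is to obtain the statement as a short corollary of the analysis already carried out in this subsection, combined with the description of $O^+_n$ from section \ref{s:dualities}. That analysis, applied to a non-singular $\C_\Pi$ with $\Pi$ not orthogonal to any $r\in L_v$ with $r^2=-2$, already yields a canonical isomorphism realized by the action on exactly marginal operators,
\[
G\;\cong\; H_\Pi:=\{\,h\in O^+_n\ :\ h|_\Pi=\mathrm{id}_\Pi\,\}\ ,
\]
where $G$ is the group of symmetries obeying (1)--(2), $O^+_n$ acts on $L_v\otimes\RR\cong\RR^{4,21}$, surjectivity being the statement that every duality fixing $\Pi$ pointwise lifts to a symmetry obeying (1)--(2) and injectivity being the no-kernel computation of appendix \ref{a:nokernel}. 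So I would only need to identify the subgroup of $H_\Pi$ corresponding to symmetries that in addition obey (3), and then rewrite it in terms of $Z$.

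Next I would translate condition (3). A symmetry $g$ of $\C_\Pi$ obeys (3) exactly when it is the restriction to the $K3^{[n]}$ sector of a symmetry $\hat g$ of type IIB on $K3$ with $\hat g(v)=v$, and such $\hat g$ is an element of $Stab(v)\subset O(\Gamma^{5,21})$ whose action on the moduli of $\C_\Pi$ is the restriction of $\hat g$ to $v^\perp$. Hence, under the isomorphism above, the image of $g$ lies in the image of $Stab(v)$ inside $O^+_n$; and since $g$ also obeys (1), so that $\hat g$ fixes $\Pi$ pointwise, while $\hat g$ fixes $v$, the element $\hat g$ fixes the five-plane $Z=\langle v\rangle\oplus\Pi$ pointwise, hence preserves the orientation of positive-definite subspaces, so $\hat g\in Stab^+(v)$. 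For the converse I would use that $Stab^+(v)\subseteq O^+_n$ (section \ref{s:dualities}): any $h\in Stab^+(v)$ with $h|_\Pi=\mathrm{id}_\Pi$ already lies in $H_\Pi$, hence lifts to a symmetry of $\C_\Pi$ obeying (1)--(2) by the surjectivity above, and this lift automatically obeys (3) as well, since $h$ is the restriction to $v^\perp$ of the element of $Stab^+(v)\subset O(\Gamma^{5,21})$ we began with, which is a type IIB symmetry fixing $v$. This identifies $G_\Pi$, under the isomorphism, with the subgroup $\{h\in Stab^+(v)\ :\ h|_\Pi=\mathrm{id}_\Pi\}$.

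The last step is pure lattice bookkeeping: for $h\in Stab^+(v)$ one has $h(v)=v$ by definition, so $h$ fixes $\Pi$ pointwise if and only if it fixes $Z=\langle v\rangle\oplus\Pi$ pointwise; moreover, fixing the positive-definite five-plane $Z$ pointwise already forces $h\in O^+$, so the superscript in $Stab^+(v)$ imposes no further condition here. This identifies $G_\Pi$ with the subgroup of $Stab^+(v)$ (equivalently, of $O(\Gamma^{5,21})$) fixing $Z$ pointwise, which is the assertion.

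The step I expect to be the most delicate is the translation of (3) into ``image in $Stab^+(v)$'': one must be confident that the only way a symmetry of the SCFT lifts to type IIB while fixing $v$ is via an honest lattice automorphism of $\Gamma^{5,21}$ --- the standard hypothesis that type IIB symmetries act by $O(\Gamma^{5,21})$, used throughout section \ref{s:dualities} --- and, in the converse direction, that the lift of a duality in $Stab^+(v)$ genuinely inherits property (3) and not merely (1)--(2). Both points rest on the discussion of section \ref{s:dualities}, so the claim is essentially a repackaging of that earlier, harder input, together with the surjectivity statement (which uses the global structure of $\M_n$) and the no-kernel lemma of appendix \ref{a:nokernel}. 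One caveat worth flagging is the boundary case $n=1$, where $v$ is isotropic and $L_v$ degenerates: there I would either carry the isotropic direction along explicitly or simply reduce to the known classification for NLSM on a single K3 \cite{K3symm}.
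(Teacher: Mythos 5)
Your proposal is correct and follows essentially the same route as the paper, which presents this Claim as the direct conclusion of the discussion immediately preceding it: the isomorphism $G\cong O^+_n\cap O(21)$ for symmetries obeying (1)--(2) (surjectivity from dualities fixing $\Pi$, injectivity from appendix \ref{a:nokernel}), the restriction to $Stab^+(v)$ imposed by condition (3), and the observation that for $h\in Stab^+(v)$ fixing $\Pi$ pointwise is equivalent to fixing $Z=\langle v\rangle\oplus\Pi$ pointwise. Your explicit unpacking of the two directions of the condition-(3) translation, and the caveat about the degenerate case $n=1$, are consistent with (and slightly more careful than) the paper's own presentation.
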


While this statement in principle provides a complete characterization of the symmetry groups $G_\Pi$, in practice it is not a trivial task to determine which groups can actually arise. The following propositions provide a much more useful characterization of the groups $G_\Pi$.

\begin{proposition}\label{th:lemma}
	The group $G_\Pi$ is isomorphic to the group $O^0(L_\Pi)$ of automorphisms of the lattice $L_\Pi:=L_n\cap \Pi^\perp$ that act trivially on the discriminant group $L_\Pi^*/L_\Pi$.
\end{proposition}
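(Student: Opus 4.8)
The plan is to identify the group $G_\Pi$, characterized by the preceding Claim as the subgroup of $Stab^+(v)\subset O(\Gamma^{5,21})$ fixing $Z=\Pi\oplus\langle v\rangle$ pointwise, with the lattice automorphism group $O^0(L_\Pi)$ of $L_\Pi=L_v\cap\Pi^\perp$. The natural candidate isomorphism goes as follows. An element $g\in G_\Pi$ fixes $v$ and fixes $\Pi$ pointwise; hence it preserves $v^\perp\cap\Gamma^{5,21}=L_v$ and acts trivially on the positive-definite subspace $\Pi\subset L_v\otimes\RR$. Consequently $g$ preserves the orthogonal complement of $\Pi$ inside $L_v$, namely $L_\Pi$, and restricts to an automorphism of the lattice $L_\Pi$. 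This gives a homomorphism $G_\Pi\to \Aut(L_\Pi)$. The first key step is to check that this homomorphism is injective: since $\Gamma^{5,21}\otimes\RR=\langle v\rangle\oplus\Pi\oplus(L_\Pi\otimes\RR)$ as an orthogonal direct sum, an element fixing $v$ and $\Pi$ and acting trivially on $L_\Pi$ must be the identity on $\Gamma^{5,21}\otimes\RR$.

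The second key step is to pin down the image. I would argue it lands in $O^0(L_\Pi)$ — the automorphisms acting trivially on the discriminant group $A_{L_\Pi}=L_\Pi^*/L_\Pi$ — and that every element of $O^0(L_\Pi)$ arises this way. For surjectivity onto $O^0(L_\Pi)$: given $\psi\in O^0(L_\Pi)$, one extends it to $\Gamma^{5,21}$ by letting it act as the identity on the complementary piece. The standard gluing fact for primitive sublattices (as used repeatedly in section \ref{s:symmorbmoduli}) is that $\Gamma^{5,21}$ sits between $(\langle v\rangle\oplus\Pi_{\ZZ}\oplus L_\Pi)$ — where $\Pi_\ZZ$ denotes an appropriate integral lattice spanning $\Pi$, though here one must be careful since $\Pi$ need not be rational — and its dual, with the gluing encoded by an isometry of discriminant forms. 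Because $\psi$ acts trivially on $A_{L_\Pi}$, the pair $(\mathrm{id},\psi)$ preserves the glue group and hence extends to an automorphism of $\Gamma^{5,21}$ fixing $v$ and fixing $\Pi$ pointwise; after possibly composing with $-1$ on some factor one arranges it to lie in $O^+$, giving the desired preimage in $G_\Pi$. Conversely, any $g\in G_\Pi$ fixes $v$ and $\Pi$, hence fixes the sublattice $\langle v\rangle\oplus(\Pi\cap\Gamma^{5,21})$ and therefore acts trivially on the part of the discriminant form of $L_\Pi$ that is glued to it; since $\Gamma^{5,21}$ is unimodular the gluing is a perfect pairing between the discriminant of $L_\Pi$ and that of the complement, forcing $g|_{L_\Pi}$ to act trivially on all of $A_{L_\Pi}$.

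The main obstacle I anticipate is the irrationality of $\Pi$: $\Pi$ is a generic positive-definite four-plane, not a sublattice, so one cannot literally write $\Gamma^{5,21}$ as an overlattice of $\langle v\rangle\oplus\Pi_\ZZ\oplus L_\Pi$. The clean way around this is to work directly with the primitive sublattice generated by $v$ together with everything a symmetry must fix: one sets $N:=\big(\langle v\rangle\oplus L_\Pi\big)^{\perp\perp}$, the primitive closure in $\Gamma^{5,21}$ of $\langle v\rangle\oplus L_\Pi$, which is a sublattice of signature $(1+\mathrm{sign}_+(L_\Pi),\ \dots)$ with $N^\perp=L_\Pi^\perp\cap\langle v\rangle^\perp$ having no connection to $L_\Pi$ left over — and then apply Nikulin's theory (the surjectivity of $O(\Gamma^{5,21})\to O(A_N)$ for $\Gamma^{5,21}$ unimodular of sufficiently large signature, plus the fact that an isometry fixing $L_\Pi$ pointwise on the relevant discriminant extends) to the orthogonal pair $(L_\Pi,\,L_\Pi^\perp\cap\Gamma^{5,21})$. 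Since $\langle v\rangle\subset L_\Pi^\perp$, an automorphism of $L_\Pi^\perp\cap\Gamma^{5,21}$ fixing $v$ and the appropriate glue can be chosen, and the condition ``$g$ acts trivially on $A_{L_\Pi}$'' is exactly what makes the pair $(\psi,\mathrm{glue\text{-}compatible\ extension})$ into an honest automorphism of $\Gamma^{5,21}$; one then checks it fixes $Z$ pointwise and lies in $O^+$. I would carry this out by first treating the direction $G_\Pi\hookrightarrow O^0(L_\Pi)$ (easy, as above), then the surjectivity via the Nikulin extension argument, being explicit that $v$ is available as a vector in $L_\Pi^\perp$ to be fixed and that the $O^+$ condition can always be met since $-\mathrm{id}$ on a negative-definite sublattice of $L_\Pi^\perp$ reverses no orientation of a positive four-plane.
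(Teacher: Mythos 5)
Your overall architecture is the right one and matches the paper's: restrict $g\in G_\Pi$ to $L_\Pi$, show the image lies in $O^0(L_\Pi)$ via the gluing of $L_\Pi$ to its orthogonal complement inside the unimodular lattice, and get surjectivity by extending any $\psi\in O^0(L_\Pi)$ by the identity on that complement. But there is a genuine gap at the point where you handle the irrationality of $\Pi$, and it infects both your injectivity step and your forward inclusion $G_\Pi\hookrightarrow O^0(L_\Pi)$. Your injectivity argument uses the decomposition $\Gamma^{5,21}\otimes\RR=\langle v\rangle\oplus\Pi\oplus(L_\Pi\otimes\RR)$, and your converse argument pairs $A_{L_\Pi}$ against the discriminant of $\langle v\rangle\oplus(\Pi\cap\Gamma^{5,21})$. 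Both are only valid when $\rk L_\Pi=21$ (equivalently when $Z$ is spanned by lattice vectors). In general $L_\Pi=\Gamma^{5,21}\cap Z^\perp$ has rank strictly less than $21$ (for generic $\Pi$ it is even zero), so $Z^\perp$ is strictly larger than $L_\Pi\otimes\RR$, and the lattice glued to $L_\Pi$ is $L^\Pi:=\Gamma^{5,21}\cap L_\Pi^{\perp}$, which typically has larger rank than $5$ and contains $\langle v\rangle\oplus(\Pi\cap\Gamma^{5,21})$ as a proper sublattice. Fixing $Z$ pointwise does not, as stated, tell you anything about the action of $g$ on the vectors of $L^\Pi$ that do not lie in $Z$, so neither injectivity nor triviality on $A_{L_\Pi}$ follows from what you have written.

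The missing ingredient is a short lemma, which is exactly how the paper closes this hole: for any $g\in G_\Pi$ and any $w\in L^\Pi$, the vector $w-g(w)$ lies in $L^\Pi$ and is orthogonal to $Z$ (because $g$ is an isometry fixing $Z$ pointwise, $(w-g(w),z)=(w,z)-(g(w),g(z))=0$ for all $z\in Z$), hence $w-g(w)\in L^\Pi\cap L_\Pi=0$. Thus every $g\in G_\Pi$ fixes all of $L^\Pi$ pointwise. With this in hand, injectivity is immediate ($L_\Pi\oplus L^\Pi$ has finite index in $\Gamma^{5,21}$), the triviality of $g$ on $A_{L^\Pi}$ and hence on $A_{L_\Pi}$ follows from the glue map as you intended, and for the converse the extension of $\psi\in O^0(L_\Pi)$ by the identity on $L^\Pi$ automatically fixes $Z\subset L^\Pi\otimes\RR$ pointwise, so it fixes $v$ and $\Pi$ and lies in $O^+$ without any need to compose with $-\mathrm{id}$ on a negative-definite piece. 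Your detour through $N=(\langle v\rangle\oplus L_\Pi)^{\perp\perp}$ and Nikulin's extension theorems is unnecessary once this lemma is in place: the plain gluing description of $\Gamma^{5,21}$ from the orthogonal pair $(L_\Pi,L^\Pi)$ suffices.
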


The proof is in appendix \ref{a:pflemma}.
 
 This proposition shows, in particular, that the group $G_\Pi$ only depends on the isomorphism class of the lattice $L_\Pi$. In particular, different models $\C_\Pi,\C_{\Pi'}$ corresponding to isomorphic lattices $L_{\Pi}\cong L_{\Pi'}$ have isomorphic groups $G_\Pi\cong G_{\Pi'}$, even if they are not related by a duality.
 
 The most useful characterization of the groups $G_\Pi$ is given by the following theorem, which is a direct analogue of the main result of \cite{K3symm}. Let $\Lambda$ be the negative definite Leech lattice, i.e. the unique negative definite even unimodular lattice of rank $24$ without roots, i.e. vectors of square norm $-2$. Its group of automorphisms $O(\Lambda)$ is the finite group $Co_0\cong \ZZ_2.Co_1$, which in turn is a central $\ZZ_2$ extension of a finite simple group $Co_1$.

\begin{proposition}\label{th:main} Let $\C_\Pi$ be a non-singular SCFT of $K3^{[n]}$-type, $n> 1$, corresponding to a $4$-subspace $\Pi\subset L_n\otimes \RR\cong \RR^{4,21}$, and such that $\Pi$ is not orthogonal to any $r\in L_n$, $r^2=-2$. Let $L_\Pi=L_n\cap \Pi^\perp$ and $G_\Pi=O^0(L_\Pi)$ be the group of symmetries of $\C_\Pi$ satisfying the conditions (1), (2), and (3) above. Then, $L_\Pi$ is isomorphic to a primitive sublattice $\Lambda_\Pi$ of the Leech lattice $\Lambda$ of rank $\rk \Lambda_\Pi \le 21$, and $G_\Pi$ is isomorphic to the subgroup of $\Aut(\Lambda)\cong Co_0$ fixing pointwise the sublattice $\Lambda\cap \Lambda_\Pi^\perp$ of rank at least $3$. 
	
	Vice versa, if $\tilde G\subset Co_0$ is the pointwise stabilizer of a sublattice $\Lambda^{\tilde G}$ of the Leech lattice $\Lambda$ of rank at least $3$, then for every $n>1$ there exists  a non-singular NLSM $\C_\Pi$ on $K3^{[n]}$ whose group of symmetries $G_\Pi$ is isomorphic to $\tilde G$ and such that $L_\Pi\cong (\Lambda^{\tilde G})^\perp \cap \Lambda$.
\end{proposition}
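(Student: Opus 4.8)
The plan is to follow the strategy of \cite{K3symm} closely, using Proposition \ref{th:lemma} to reduce everything to a statement about lattices and their automorphism groups. By Proposition \ref{th:lemma} we already know $G_\Pi \cong O^0(L_\Pi)$, where $L_\Pi = L_v \cap \Pi^\perp$ is a negative definite lattice (since $\Pi$ is positive definite of dimension $4$ inside $L_v\otimes\RR \cong \RR^{4,21}$, its orthogonal complement inside $L_v\otimes\RR$ has signature $(0,21)$, and $L_\Pi$ is a sublattice of it of rank at most $21$). The first key step is to embed $L_\Pi$ primitively into the Leech lattice $\Lambda$. Here I would invoke the Nikulin embedding theory (reviewed in appendix \ref{a:lattices}): a negative definite even lattice of rank $r$ embeds primitively into $\Lambda$ provided $r \le 24 - \ell(A_{L_\Pi})$ or similar, but the real input is the \emph{no roots} hypothesis. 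Since $\C_\Pi$ is assumed not orthogonal to any $r \in L_v$ with $r^2 = -2$, the lattice $L_\Pi$ contains no vectors of square norm $-2$; combined with $\rk L_\Pi \le 21$ and the even condition, one shows (exactly as in \cite{K3symm}) that $L_\Pi$ admits a primitive embedding into $\Lambda$ as a root-free primitive sublattice $\Lambda_\Pi$, with $\Lambda \cap \Lambda_\Pi^\perp$ of rank at least $24 - 21 = 3$ (and also root-free, being a sublattice of the root-free $\Lambda$).

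The second key step is the identification of the automorphism groups. By the standard gluing argument, an automorphism of $\Lambda$ fixing $\Lambda\cap\Lambda_\Pi^\perp$ pointwise must preserve $\Lambda_\Pi$ (its orthogonal complement in a primitive pair), and acts on it by an isometry that acts trivially on the discriminant group $A_{\Lambda_\Pi} \cong A_{\Lambda\cap\Lambda_\Pi^\perp}$ (up to sign conventions, which one fixes as in \cite{K3symm}); conversely, any element of $O^0(\Lambda_\Pi)$ extends by the identity on $\Lambda\cap\Lambda_\Pi^\perp$ to an automorphism of $\Lambda$, using that $\Lambda$ is unimodular and the glue map is compatible. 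This gives $O^0(L_\Pi) = O^0(\Lambda_\Pi) \cong \{g \in \Aut(\Lambda) \mid g|_{\Lambda\cap\Lambda_\Pi^\perp} = \mathrm{id}\}$, which is the desired description of $G_\Pi$. One should be a little careful about whether it is $O^0$ (trivial action on the discriminant) rather than all of $O(\Lambda_\Pi)$ that appears; the point is that $\Aut(\Lambda)$ acts trivially on $A_\Lambda = 0$, so the induced action on $A_{\Lambda_\Pi}$ coincides with that on $A_{\Lambda\cap\Lambda_\Pi^\perp}$, and an element fixing $\Lambda\cap\Lambda_\Pi^\perp$ pointwise acts trivially on the latter — consistent with $G_\Pi = O^0(L_\Pi)$.

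For the converse, given $\tilde G \subset Co_0$ the pointwise stabilizer of some $\Lambda^{\tilde G} \subseteq \Lambda$ of rank at least $3$, set $N := (\Lambda^{\tilde G})^\perp \cap \Lambda$, a primitive root-free sublattice of $\Lambda$ of rank $r \le 21$. I need to realize $N$ as $L_\Pi$ for some non-singular $\C_\Pi$ of $K3^{[n]}$ type, for \emph{every} $n > 1$. The construction: embed $N$ primitively into $L_v$ (for the relevant $n$) — this requires a primitive embedding lemma for $N \hookrightarrow L_v \cong \langle w \rangle \oplus \Gamma^{4,20}$ with $w^2 = 2-2n$, which again follows from Nikulin's criteria since $L_v$ contains $\Gamma^{4,20}$ hence many hyperbolic planes, and $r \le 21$ leaves enough room. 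Then take $\Pi := (N\otimes\RR)^\perp$ inside $L_v\otimes\RR$, a positive definite $4$-plane (one must check signature: $N\otimes\RR$ has signature $(0,r)$ with $r\le 20$... here is a subtlety — we need $\Pi^\perp \cap L_v = N$ exactly of signature $(0,21-4)$? no: $\Pi$ is $4$-dimensional positive, so $L_\Pi$ has signature $(0,21)$ and rank up to $21$; if $\rk N < 21$ we must choose $\Pi$ generic in a larger space so that $L_v \cap \Pi^\perp = N$, which is possible precisely because $N$ is primitive and has the right signature complement). One then verifies $\Pi$ is not orthogonal to any root of $L_v$: this uses that $N$ itself is root-free and that a generic choice of $\Pi$ within the allowed family avoids the countably many root-orthogonal loci — here the assumption that the singular locus is no larger than $S$ from section \ref{s:singular} is used to guarantee such $\C_\Pi$ is genuinely non-singular and lies in $\M_n$. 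Finally $G_\Pi = O^0(L_\Pi) = O^0(N) \cong \tilde G$ by the forward direction. The main obstacle, I expect, is the lattice-embedding bookkeeping: ensuring simultaneously that (a) $N$ embeds primitively into $L_v$ for all $n>1$, (b) there is a positive-definite $4$-plane $\Pi$ in the complement with $L_v \cap \Pi^\perp = N$ exactly (not larger), and (c) $\Pi$ can be chosen to avoid all $(-2)$-root-orthogonal hyperplanes so that $\C_\Pi$ is non-singular — each step is standard Nikulin-style lattice theory but the interplay, especially getting uniformity in $n$ and controlling the discriminant forms so that $O^0$ is what appears, is where the care is needed.
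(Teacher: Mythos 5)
Your overall architecture (reduce to $O^0(L_\Pi)$ via Proposition \ref{th:lemma}, embed the relevant negative definite root-free lattice primitively into $\Lambda$, match stabilizers by gluing) is the right one and the group-identification step via the glue map is essentially the paper's. However, there are two places where the proposal does not actually close. First, and less seriously: the primitive embedding of $L_\Pi$ into the Leech lattice is \emph{not} obtained from a Nikulin-type numerical criterion of the form $r\le 24-\ell(A_{L_\Pi})$ --- no such criterion exists for embeddings into a \emph{definite} lattice like $\Lambda$. The actual mechanism (which is what ``exactly as in \cite{K3symm}'' hides) is to glue $L_\Pi$ into the Lorentzian lattice $\Gamma^{1,25}$, invoke Conway's theorem $O^+(\Gamma^{1,25})\cong W\rtimes Co_\infty$, and use the absence of roots in $L_\Pi$ to show that $G_\Pi$ fixes a point in the \emph{interior} of a Weyl chamber, hence lies in $Co_\infty$, fixes the Weyl vector $w$, and therefore preserves $w^\perp/\langle w\rangle\cong\Lambda$ with $L_\Pi$ landing primitively inside it. The no-roots hypothesis enters through the Weyl-chamber geometry, not through an embedding inequality; as written, your step one asserts the conclusion rather than proving it.

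The more serious gap is in the converse, precisely in the case that is new relative to \cite{K3symm}: $\rk\Lambda^{\tilde G}=3$, i.e.\ $\rk N=21$. There your plan is to choose $\Pi$ ``generic'' inside $(N\otimes\RR)^\perp\cap(L_v\otimes\RR)$ so as to avoid the singular loci; but when $\rk N=21$ the orthogonal complement of $N\otimes\RR$ in $L_v\otimes\RR$ is exactly $4$-dimensional, so $\Pi$ is \emph{rigid} and there is no genericity to exploit. Moreover, non-singularity is not just avoidance of $(-2)$-root hyperplanes: one must also rule out the decay configurations \eqref{cond1}--\eqref{cond3}, equivalently the conditions \eqref{condeq} and \eqref{condeq2}, which involve vectors $v'$ with $(v')^2\ge 0$. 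The paper handles this by an explicit construction: embed $\Lambda^{\tilde G}(-1)\oplus\langle v\rangle$ (a rank-$4$ positive definite lattice) primitively into $E_8$, set $S=(\Lambda^{\tilde G}(-1))^\perp\cap E_8$, glue $S$ to $\Lambda_{\tilde G}$ to build $\Gamma^{5,21}$ with $Z=S\otimes\RR$, and then verify by hand --- using $E_8\cap S_0^\perp=\Lambda^{\tilde G}\oplus\langle v\rangle$ --- that no $v'$ with $v'_+=\alpha v$, $0<\alpha<1$, exists, so the rigid point is genuinely non-singular for every $n>1$. Without an argument of this kind (or some substitute), your construction does not establish existence in the rank-$3$ case, which is exactly the case that distinguishes this proposition from the $K3$ result. (For $\rk\Lambda^{\tilde G}\ge 4$ the paper instead imports the $K3$ result of \cite{K3symm} and passes to symmetric orbifolds, where genericity within the $\tilde G$-fixed family does finish the job.)
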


The proof is in appendix \ref{a:pfthmain}.

Triples $(\tilde G,\Lambda^{\tilde G},\Lambda_{\tilde G})$ where $\tilde G\subset Co_0$ is the stabilizer of a sublattice of $\Lambda$, and $\Lambda^{\tilde G}$ and $\Lambda_{\tilde G}=(\Lambda^{\tilde G})\cap \Lambda$ are the corresponding invariant and coinvariant lattices, were classified in \cite{HohnMason} up to isomorphisms. Out of the $290$ groups listed in \cite{HohnMason}, $221$ have an invariant sublattice $\Lambda^{\tilde G}$ of rank at least $3$. As expected from the analysis of section \ref{s:symmorbmoduli}, the quantum symmetry of symmetric orbifolds is not included in the group $G_\Pi$ classified in this theorem. Indeed, there is  no element of $Co_0$ of order $2$ fixing a $3$-dimensional sublattice of $\Lambda$ and acting by $-1$ on the orthogonal complement. This means that, as argued in \ref{s:symmorbmoduli}, either the $n$-th symmetric orbifold corresponds to a $4$-plane $\Pi$ orthogonal to a root (which seems to be the case for $n=2$), or the quantum symmetry does not satisfy the conditions (1), (2), and (3). In fact, it is easy to see that the extension of the quantum symmetry to $O(\Gamma^{5,21})$ must act on $v$ by $v\mapsto -v$, so that (3) is not satisfied.

It is quite remarkable that the list of groups appearing as possible groups of symmetries of NLSM of $K3^{[n]}$ type is essentially independent of $n$, at least for $n>1$. When the fixed sublattice has rank at least $4$, a partial  explanation of this phenomenon is given by the symmetric orbifold construction. Indeed, in this case, we know from \cite{K3symm} that there exists a NLSM on K3 $\C'$ with symmetry group $G$, so that in all the symmetric orbifold models $\Sym^n \C'$ the symmetry group must contain $G$, and it is easy to take a deformation for which the group is \emph{exactly} $G$. From the perspective of type IIB string theory on K3, what happens is that there is a family of models in the moduli space having  a self-duality group $G\subset O^{\Gamma^{5,21}}$ fixing a sublattice $\Gamma^{1,1}\subset \Gamma^{5,21}$. Therefore, all NLSM describing the world-sheet dynamics of stringy-like objects whose charge is in this $G$-fixed sublattice will have $G$ as a group of symmetries. We do not see any similar explanation for the case where the rank of the fixed sublattice is exactly $3$.
	
	Let us briefly comment on the problem of classifying groups of symmetries $G^*_\Pi$ that satisfy property (1) and (2), but not necessarily (3). Let us assume that $O^+_n$ contains only elements of $O(\Gamma^{5,21})$ acting by $v\mapsto \pm v$ of the vector $v$.   For a given model $\C_\Pi$ of $K3^{[n]}$ type, $G^*_\Pi$ might be either equal to $G_\Pi$ (if there are no symmetries satisfying (1) and (2), but not (3)), or of the form $G^*_\Pi=G_\Pi.\ZZ_2$, i.e. it contains a normal subgroup $G_\Pi$ such that $G^*_\Pi/G_\Pi\cong\ZZ_2$. Furthermore, it is clear from our construction that $G_\Pi^*$ must be a subgroup of the group $O(L_\Pi)$ of the lattice $L_\Pi:=\Gamma^{5,21}\cap Z$. It seems difficult to find a practical criterion to determine whether $G^*_\Pi$ is larger than $G_\Pi$ or not. In particular, this seems to depend not only on the isomorphism class of $L_\Pi$, but also on $n$, on $v$, and in the way $L_\Pi$ is embedded in $\Gamma^{5,21}$. It also seems difficult to determine the precise structure of the group $G_\Pi^*$ in the cases where it is different from $G_\Pi$ -- in general, there can be many non-isomorphic groups $G_\Pi^*$ such that $G^*_\Pi/G_\Pi\cong \ZZ_2$. In particular, the example of the symmetric orbifolds shows that $G_\Pi^*$ is not always a subgroup of $Co_0$. We hope we will be able to address these issues in future works.

\subsection{Twining genera}\label{s:twining}

Given a NLSM $\C$ of $K3^{[n]}$ type with a group of symmetries $G$ (satisfying the conditions (1), (2) and (3) of section \ref{s:callifsymm}), one is interested in finding how the group $G$ acts on the states of the model, in the sense of determining how the space of states of $\C$ decomposes into irreducible representations of $G$. This information can be partially recovered if one knows all \emph{twining genera} $\phi_g(\tau,z)$, that are functions on $\Hh\times \CC$ defined by
\be\label{twindef} \phi_g(\tau,z):=\Tr_{RR}(g\, q^{L_0-\frac{c}{24}}  \bar q^{\bar L_0-\frac{\bar c}{24}} y^{J_0^3} (-1)^{F+\bar F})\ ,\qquad q=e^{2\pi i\tau}, y=e^{2\pi i z},  
\ee for all $g\in G$. Here, the trace is taken over the Ramond-Ramond sector of the theory $\C$,  $L_0$ and $\bar L_0$ are the holomorphic and anti-holomorphic Virasoro generators, $J_0^3$ is the zero mode of a Cartan generator in the $su(2)_k$ subalgebra of the holomorphic $\N=4$ superconformal algebra, $F$ and $\bar F$ are the holomorphic and anti-holomorphic fermion numbers. For $g=1$, $\phi_g$ reduces to the elliptic genus of the model $\C$. As for the elliptic genus, only states with $L_0-\frac{\bar c}{24}=0$ (right-moving ground states) can give a non-vanishing contribution to this trace: the contributions of states with $L_0-\frac{\bar c}{24}>0$ cancel each other due to supersymmetry. This implies that $\phi_g(\tau,z)$ is a holomorphic function of $(\tau,z)\in \HH\times\CC$

In a path integral formulation of the NLSM, the twining genera can be described in terms of a  path integral on a world-sheet $\Sigma$ of genus $1$ (a torus) with modular parameter $\tau$, with the insertion of the operator $y^{J_0^3}$, and where all the fields are required to be periodic around one of the cycles in a basis of $H_1(\Sigma,\ZZ)$, and twisted by $g$ as one goes around the other cycle. From  this description, and from spectral flow invariance, one can argue that $\phi_g(\tau,z)$ must transform as a Jacobi form of weight $0$ and index $n$ (see \cite{eichler_zagier} for the relevant definitions)
\begin{align}
\label{twinmod}\phi_{g} \left ({a \tau + b\over c\tau +d}, {z \over c\tau + d}\right )&=  e^{2 \pi i m {c z^2\over c\tau + d}} \chi_g\left(\begin{array}{cc}
	a & b  \\
	c & d  \end{array}\right)\phi_{g}(\tau, z)  ~~~&& \forall 
\left(\begin{array}{cc}
	a & b  \\
	c & d  \end{array}\right) \in \Gamma_g\subseteq SL_2(\mathbb Z),
\\
\phi_{g}(\tau, z + \ell \tau + \ell')&=e^{-2 \pi i n(\ell^2\tau + 2\ell z)}\,  \phi_{g}(\tau,z)~ ~~ &&\forall (\ell, \ell') \in \mathbb Z^2, 
\end{align}
for a suitable subgroup $\Gamma_g\subset SL(2,\ZZ)$. We have allowed for the possibility of a non-trivial multiplier $\chi_g:\Gamma_g\to \CC^\times$, which cannot be excluded by path integral arguments (and can be shown to exist in some explicit examples). 
%

Once the twining genera $\phi_g$ are given for all $g\in G$, using standard group theory arguments one can determine how every simultaneous eigenspace for $L_0,\bar L_0,J_0^3$ (which is always finite dimensional) decomposes as a sum $\oplus_i n_i R_i$ over the irreducible representations $R_i$ of $G$, where $n_i$ are $\ZZ_2$-graded multiplicities, with the grading given by the total fermion number $(-1)^{F+\bar F}$. This means that the twining genera are not sufficient to detect whether a certain $(L_0,\bar L_0,J_0^3)$-eigenspace contains the sum of two copies of the same $G$-representation with opposite fermion number, since their contribution cancels exactly. Of course, we know that huge cancellations do occur in every model $\C$ due to supersymmetry, since  all contributions from states with $\bar L_0-\frac{\bar c}{24}>0$ cancel each other. In this sense, from the twining genera one can only hope to recover information about the subspace states with $\bar L_0-\frac{c}{24}=0$, i.e. the ones that are supersymmetric (BPS) with respect to the anti-holomorphic $\N=4$ superconformal algebra. It is believed  (though, to the best of my knowledge, no rigorous proof exists) that at a generic point in the moduli space there are no cancellations among the contributions of the BPS states to the elliptic genus. If this is the case, then for a generic model the twining genera are sufficient to unambiguously determine the decomposition of the space of BPS states into irreducible $G$-representations. On the other hand, models with a non-trivial symmetry group $G$ form a subset of measure zero in the moduli space, so it might very well be that only a small fraction of them (or none at all!) is `generic', in the sense above. 

Nevertheless, the twining genera are useful for two reasons. First,  they do provide interesting information about the action of $G$ on the space of BPS states -- even if the $G$-representation is not determined  unambiguously, it is certainly strongly constrained by the twining genera. Secondly, as we will explain in the rest of this section, one can determine a large number of such twining genera, whereas by contrast the analogous computation of `twining partition functions' without the inclusion of the fermion number $(-1)^{F+\bar F}$ is in general out of reach (except for a few very special points in the moduli space).

The property that makes the twining genus $\phi_g$ computable is the fact that it is invariant under continuous deformations of the moduli that preserve the symmetry $g$. The argument for this is completely analogous to the one leading to the invariance of the elliptic genus (see for example \cite{Cheng:2016org}). By the analysis in section \ref{s:callifsymm}, any symmetry $g$ can be identified with an element in $Stab^+(v)\subset O^+(\Gamma^{5,21})$ fixing a sublattice $\Gamma^g\subset \Gamma^{5,21}$ of signature $(5,d)$, $d\ge 0$.  An infinitesimal deformation of the NLSM preserves the symmetry $g$ if and only if it is generated by a $g$-invariant exactly marginal operator. It follows that, for each symmetry $g\in O^+(\Gamma^{5,21})$, there is a family of non-singular models with symmetry $g$, consisting of all non-singular (in the sense of section \ref{s:singular}) positive definite oriented $5$-planes $Z$ containing $v$, with $Z\subset \Gamma^g\otimes \RR$. Equivalently, it can be seen as an element of $O^{0+}(L_v)\cong Stab^+(v)$ acting trivially on a sublattice $(L_v)^g$ of signature $(4,d)$, $d\ge 0$. This family is necessarily connected, because it is the quotient by a discrete group of dualities of a space of the form $Gr(4,d)\setminus \calS$, where $Gr(4,d)$ is the Grassmannian of positive definite four planes $\Pi\in (L_v)^g\otimes \RR\cong \RR^{4,d}$, and $\calS$ is the locus of singular NLSM, described in section \ref{s:singular}, which has codimension at least $4$. This means that the twining genus $\phi_g$ depends only on the element $g\in Stab^+(v)$, and not on the particular model $\C\in \F_g$ in which it is computed. Therefore, in order to determine the twining genus $\phi_g$ for the whole family $\F_g$, it is sufficient to compute it at any point in $\F_g$. Furthermore, if two elements $g,g'\in Stab^+(v)$ are conjugate in $Stab^+(v)$, i.e. $g'=hgh^{-1}$ for some $h\in Stab^+(v)$, then the models of the family $\F_g$ are dual to the ones in the family $\F_{g'}$, and the corresponding twining genera $\phi_g$ and $\phi_{g'}$ are the same.\footnote{This is true more generally for conjugation by any $h\in O^+_n$. Since it is not completely clear to us what this group is, in this section and in the following we will be conservative and only consider dualities by $Stab^+(v)$.  } It follows that the twining genus $\phi_g$ only depends on the conjugacy class of $g$ in $Stab^+(v)$. Finally, invariance of the $\N=4$ characters under charge conjugation implies
\be \phi_{g}(\tau,z)=\phi_{g^{-1}}(\tau,-z)=\phi_{g^{-1}}(\tau,z)\ .
\ee
To summarize, while the definition of the twining genera $\phi_g$ refers to a specific CFT of K3$^{[n]}$ type, the twining genus itself only depends on the element $g\in Stab^+(v)$ up to conjugation in $Stab^+(v)$ and is invariant under charge conjugation $g\leftrightarrow g^{-1}$. Each such class determines a connected family $\F_g$ of non-singular CFTs of K3$^{[n]}$ type, such that $\phi_g$ is well-defined at each point in $\F_g$ and is constant along $\F_g$. In section \ref{s:conjclass} we will discuss the classification of such conjugacy classes of symmetries.

\subsection{Conjugacy classes of symmetries}\label{s:conjclass}

Motivated by the analysis of section \ref{s:twining}, we will now consider a classification of all conjugacy classes of elements $g\in Stab^+(v)$ fixing a positive definite $4$-dimensional subspace in $L_v\otimes \RR\cong \RR^{4,21}$. A first rough classification follows by considering the possible eigenvalues of $g$ in the defining $25$-dimensional representation. By construction $g$ acts non-trivially only on a negative definite sublattice $\Gamma_g\subseteq L_v\cap \Pi^\perp $  of rank $21-d$, $d\ge 0$, so that there are only $21-d$ non-trivial eigenvalues. By theorem \ref{th:main}, the lattice $\Gamma_g$ can be primitively embedded in the Leech lattice $\Lambda$, and the non-trivial eigenvalues of $g$ are the same as for an element $g'\in Co_0$, such that the cyclic group $\langle g'\rangle\subset Co_0$ is the pointwise stabilizer of a $3+d$ dimensional sublattice $\Lambda^{\langle g'\rangle}\subset \Lambda$. There are $42$ conjugacy classes $[g']$ of $Co_0$ with the property that its elements $g'$ stabilize a sublattice of rank at least $3$ (see \cite{Conway:1985vn}). As a matter of fact, all such classes actually stabilize a lattice of dimension at least $4$, so that $d\ge 1$.\footnote{More generally, the fixed subspace of any element in $SO(24,\RR)$ has even (possibly $0$) dimension. Indeed, any real orthogonal matrix is diagonalizable over $\CC$ with eigenvalues having modulus $1$. Its characteristic polynomial has real coefficients, so the non-real eigenvalues must come in complex conjugate pairs, while the multiplicity of the $-1$-eigenvalue must be even in order for the determinant to be positive. Therefore, the multiplicity of the $+1$-eigenvalue is also even. We stress that this argument has only implications for \emph{cyclic} subgroups of $Co_0$; non-cyclic subgroups of $Co_0$ fixing a sublattice of rank exactly $3$ do exist, see for example \cite{HohnMason}.} Furthermore, for any two distinct $Co_0$ classes, the corresponding elements have different sets of eigenvalues, and non-isomorphic lattices $\Lambda_{g'}=(\Lambda^{g'})^\perp\cap \Lambda$ (see for example \cite{HaradaLang1990,HohnMason}). The sets of eigenvalues are most easily encoded in the  Frame shape of $g'$, {\it i.e.} a symbolic product
\be \pi_{g'} := \prod_{\ell|N}\ell^{k_\ell}\ , \ee
where $N=o(g')$ is the order of $g'$. The integers $k_\ell\in \ZZ$ are defined by
\be \det (t{\bf 1}_{24}- \rho_{ 24}(g)) =\prod_{\ell|N} (t^\ell-1)^{k_\ell}\ .
\ee If $g'$ acts as a permutation of the vectors in some basis of the $24$ dimensional representation of $Co_0$, then all $k_\ell$ are non-negative and the Frame shape coincides with the cycle shape of the permutation. By theorem \ref{th:main}, we conclude that there are $42$ possible sets of eigenvalues for symmetries $g\in Stab^+(v)$, each corresponding to a certain isomorphism class of lattices $\Gamma_g$.

While two elements $g_1,g_2\in Stab^+(v)$ with different Frame shapes obviously belong to distinct $Stab^+(v)$ conjugacy classes, the inverse is not necessarily true: it could happen that $g_1,g_2$ with the same Frame shape are not conjugate in $Stab^+(v)$. We are left with the problem of determining the possible classes for each of the $42$ Frame shapes. The following lemma is useful in this sense.
\begin{lemma}
	Let $L$ be an even lattice and let $G\subseteq O(L)$ be a subgroup of its group of automorphisms. Let $g_1,g_2\in G$ be such that the coinvariant sublattice $L_{g_k}:=(L^{g_k})^\perp\cap L$, $k=1,2$, are both isomorphic to a given lattice $M$. Let $i_1,i_2:M\hookrightarrow L$ be primitive embeddings and let  $g\in O(M)$ be an automorphism of $M$, such that $i_k(M)=L_{g_k}$  and $g_k\circ i_k=i_k\circ g$, $k=1,2$.  
Then, $g_1$ and $g_2$ are conjugate in $G$ if and only if there exist $h\in G$ and $s\in C_{O(M)}(g)$ (the centralizer of $g$ in $O(M)$) such that 
	\be h\circ i_1=i_2\circ s\ ,
	\ee and in this case $g_2=hg_1h^{-1}$.
\end{lemma}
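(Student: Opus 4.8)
The plan is to unwind the definitions and reduce the statement to a bookkeeping exercise about how the ambient group $G$ interacts with the fixed choices of primitive embeddings $i_1,i_2$ and the automorphism $g\in O(M)$. First I would establish the easy direction. Suppose $g_2=hg_1h^{-1}$ for some $h\in G$. Then $h$ maps the coinvariant lattice $L_{g_1}$ to $L_{g_2}$ and the invariant lattice $L^{g_1}$ to $L^{g_2}$; in particular $h\circ i_1$ and $i_2$ are two primitive embeddings of $M$ into $L$ with the same image $L_{g_2}$, so $s:=i_2^{-1}\circ h\circ i_1$ is a well-defined automorphism of $M$. The intertwining relations $g_k\circ i_k=i_k\circ g$ together with $g_2=hg_1h^{-1}$ give, after a short computation, $g\circ s=s\circ g$, i.e. $s\in C_{O(M)}(g)$, and by construction $h\circ i_1=i_2\circ s$. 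That is exactly the claimed data.

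Next I would do the converse. Given $h\in G$ and $s\in C_{O(M)}(g)$ with $h\circ i_1=i_2\circ s$, I want to show $h g_1 h^{-1}=g_2$. The key point is that an automorphism of $L$ is determined by its restrictions to $L_{g_1}$ and to $L^{g_1}$, since $L_{g_1}\oplus L^{g_1}$ is a finite-index sublattice of $L$ (this uses that $L$ is even, and the standard fact about invariant/coinvariant splittings); more precisely, two automorphisms of $L$ agreeing on this finite-index sublattice agree on all of $L$. So it suffices to check that $h g_1 h^{-1}$ and $g_2$ agree on $L_{g_2}=i_2(M)$ and on $L^{g_2}$. On $L^{g_2}$ both act trivially: $g_2$ by definition, and $hg_1h^{-1}$ because $h^{-1}$ sends $L^{g_2}$ into $L^{g_1}$ (here I would note $h(L^{g_1})=L^{g_2}$, which follows from $h(L_{g_1})=L_{g_2}$ and orthogonality, using that $h\in O(L)$). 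On $i_2(M)$: for $m\in M$, using $h\circ i_1=i_2\circ s$ we have $h^{-1}i_2 = i_1 s^{-1}$, hence $h g_1 h^{-1} i_2(m) = h g_1 i_1(s^{-1}m) = h i_1(g s^{-1} m) = i_2 s(g s^{-1} m) = i_2(s g s^{-1} m) = i_2(g m) = g_2 i_2(m)$, where the middle equalities use the intertwining $g_1 i_1 = i_1 g$, then $h i_1 = i_2 s$, then $s g s^{-1}=g$ from $s\in C_{O(M)}(g)$, and finally $g_2 i_2 = i_2 g$. This gives the equality on a generating set of a finite-index sublattice, hence everywhere, so $h g_1 h^{-1}=g_2$ and in particular $g_1,g_2$ are conjugate in $G$.

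The main obstacle is the step asserting that an element of $O(L)$ is pinned down by its action on $L_g\oplus L^g$: one must be careful that this direct sum has finite index in $L$ (true because it is full-rank and both summands are nondegenerate, as $g$ has finite order), so that $(L_g\oplus L^g)\otimes\QQ = L\otimes\QQ$ and an isometry of $L$ extending a given isometry of the sublattice is unique. A secondary subtlety, which should be checked explicitly, is that $h(L^{g_1})=L^{g_2}$ and $h(L_{g_1})=L_{g_2}$ really do follow from $h\circ i_1=i_2\circ s$ together with $h\in O(L)$ and the primitivity/orthogonality of the embeddings — once $h$ carries $L_{g_1}=i_1(M)$ onto $L_{g_2}=i_2(M)$, it carries the orthogonal complement onto the orthogonal complement, and since these coinvariant lattices are primitive their orthogonal complements in $L$ are the invariant lattices. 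I would state this explicitly rather than leave it implicit. The rest is the routine diagram-chase displayed above; no deep input is needed beyond the existence and uniqueness facts about primitive embeddings and invariant/coinvariant splittings recalled in the paper's lattice appendix.
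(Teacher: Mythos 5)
Your proof is correct and is essentially the intended argument: the paper itself only defers to Lemma 8 of \cite{Cheng:2016org}, and your diagram chase (extracting $s=i_2^{-1}\circ h\circ i_1$ in one direction, and in the other checking $hg_1h^{-1}=g_2$ on the finite-index sublattice $L^{g_2}\oplus L_{g_2}$) is exactly the generalization of that proof. The one caveat you rightly flag — nondegeneracy of $L^{g_k}$, hence full rank of $L^{g_k}\oplus L_{g_k}$, which needs $g_k$ of finite order — is automatic in the paper's application, where the symmetries fix a positive-definite $4$-plane pointwise and act on a negative-definite complement.
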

\begin{proof}
This is an	immediate generalization of the proof of Lemma 8 in \cite{Cheng:2016org}.
\end{proof}

In the case we are interested in, $L$ is the lattice $L_v=v^\perp\cap \Gamma^{5,21}$, $G=Stab^+(v)$ is the subgroup of $O(\Gamma^{5,21})$ preserving $v$ and preserving the orientation of positive definite $4$-subspaces in $L_v\otimes \RR$, and the lattice $M$ is the coinvariant sublattice $\Lambda_g=\Lambda\cap (\Lambda^g)^\perp$, i.e. the orthogonal complement of the $g$-fixed sublattice in the Leech lattice, for some $g\in Co_0$ with the given Frame shape $\pi_g$. 
Therefore, the  symmetries with Frame shape $\pi_g$ up to $Stab^+(v)$ transformations are in one to one correspondence with the double cosets
\be\label{upbound}  \{\text{primitive }i:\Lambda_g\hookrightarrow L_v\}/Stab^+(v)\ ,
\ee where $Stab^+(v)$ is the stabilizer of $v$ in $O^+(\Gamma^{5,21})$. In principle, one might want to consider the number of classes of symmetries up to $O^+_n$ transformation, where $O^+_n$ might contain (at least) elements that flip the sign of the vector $v$. Furthermore, as follows from the lemma above, the number of conjugacy classes of symmetries with Frame shape $\pi_g$ is obtained by quotienting also by the centralizer $C_{\Lambda_g}(g)$.  In this paper, we will only compute the number of cosets in \eqref{upbound} (and only in certain cases, see below). This gives an upper bound on the number of $O_n^+$-classes of symmetries. On the other hand, for most Frame shapes $\pi_g$, this upper bound is either $1$ or equals some lower bound that can be obtained by other arguments (in particular, by the number of known twining genera), so it actually equals the number of classes.

The single cosets \eqref{upbound} are in one to one correspondence with the cosets in
\be \label{upboundeq} \{(\hat v,\hat i)\mid \hat i:\Lambda_g\hookrightarrow \Gamma^{5,21}\text{ primitive},\ \hat v\in \Gamma^{5,21}\cap \hat i(\Lambda_g)^\perp\text{ primitive},\  \hat v^2=2m-2\}/O^+(\Gamma^{5,21})\ .
\ee Indeed, there is a map from \eqref{upbound} to \eqref{upboundeq} given by assigning to the coset $[i]$ with representative $i$ the coset $[(v,i)]$ with representative $(v,i)$. If we choose a different representative $i'$ in the same coset $[i]$, then $(v,i)$ and $(v,i')$ are related by $Stab^+(v)\subset O^+(\Gamma^{5,21})$, so they belong to the same $[(v,i)]$. It follows that the map is well defined. Vice versa, given any coset in \eqref{upboundeq}, using the fact that any two primitive vectors of same length in $\Gamma^{5,21}$ are related by an $O^+(\Gamma^{5,21})$ transformation, we can choose a representative of the form $(v,\hat i)$, and the embedding $\hat i$ is determined up to automorphisms in $Stab^+(v)$. Therefore, this determines a coset $[\hat i]$ in \eqref{upbound}, and gives a well defined map from \eqref{upboundeq} to \eqref{upbound} that is clearly the inverse of the previous one. 

The quotient in \eqref{upboundeq} admit two alternative useful descriptions. The first is given by noticing that a pair $(\hat v,\hat i)$ as in \eqref{upboundeq} determines an embedding $\tilde i:\Lambda_{g,n}\hookrightarrow \Gamma^{5,21}$, where
\be \Lambda_{g,n}:=\Lambda_g\oplus \langle 2n-2\rangle\ .
\ee More precisely, the cosets in \eqref{upboundeq} are in one to one correspondence with
\be\label{cosetsemb} \{\tilde i:\Lambda_{g,n}\hookrightarrow \Gamma^{5,21}\mid \tilde i(\Lambda_g), \tilde i(\langle 2n-2\rangle)\text{ primitive in }\Gamma^{5,21}\}/O^+(\Gamma^{5,21})
\ee

The second one uses the fact that the lattices $\Lambda_g$ we are considering are the same appearing in the classification of symmetries of NLSM on K3. Using the results of \cite{Nikulin}, one can show that all such lattices can be primitively embedded in an even unimodular lattice $\Gamma^{4,20}$ (see \cite{K3symm} for details).  For any decomposition $\Gamma^{5,21}=\Gamma^{1,1}\oplus \Gamma^{4,20}$, this gives a primitive embedding of $\Lambda_g$ in $\Gamma^{5,21}$, such that the orthogonal complement $K$ is of the form  $K=\Gamma^{1,1}\oplus K'$ for some lattice $K'$. 
For lattices $K$ of this form, theorem 1.14.2 of Nikulin \cite{Nikulin} then implies that the primitive embedding of $\Lambda_g$ in $\Gamma^{5,21}$ is unique up to $O^+(\Gamma^{5,21})$ transformations. For each $\Lambda_g$ let us choose one such primitive embedding $ i_g:\Lambda_g \to \Gamma^{5,21}$, and set $K:=i_g(\Lambda_g)^\perp\cap\Gamma^{5,21}$. Then,  each coset in \eqref{upboundeq} has a representative of the form $(\hat v,i_g)$, where $\hat v\in K\subset \Gamma^{5,21}$ is determined up to $O^+(\Gamma^{5,21})$ transformations acting trivially on $i_g(\Lambda_g)$. The latter transformations act on $K$ by an automorphism in $O^{0+}(K)$, fixing the orientation of positive definite $5$-dimensional subspaces and acting trivially on $K^*/K$. In fact, every element in $O^{0+}(K)$ extends to an $O^+(\Gamma^{5,21})$ automorphism acting trivially on $K^\perp\equiv i_g(\Lambda_g)$. We conclude that the cosets in \eqref{upboundeq} are in one to one correspondence with
\be\label{cosetsprim} \{\text{primitive } v\in K,\\ v^2=2n-2\}/O^{0+}(K)\ ,
\ee i.e. $O^{0+}(K)$-orbits of primitive $v\in K$ of length $2n-2$.

It remains to count these orbits. First of all, given a primitive $v\in K$, let $\Div(v)$ be the maximal positive integer such that $\frac{v}{\Div(v)}\in K^*$. One has $\Div(v)$ divides $\gcd(N,2n-2)$, and that for any $f\in O(k)$, $\Div(f(v))=\Div(v)$ (see appendix \ref{a:lattices}). If $f\in O^{0+}(K)$, then $\frac{1}{\Div(v)} f(v)\equiv \frac{1}{\Div(v)} v\mod K$, so that a necessary condition for two primitive vectors $v,v'\in K$ of same length $2n-2$ to be in the same $O^{0+}(K)$-orbit is that $\frac{v}{\Div(v)}$ and $\frac{v'}{\Div(v')}$ are in the same $K^*/K$ coset (in particular, $\Div(v)=\Div(v')$; this is true more generally for vectors in the same $O(K)$-orbit). Therefore, there is at least one $O^{0+}(K)$-orbit for each generator $x\in K^*/K$ (elements $x\in K^*/K$ that are not generators cannot be written as $v/\Div(v)$ for some $v\in K$).

The case where $\Div(v)=1$ is particularly interesting. This condition corresponds to the trivial coset $x\equiv 0 \in K^*/K$, since $\frac{v}{\Div(v)}=v\in K$. This is the only possible value for $\Div(v)$ when the order $N$ of $g$ and the norm $v^2=2n-2$ are coprime. Furthermore, as we will argue in section \ref{s:elevenandfriends}, based on the analysis of section \ref{s:symmorbmoduli}, this is also the case when the NLSM is a symmetric orbifold $\Sym^n(\C_{K3})$ of some NLSM on K3 $\C_{K3}$, and $g$ is induced by a symmetry of $\C_{K3}$. It is easy to see that $\Div(v)=1$ if and only if  $\langle v\rangle \oplus \Lambda_g$ is a \emph{primitive} sublattice in $\Gamma^{5,21}$.\footnote{Suppose $\Div(v)=1$. Every primitive vector of $\langle v\rangle \oplus \Lambda_g$ is of the form $av+b\lambda$, with $\lambda$ primitive and $\gcd(a,b)=1$. Suppose that $\frac{1}{k}(av+b\lambda)\in \Gamma^{5,21}\subset K^*\oplus (\Lambda_g)^*$ for some integer $k$. Since $\frac{av}{k}\in K^*$ and $\Div(v)=1$, it must be $k|a$ and $\frac{av}{k}\in K$, i.e. $\frac{av}{k}$ belongs to the trivial coset of $K^*/K\cong (\Lambda_g)^*/\Lambda_g$. Since $\frac{av}{k}+\frac{b\lambda}{k}\in \Gamma^{5,21}$, the gluing conditions imply that also $\frac{b\lambda}{k}\in \Lambda_g$. As a consequence, $k$ divides $b$, because $\lambda$ is primitive. But then $k$ divides $\gcd(a,b)=1$, so that $k=1$. Thus, every primitive vector of $\langle v\rangle \oplus \Lambda_g$ is primitive in $\Gamma^{5,21}$ as well. For the vice versa, notice that, for a general $\Div(v)$, the gluing construction implies that $\Gamma^{5,21}\subset K^*\oplus (\Lambda_g)^*$ contains a vector of the form $\frac{1}{\Div(v)}(v+\lambda)$, for some $\lambda\in \Lambda_g$, so if $\Div(v)>1$ then $\langle v\rangle \oplus \Lambda_g$ is not primitively embedded in $\Gamma^{5,21}$. 
 } 

Therefore, the subset of cosets in \eqref{cosetsprim} where $\Div(v)=1$ corresponds to the subset of cosets in \eqref{cosetsemb} where the $\tilde i:\Lambda_{g,n}\hookrightarrow \Gamma^{5,21}$ are primitive. Since $\Lambda_{g,n}=\langle v\rangle\oplus \Lambda_g$ is indefinite of signature $(1,d)$, $0\le d\le 20$, one can apply the theorems by Miranda and Morrison \cite{MirandaMorrison1,MirandaMorrison2} to compute the number of such embeddings. This number depends essentially on the discriminant form of $\langle v\rangle\oplus \Lambda_g$, which is the direct sum of the discriminant form on $\Lambda_g^*/\Lambda_g$ plus the discriminant form $(A,q)$, where $A\cong \ZZ_{2n-2}$ has a generator $x$ such that $q(x)=\frac{1}{2n-2}$. The necessary information about the discriminant forms is reported in appendix \ref{a:discriminants}. The results of this calculation are contained in table \ref{tab:big}.

More generally, if $\Div(v)>1$, then $\frac{v}{\Div(v)}$ determines a non-trivial element $x\in K^*/K\cong \Lambda_g^*/\Lambda$, so that one has a primitive embedding in $\Gamma^{5,21}$ of an overlattice $N\supset \langle v\rangle\oplus \Lambda_g$, generated by $\langle v\rangle\oplus \Lambda_g$ together with a vector of the form $\frac{1}{\Div(v)}(v+\lambda)$, where $\lambda\in \Lambda_g$ is such that $\frac{\lambda}{\Div(v)}\equiv\frac{\lambda}{\Div(\lambda)}$  is in the same class $x\in \Lambda_g^*/\Lambda_g$. Thus, $N$ is an indefinite lattice whose discriminant group is a quotient of the discriminant group of $\langle v\rangle\oplus \Lambda_g$ by a cyclic subgroup. In principle, one might be able to work out all the possibilities, but we will not do this here.

\section{Second quantized twining genera}\label{s:elevenandfriends}
	Let $\C$ be a NLSM on a single K3 surface, and let us consider the symmetric orbifold $\Sym^n(\C)$. The elliptic genus is given by the $p^m$ coefficient in the infinite product \cite{Dijkgraaf:1996it,Dijkgraaf:1996xw}
\be \sum_{n=1}^{\infty} p^n\phi_{\Sym^n(K3)}(\tau,z)=\prod_{\substack{m,n,l\in \ZZ\\ m>0,n\ge 0}} (1-p^nq^my^l)^{-c(mn,l)}\ ,
\ee where $c(m,l)$ are the Fourier coefficients of the elliptic genus of K3
\be \phi_{K3}(\tau,z)=\sum_{\substack{m,l\in \ZZ\\n\ge 0}}c(m,l)q^my^l\ .
\ee
Let $G$ be the symmetry group of the `seed' model $\C$, commuting with the $\N=4$ SCA and the spectral flow generators. The action of $G$ lifts to a group of symmetries $\tilde G$ of  $\Sym^n(\C)$ satisfying analogous properties (conditions (1) and (2) of section \ref{s:callifsymm}), and preserving the twisted and untwisted sectors. More precisely, the group $\tilde G$ fits in an exact sequence
\be 1\to H\to \tilde G\to G\to 1\ ,
\ee where $H$ is a group acting trivially on the untwisted sector. Since $\Sym^n\C$ is generated by the untwisted sector and by the ground state of the $\sigma$-twisted sector, with $\sigma\in S_n$ a single transposition, the group $H$ can only act by phases on the $\sigma$-twisted ground state. The only such symmetry is, in fact, the quantum symmetry, so that $H\cong \ZZ_2$, and $\tilde G$ is a $\ZZ_2$ central extension of $G$. 

 In the language of the previous sections, the model $\Sym^n(\C)$ will correspond to a positive definite four-plane $\Pi\subset L_v\otimes \RR$, where $L_v=v^\perp \cap \Gamma^{5,21}$ for a primitive $v\in \Gamma^{5,21}$ of length $2n-2$. Since $G$ is a symmetry of the fundamental string world-sheet theory, one can argue that it will lift to a symmetry of the whole string theory, so that $\tilde G$ will act on $L_v$ by lattice automorphisms.  The analysis of section \ref{s:symmorbmoduli} shows that $\Pi$ is orthogonal to a vector $s\in L_v$, with $s^2=2-2n$, and such that $L_v\cong \langle s\rangle\oplus_\perp \Gamma^{4,20}$.  The quantum symmetry of the model, that acts by $s\mapsto -s$ and fixes $L_v\cap s^\perp\cong \Gamma^{4,20}$, is central in  $\tilde G$, since $\tilde G$ does not mix the untwisted and the twisted sectors. This means that the action of $\tilde G$ on $L_v$ preserves setwise the sublattice $s^\perp \cap L_v\cong \Gamma^{4,20}$. The group $\tilde G$ has a normal subgroup of index $2$ that fixes the vector $s$ and acts faithfully on $\Gamma^{4,20}$. This means that this normal subgroup is isomorphic to $\tilde G/H\cong G$; therefore, the $\ZZ_2$ extension of $G$ is split, and $\tilde G\cong \ZZ_2\times G$. Thus, the lift of $G$ to $\Sym^n\C$ can be chosen to act trivially on the twisted ground state and to be isomorphic to $G$ itself. With a certain abuse of notation, we will denote this lift again by $G$. Since $G$ fixes $s$, it must act trivially on $L_v^*/L_v$. Therefore, the action of $G$ on $L_v$ extends to an action on $\Gamma^{5,21}$ by automorphisms that fix $v$. Therefore, conditions (1), (2), and (3) of section \ref{s:callifsymm} are satisfied. 
 
 Let $\Gamma^G$ be the $G$-fixed sublattice of $\Gamma^{5,21}$ and $\Gamma_G=(\Gamma^G)^\perp\cap \Gamma^{5,21}$ its orthogonal complement. One has $\Gamma_g\subset \Gamma_G$ for all $g\in G$. Recall that $v$ and $s$ are contained in a sublattice $\Gamma^{1,1}\subset \Gamma^{5,21}$. Since both $v$ and $s$ are $G$-fixed, then $\Gamma^{1,1}$ is a sublattice of $\Gamma^G$, and $v$ has divisor $1$ in this sublattice. Therefore, the sublattice $\Gamma_G\oplus_\perp \langle v\rangle$ is primitive in $\Gamma^{5,21}$, and the same is true for the sublattices $\Gamma_g\oplus_\perp \langle v\rangle$ for all $g\in G$. We conclude that, as claimed in section \ref{s:conjclass}, the $Stab^+(v)$-conjugacy class of any symmetry $g$ of $\Sym^n(\C)$, inherited from a symmetry of $\C$, is such that $v$ has divisor $1$ in $\Gamma^{5,21}\cap \Gamma_g$. Therefore, these classes are the ones described in table \ref{tab:big}.

 The twining genera $\phi^{\Sym^n(K3)}_g$ can be obtained from the generating functions  \be \Psi_g= \sum_{n=1}^{\infty} p^n\phi^{\Sym^n(K3)}_g(\tau,z)=\prod_{\substack{m,n,l\in \ZZ\\ n>0,m\ge 0}}\prod_{t\in \ZZ/N\ZZ}  (1-e^{\frac{2\pi it}{N}}p^nq^my^l)^{-\hat c_{t}(mn,l)}\ ,
\ee where $N$ is the order of $g$, and $\hat c_{t}$ are the Fourier coefficients of the discrete Fourier transforms of $\phi_{g^i}$
\be \hat \phi_t(\tau,z)=\sum_{m=0}^\infty \sum_{l\in \ZZ} \hat c_{t}(m,l)q^my^l=\frac{1}{N}\sum_{k\in \ZZ/N\ZZ} e^{-\frac{2\pi i tk}{N}}\phi_{g^k}(\tau,z)\ .
\ee
Therefore, the twining genera of the symmetric orbifold $\Sym^n(\C)$ are completely determined in terms of the twining genera of the `seed' K3 model $\C$. The complete list of the possible twining genera for a K3 model $\C$ can be found in \cite{Paquette:2017gmb}. The twining genera $\phi_g$ can be written as
\be\label{twinformula} \phi_g(\tau,z)=A_g \chi_{0,1}(\tau,z)+F_g(\tau)\chi_{-2,1}(\tau,z)\ ,
\ee where $\chi_{0,1}$ and $\chi_{-2,1}$ are the standard weak Jacobi forms given in terms of Jacobi theta functions and Dedekind eta series as
\be \chi_{0,1}(\tau,z)=4\sum_{i=2}^4\frac{\vartheta_i(\tau,z)^2}{\vartheta_i(\tau,0)^2}\ ,\qquad \qquad \chi_{-2,1}(\tau,z)=\frac{\vartheta_1(\tau,z)^2}{\eta(\tau)^6}\ ,
\ee $A_g$ is a constant depending on the the Frame shape of $g$ (specifically, if $\pi_g=\prod_{\ell|N}\ell^{k_\ell}$ is the Frame shape of $g$, then $A_g=\frac{1}{12}\sum_{\ell|N}k_{\ell}$), and $F_g(\tau)$ is a modular form of weight $2$ for a congruence subgroup of $SL(2,\ZZ)$, which is given in table \ref{tab:big} (see \cite{Paquette:2017gmb} for notation).

The generating function $\Psi_g$ can be `completed' to a function 
\be\label{PhiSieg} \Phi_g=\frac{p\psi_g(\tau,z)}{\Psi_g(\sigma,\tau,z)}=\prod_{{(m,n,l)}}\prod_{t\in \ZZ/N\ZZ}  (1-e^{\frac{2\pi it}{N}}p^nq^my^l)^{-\hat c_{t}(mn,l)}
\ee which is a meromorphic Siegel modular form for a congruence subgroup of $Sp(4,\ZZ)$ of weight $(d-4)/2$, where $d$ is the dimension of the $g$-fixed subspace in the $24$-dimensional representation. Here, the product is over $m,n,l\in \ZZ$ with $m,n\ge 0$, and with $l<0$ whenever $m=n=0$. The function
\be \psi_g(\tau,z)= qy\prod_{t\in \ZZ/N\ZZ}  (1-e^{\frac{2\pi it}{N}}y^l)^{-\hat c_{t}(0,l)}\prod_{\substack{m,l\in \ZZ\\ n> 0}}(1-e^{\frac{2\pi it}{N}}q^my^l)^{-\hat c_{t}(0,l)}
\ee is a Jacobi form of weight $4-d$ and index $1$ for a subgroup of $SL(2,\ZZ)$, and it only depends on the Frame shape of $g$.

This leads to a very surprising phenomenon. Suppose we have two different NLMS on K3 $\C$ and $\C'$, with symmetries $g$ and $g'$ having the same Frame shape, but belonging to different $O^+(\Gamma^{4,20})$ conjugacy classes, and giving rise to different twining genera $\phi_g$ and $\phi_{g'}$. 
As a consequence, we have different generating functions $\Psi_g$ and $\Psi_{g'}$, and one would expect the twining genera $\phi^{\Sym^n(\C)}_g$ and $\phi^{\Sym^n(\C)}_{g'}$ to be different for generic $n$. However, as can be seen by a quick inspection of table \ref{tab:big}, there are many cases where, for any $n>1$, there is a unique conjugacy class of symmetries with a given Frame shape (e.g., this happens for the Frame shapes $1^211^2$, $1^12^17^114^1$, $1^13^15^115^1$, and many more). This means that there exists a continuous deformation from the  model $\Sym^n(\C)$ to the model $\Sym^n(\C')$, such that the symmetry $g$ of $\Sym^n(\C)$ is preserved and mapped to the symmetry $g'$ of $\Sym^n(\C')$. This deformation must move outside of the symmetric orbifold locus, otherwise it would  exist already at the level of the seed theories $\C$ and $\C'$, i.e. for $n=1$.

It follows that, while $\Psi_g$ and $\Psi_{g'}$ are defined in terms of totally different infinite products, they actually differ only for the $p^1$ coefficient, while all the $p^n$ coefficients are the same for $n>1$. For this to be true, the exponents of the two infinite products must conspire in order to give infinitely many cancellations.

This phenomenon is even more striking if we consider the `completions' $\Phi_g$ and $\Phi_{g'}$, because their inverse $1/\Phi_g$ and $1/\Phi_{g'}$ must differ only for the $p^0$ term (notice that $\psi_g=\psi_{g'}$, since the automorphic corrections only depend on the Frame shape). Thus, the difference $1/\Phi_g-1/\Phi_{g'}$ should be a function of $\tau$ and $z$ only. But it should also be a Siegel modular form, which seems impossible for a function of $\tau$ and $z$ only! In fact, there is only one possibility for this to be true: the modular weight must be zero and the functions $1/\Phi_g$ and $1/\Phi_{g'}$ only differ by a constant (i.e., independent also of $\tau$ and $z$). Indeed, one can check that this phenomenon only occurs for Frame shapes such that the $g$-fixed subspace is $4$-dimensional so that the modular weights of $\Phi_g$ and $\Phi_{g'}$ (or their inverse) is $0$. Furthermore, quite amazingly, it turns out that the difference $\phi_g-\phi_{g'}$ is always proportional to $\psi_g=\psi_{g'}$ (which has weight $0$ for these cases)! Thus, the $p^0$ term in the difference $1/\Phi_g-1/\Phi_{g'}$ is indeed a constant. 

These observations give strong support  to our statement that $\phi^{\Sym^n(\C)}_g=\phi^{\Sym^n(\C)}_{g'}$ for all $n>1$. For the two twining genera $\phi_g$ and $\phi_{g'}$ related to the Frame shape $1^211^2$, we verified these identities up to $n=12$.\footnote{We thank Max Zimet for help with these calculations.} It would be interesting to give a rigorous mathematical proof of these identities for all $n>1$. A possible strategy for such a proof is the following. One knows that $\Phi_g-\Phi_{g'}$ is a meromorphic Siegel modular form of weight $0$. Using some theorems by Borcherds (e.g., theorem 13.3 in \cite{Borcherds98}), one can in principle determine the location of the poles of $1/\Phi_g$ and $1/\Phi_{g'}$. In many cases, one also knows  the coefficients of these poles. If one could prove that all poles of $1/\Phi_g$ and $1/\Phi_{g'}$ are in the same location and have the same coefficients, then the difference should be a holomorphic Siegel modular form of weight $0$, which is necessarily a constant.

\newcolumntype{C}{>{$}c<{$}}

\newcolumntype{E}{>{\eatcell}c@{}}

\begin{landscape}
	\begin{tabularx}{\linewidth}{CCCCC}
		\pi_g
		& (\Gamma^{4,20})^g 
		&
		\begin{matrix}
			 \text{\# Cosets}
		\end{matrix} 
		&   F_{g}(\tau) 
		&\\
		\midrule\endhead
	1^{24} &  \Gamma^{4,20}
		&\begin{matrix}
		\circ
		\end{matrix} 
		& 0 
		&
		\\
		\rowcolor{gray!11}{} 
	1^82^8 &  \Gamma^{4,4}\oplus E_8(-2) 
		&\begin{matrix}
			\circ 
		\end{matrix}
		& -\frac{4}{3}\E_2
		&\\
	1^{-8}2^{16}&  \Gamma^{4,4}(2) 
		&\begin{matrix}
			\circ
			\end{matrix} 
		& -\frac{8}{3}\E_2
		&\\
		\rowcolor{gray!11}{} 
	2^{12}& \ZZ(2)^4\oplus \ZZ(-2)^{\oplus 8} 
		&\begin{matrix}
			\circ
		\end{matrix}
		& 2\E_2-\frac{4}{3}\E_4
		&\\
	1^63^6& \Gamma^{2,2}\oplus \Gamma^{2,2}(3)\oplus (A_2(-1))^{\oplus 2}  
		& 
		\begin{matrix}
			\circ
		\end{matrix} 
		& -\frac{3}{4}\E_3
		&\\
		\rowcolor{gray!11}{} 
	1^{-3}3^{9}&  \Gamma^{2,2}(3)\oplus A_2  
		& \begin{matrix}
			\circ
		  	\end{matrix}  
	  	& -\frac{9}{8}\E_3
		&\\
	3^{8}  &  \Gamma^{4,4}(3)  
		& \begin{matrix} n\notin 3\ZZ & \updownarrow\\ n\in 3\ZZ & \circ\end{matrix}
		& \frac{1}{2}\E_3-\frac{3}{8}\E_9\pm 9\eta[1^33^{-2}9^3]
		\\
		\rowcolor{gray!11}{} 
	1^42^24^4&  \Gamma^{2,2}\oplus\Gamma^{2,2}(4)\oplus \ZZ(-2)^{\oplus 2} 
		&\begin{matrix}
			\circ
			\end{matrix} 
		& \frac{1}{3}\E_2-\frac{2}{3}\E_4
		&\\
	1^82^{-8}4^8&  \Gamma^{4,4}(2) 
		&\begin{matrix}
			\circ
		   \end{matrix}  
	   	&  -\frac{4}{3}\E_2
		\\
		\rowcolor{gray!11}{} 
	1^{-4}2^64^4& \Gamma^{2,2}(4)\oplus\ZZ(2)^{\oplus 2}  
		&\begin{matrix}
			\circ
			\end{matrix}  
		& -\frac{1}{3}\E_2-\frac{2}{3}\E_4
		\\
	2^{-4}4^8  & D_4(2) 
		& \begin{matrix} n\notin 2\ZZ & \updownarrow\\ n\in 2\ZZ & \circ \end{matrix}  
		& 
		\begin{matrix}
			2\E_2-\frac{4}{3}\E_4\\
			-2\E_2
		\end{matrix}
		\\
		\rowcolor{gray!11}{} 
	2^{4}4^4&  D_4(2)\oplus D_4(-2)  
		&\begin{matrix}
			\circ
			\end{matrix}  
		&   -\frac{1}{3}\E_2+\E_4-\frac{2}{3}\E_8
		\\
	4^6 &  \ZZ(4)^{\oplus 4}\oplus \ZZ(-4)^{\oplus 2}
		&   \begin{matrix} n\notin 2\ZZ & \updownarrow\\ n\in 2\ZZ & \circ \end{matrix} 
			& -\frac{1}{6}\E_4+\frac{1}{2}\E_8-\frac{1}{3}\E_{16}\pm 8 \eta[2^44^{-4}8^4]
		\\
		\rowcolor{gray!11}{} 
	1^45^4 &  \Gamma^{2,2}\oplus \Gamma^{2,2}(5)
		&\begin{matrix}
			\circ
			\end{matrix} 
		& -\frac{5}{12}\E_5
		\\
	1^{-1}5^5 & A_4^*(5)  
		& \begin{matrix}n\in 1+5\ZZ & \updownarrow\\ n\notin 1+5\ZZ & \circ \end{matrix}  
		& 
		-\frac{25}{48}\E_5\mp\frac{25\sqrt{5}}{2}\eta[1^{-1}5^5]
		\\
		\rowcolor{gray!11}{}
	1^22^23^26^2 & \Gamma^{2,2}\oplus \Gamma^{2,2}(6)
		&\begin{matrix}
			\circ
		\end{matrix}  
		& \frac{1}{6}\E_2+\frac{1}{4}\E_3-\frac{1}{2}\E_6
		\\
	1^{4}2^13^{-4}6^5& \Gamma^{2,2}(2)\oplus A_2(2) 
		& \begin{matrix}
			\circ
		\end{matrix}   
		& \frac{1}{12}\E_2-\frac{1}{4}\E_3-\frac{1}{4}\E_6
		\\
		\rowcolor{gray!11}{} 
	1^{5}2^{-4}3^16^4 &\Gamma^{2,2}(3)\oplus A_2
		& \begin{matrix}
			\circ
		\end{matrix}   
		& -\frac{7}{12}\E_2+\frac{1}{8}\E_3-\frac{1}{4}\E_6
		\\
	1^{-2}2^43^{-2}6^4 & A_2(2)^{\oplus 2}
		& \begin{matrix}n=1 &  \updownarrow\\ n>1 & \circ \end{matrix}  
		& \begin{matrix}
			\frac{1}{3}\E_2+\frac{5}{4}\E_3-\E_6\\
			-\frac{2}{3}\E_2-\frac{3}{4}\E_3
		\end{matrix}
		\\
		\rowcolor{gray!11}{} 
	1^{-1}2^{-1}3^36^3& D_4(3)  
		& \begin{matrix} n\notin 3\ZZ & \updownarrow\\ n\in 3\ZZ & \circ \end{matrix}
		&  \begin{matrix}
			\frac{11}{12}\E_2+\frac{3}{8}\E_3-\frac{3}{4}\E_6\\
			-\frac{4}{3}\E_2-\frac{3}{8}\E_3
		\end{matrix}
		\\
	1^{-4}2^53^46^1& \Gamma^{2,2}(6)\oplus A_2(2)  
		& \begin{matrix}
		n\in 1+3\ZZ &	\circ,\circ^{*}\\
		n\notin 1+3\ZZ & \circ
		\end{matrix} 
		& -\frac{7}{12}\E_2-\frac{1}{4}\E_3-\frac{1}{4}\E_6
	\end{tabularx}
	\newpage
	\addtocounter{table}{-1}
	
	\begin{tabularx}{\linewidth}{CCCC}
		\pi_g
		& \Gamma^g 
		&
		\begin{matrix}
			\#\text{ Cosets}
		\end{matrix} 
		&   F_{g}(\tau) 
		\\
		\midrule\endhead
	2^36^3& A_2(2)^{\oplus 2}\oplus A_2(-2) 
		& \begin{matrix}
			\circ
			\end{matrix}   
			& -\frac{1}{4}\E_2-\frac{1}{4}\E_3+\frac{1}{6}\E_4+\frac{3}{4}\E_6-\frac{1}{2}\E_{12}
		\\ 
		\rowcolor{gray!11}{} 
	6^{4}&  D_4(3)  
		& \begin{matrix} n=1 & \updownarrow,\updownarrow \\\\ n>1,\ n\notin 3\ZZ & \updownarrow\\\\ n\in 3\ZZ & \circ \end{matrix}   
			& \begin{matrix}
			2\eta[1^22^23^26^{-2}]\\
			\hline
			2\eta[1^52^{-1}3^16^{-1}]\\
			2\eta[1^52^{-1}3^16^{-1}]+36\eta[6^4]
		\end{matrix}
		\\
	1^37^3&  \Gamma^{1,1}\oplus \Gamma^{1,1}(7)\oplus \left[\begin{smallmatrix}
			4 & 1\\ 1& 2
		\end{smallmatrix}\right] 
		&\begin{matrix}
			\circ
			\end{matrix}   
			& -\frac{7}{24}\E_7
		\\
		\rowcolor{gray!11}{} 
	1^22^14^18^2&  \Gamma^{1,1}\oplus \Gamma^{1,1}(8)\oplus \left[\begin{smallmatrix}
			2 & 0\\ 0& 4
		\end{smallmatrix}\right] 
		&\begin{matrix}
			\circ
			\end{matrix}   
		& \frac{1}{6}\E_4-\frac{1}{3}\E_8
		\\
	1^42^{-2}4^{-2}8^4&  D_4(2) 
		&  \begin{matrix} n\notin 2\ZZ & \updownarrow\\ n\in 2\ZZ & \circ\end{matrix}  
		&  \begin{matrix} \frac{1}{3}\E_2-\frac{2}{3}\E_4 \\ -\frac{5}{6}\E_2+\frac{1}{2}\E_4-\frac{1}{3}\E_8\end{matrix}
		\\
		\rowcolor{gray!11}{} 
	1^{-2}2^34^18^2 & \ZZ(4)\oplus A_3^*(8)
		&  \begin{matrix} n\in 1+8\ZZ &\updownarrow\\ n\notin 1+8\ZZ & \circ \end{matrix}  
		&  \begin{matrix} n\neq 1+6\ZZ\\ n\neq 1+8\ZZ\end{matrix}  
		\begin{matrix}
			-\frac{1}{3}\E_2+\frac{1}{6}\E_4-\frac{1}{3}\E_8\mp 16\sqrt{2}\eta[1^{-2}2^34^18^2]
		\end{matrix}
		\\
	2^44^{-4}8^4 & \ZZ(4)^{\oplus 4} 
		& \begin{matrix} n\notin 2\ZZ & \circ,\circ \\\\ n\in 2\ZZ & \circ\end{matrix}  
			& -\frac{1}{6}\E_4+\frac{1}{2}\E_8-\frac{1}{3}\E_{16}\pm 
		8\eta[2^44^{-4}8^4]
		\\
		\rowcolor{gray!11}{} 
	4^28^2  & \left[\begin{smallmatrix}
			4 & 0 & 0 & 0 \\
			0 & 4 & 0 & 0 \\
			0 & 0 & 8 & 0 \\
			0 & 0 & 0 & 8
		\end{smallmatrix}\right] 
		& \begin{matrix} n=1 & \updownarrow,\updownarrow\\\\ n>1,\ n\notin 4\ZZ & \updownarrow\\\\ n\in 4\ZZ & \circ \end{matrix}  
		& \begin{matrix}
			16\eta[4^48^{-4}16^4]+2\eta[2^44^28^{-2}] - 8\eta[4^28^2]\\
			16\eta[4^48^{-4}16^4]+2\eta[2^44^28^{-2}] + 24\eta[4^28^2]\\
			\hline
			2\eta[2^44^28^{-2}]
		\end{matrix}
		\\
	1^33^{-2}9^3 &  A_2\oplus A_2(3)
		& \begin{matrix} n=1 & \circ, \circ \\ n>1 & \circ\end{matrix}  
		& -\frac{1}{8}\E_3-\frac{3}{16}\E_9\pm\frac{9}{2}\eta[1^33^{-2}9^3]
		\\
		\rowcolor{gray!11}{} 
	1^{2}2^15^{-2}10^3 &A_4(2)   
		& \begin{matrix} n=1 & \updownarrow\\ n>1 & \circ\end{matrix} 
		& \frac{1}{24}\E_2-\frac{5}{24}\E_{10}\pm 2\sqrt{5}\eta[1^22^15^{-2}10^3]
		\\
	1^{3}2^{-2}5^110^2  &  A_4^*(5) 
		& \begin{matrix} n\in 1+5\ZZ & \updownarrow\\ n\notin 1+5\ZZ & \circ \end{matrix} 
			& -\frac{7}{24}\E_2+\frac{5}{48}\E_5-\frac{5}{24}\E_{10}\mp\frac{5\sqrt{5}}{2}\eta[1^32^{-2}5^110^2]
		\\
		\rowcolor{gray!11}{} 
	1^{-2}2^35^210^1 & A_4^*(10) 
		& \begin{matrix} n\in 1+5\ZZ & \updownarrow\\ n\notin 1+5\ZZ & \circ \end{matrix}  
			& -\frac{7}{24}\E_2-\frac{5}{24}\E_{10}\mp 10\sqrt{5}\eta[1^{-2}2^35^210^1]
		\\
	2^210^2 &  \left[\begin{smallmatrix}
					6 & 4 & 0 & 0\\  4 & 6 & 0& 0\\ 0&0& 6&4\\ 0& 0& 4& 6
				\end{smallmatrix}\right],\ \left[\begin{smallmatrix}
					2 & 0 & 0 & 0\\  0 & 2 & 0& 0\\ 0&0& 10&0\\ 0& 0& 0& 10
				\end{smallmatrix}\right]
		& \begin{matrix} n=1 &  \updownarrow, \circ,  \circ, \circ\\ n>1 & \circ \end{matrix} 
		&
		\begin{matrix}
			-\frac{1}{12}\E_2+\frac{1}{18}\E_4-\frac{5}{36}\E_5+\frac{5}{12}\E_{10}-\frac{5}{18}\E_{20}-\frac{20}{3}
			\eta[2^210^2]\\
			-\frac{1}{12}\E_2+\frac{1}{18}\E_4-\frac{5}{36}\E_5+\frac{5}{12}\E_{10}-\frac{5}{18}\E_{20}+\frac{40}{3}\eta[2^210^2]
		\end{matrix}  
		\\
		\rowcolor{gray!11}{} 
	1^211^2&  \left[\begin{smallmatrix}
			4 & 2 & 1 & 1 \\
			2 & 4 & 0 & 1 \\
			1 & 0 & 4 & 2 \\
			1 & 1 & 2 & 4
		\end{smallmatrix}\right],\left[\begin{smallmatrix}
			2 & 0 & 1 & 0 \\
			0 & 2 & 0 & 1 \\
			1 & 0 & 6 & 0 \\
			0 & 1 & 0 & 6
		\end{smallmatrix}\right],\left[\begin{smallmatrix}
			2 & 1 & 1 & 1 \\
			1 & 2 & 0 & 1 \\
			1 & 0 & 8 & 4 \\
			1 & 1 & 4 & 8
		\end{smallmatrix}\right] 
		& 
		\begin{matrix} n=1 & \updownarrow, \circ ,  \circ \\ n>1 & \circ\end{matrix}
		& \begin{matrix}
			-\frac{11}{60}\E_{11} -\frac{22}{5}\eta[1^211^2]\\
			-\frac{11}{60}\E_{11} + \frac{33}{5}\eta[1^211^2]
		\end{matrix}
		\\
	1^{2}2^{-2}3^24^{2}6^{-2}12^2  &  A_2(2)\oplus A_2(2)
		& \begin{matrix} n=1 & \updownarrow\\ n>1 & \circ\end{matrix}  
		& \begin{matrix} \frac{1}{6}\E_2+\frac{1}{4}\E_3-\frac{1}{2}\E_6 \\ 
			-\frac{13}{12}\E_2-\frac{1}{4}\E_3+\frac{1}{2}\E_4+\frac{3}{4}\E_6-\frac{1}{2}\E_{12}\end{matrix}
		\\
		\rowcolor{gray!11}{} 
	1^{1}2^23^14^{-2}12^2  &  A_2\oplus \ZZ(6)^{\oplus 2}
		& \begin{matrix}  n=1 & \circ, \circ\\ n>1 & \circ\end{matrix}  
			& -\frac{1}{24}\E_2+\frac{1}{12}\E_4+\frac{1}{8}\E_6-\frac{1}{4}\E_{12}\pm 3\sqrt{3}\eta[1^12^23^14^{-2}12^2]
		\\
	1^{2}3^{-2}4^16^212^1  &  A_2(4)\oplus \ZZ(2)^{\oplus 2}
		& \begin{matrix}
			n=1 & \circ, \circ\\ n>1 & \circ
			\end{matrix}  
			& -\frac{1}{12}\E_2-\frac{1}{4}\E_3+\frac{1}{12}\E_4+\frac{1}{4}\E_6-\frac{1}{4}\E_{12}\pm 4\sqrt{3}\eta[1^23^{-2}4^16^212^1]
		\\
		\rowcolor{gray!11}{} 
	1^{-2}2^23^24^112^1 & \ZZ(6)^{\oplus 2}\oplus A_2(4) 
		& \begin{matrix} n\in 1+4\ZZ & \updownarrow\\ n\notin 1+4\ZZ & \circ\end{matrix}  
		& 
		-\frac{5}{12}\E_2-\frac{1}{4}\E_3+\frac{1}{12}\E_4+\frac{1}{4}\E_6-\frac{1}{4}\E_{12}\mp 12\sqrt{3}\eta[1^{-2}2^23^24^112^1]
		\\
	2^14^16^112^1& \left[\begin{smallmatrix}
			4 & 2 & 0 & 0\\ 2 & 4 & 0 & 0\\ 0 & 0 & 8 & 4\\ 0& 0& 4& 8
		\end{smallmatrix}\right] 
		& 
		\begin{matrix} n=1 & \updownarrow,\circ,\circ\\ n>1 & \circ \end{matrix}
			& \begin{matrix}
			\frac{1}{24}\E_2-\frac{1}{8}\E_4-\frac{1}{8}\E_6+\frac{1}{12}\E_8+\frac{3}{8}\E_{12}-\frac{1}{4}\E_{24} - 6\eta[2^14^16^1 12^1]\\
			\frac{1}{24}\E_2-\frac{1}{8}\E_4-\frac{1}{8}\E_6+\frac{1}{12}\E_8+\frac{3}{8}\E_{12}-\frac{1}{4}\E_{24} +18\eta[2^14^16^1 12^1]
		\end{matrix}
		\\
		\rowcolor{gray!11}{} 
	1^12^17^114^1&  \left[\begin{smallmatrix} 4 & 1 & 1 & 0 \\
			1 & 4 & 0 & 1 \\
			1 & 0 & 4 & -1 \\
			0 & 1 & -1 & 4\end{smallmatrix}\right],\left[\begin{smallmatrix}
			2 & 0 & 1 & 1 \\
			0 & 2 & 1 & 1 \\
			1 & 1 & 8 & 1 \\
			1 & 1 & 1 & 8
		\end{smallmatrix}\right],\left[\begin{smallmatrix}
			2 & 1 & 0 & 0 \\
			1 & 4 & 0 & 0 \\
			0 & 0 & 4 & 2 \\
			0 & 0 & 2 & 8
		\end{smallmatrix}\right] 
		&\begin{matrix} n=1 & \updownarrow,  \circ ,  \circ \\ n>1 & \circ \end{matrix}
		& 
		\begin{matrix}
			\frac{1}{36}\E_2+\frac{7}{72}\E_7-\frac{7}{36}\E_{14}-\frac{14}{3}\eta[1^12^17^114^1]\\
			\frac{1}{36}\E_2+\frac{7}{72}\E_7-\frac{7}{36}\E_{14}+\frac{28}{3}\eta[1^12^17^114^1]
		\end{matrix}
		\\
	1^13^15^115^1 & \left[\begin{smallmatrix} 4 & 2 & 1 & 1 \\
			2 & 4 & -1 & 2 \\
			1 & -1 & 6 & 2 \\
			1 & 2 & 2 & 6 \end{smallmatrix}\right],
		\left[\begin{smallmatrix}
			2 & 1 & 0 & 0 \\
			1 & 2 & 0 & 0 \\
			0 & 0 & 10 & 5 \\
			0 & 0 & 5 & 10
		\end{smallmatrix}\right],
		\left[\begin{smallmatrix}
			2 & 0 & 0 & 1 \\
			0 & 4 & 1 & 0 \\
			0 & 1 & 4 & 0 \\
			1 & 0 & 0 & 8
		\end{smallmatrix}\right]  
		& \begin{matrix}n=1 &  \updownarrow,  \circ ,  \circ\\ n>1 & \circ \end{matrix}
		& 
		\begin{matrix}
			\frac{1}{32}\E_3+\frac{5}{96}\E_5-\frac{5}{32}\E_{15}-\frac{15}{4}\eta[1^13^15^115^1]\\
			\frac{1}{32}\E_3+\frac{5}{96}\E_5-\frac{5}{32}\E_{15}+\frac{45}{4}\eta[1^13^15^115^1]
		\end{matrix}
		\\
		\bottomrule \caption{\small The first column contains all possible Frame shapes $\pi_g$ of a symmetry $g$ of a NLSM on K3. For each of them, in the second column we report the possible fixed sublattices $(\Gamma^{4,20})^g$ in the action of $g$ on $\Gamma^{4,20}$, that were derived in \cite{Persson:2015jka}.  The third column contains the number of primitive embeddings of the lattice $\langle v\rangle\oplus \Gamma_g$, with $v^2=2n-2$, in $\Gamma^{5,21}$ up to $O^+(\Gamma^{5,21})$ automorphisms. The techniques used for this computation are the theorems in \cite{MirandaMorrison1,MirandaMorrison2}, and the necessary information for the calculation is contained in 
			in appendix \ref{a:discriminants}. The notation is as follows: for each possible value of $n$, we report a number of circles $\circ$ and double arrows $\updownarrow$; each circle represents a class that is self-conjugate under $O(\Gamma^{5,21})\setminus O^+(\Gamma^{5,21})$ transformations, while the double arrows represent a couple of $O^+(\Gamma^{5,21})$-classes giving rise to a unique $O(\Gamma^{5,21})$-class. When we don't specify the value of $n$, it means that the result is valid for all $n$. The reason for counting these classes is explained  in section \ref{s:conjclass}. The last column contains the possible modular forms $F_g(\tau)$ that determine the twining genus $\phi_g$, see eq.\eqref{twinformula}. The modular forms are expressed in terms of Eisenstein series and products of $\eta$-series; see \cite{Paquette:2017gmb} for the precise notation. In some cases, the possible twining genera $\phi_g$ for a given Frame shape are Jacobi forms with a non-trivial complex-valued multiplier. This happens if and only if the length of the shortest cycle in the Frame shape is $>2$, see \cite{Cheng:2016org}. In this case, there are always at least two different twining genera at $n=1$, with complex conjugate multipliers \cite{Cheng:2016org}. For the Frame shapes $6^4$ and $4^28^2$, where there are more than two candidate twining genera, we separate by a horizontal line the twining genera with different multipliers. For the Frame shape $1^{-4}2^53^46^1$, in the case $n\in 1+3\ZZ$, we were not able to determine whether there are one or two cosets; in any case, if there are two cosets, they correspond to symmetries $g$ being one the inverse of the other, so that the twining genera are the same (see the analogous problem in \cite{Cheng:2016org}). }
		\label{tab:big}
	\end{tabularx}
\end{landscape}

\section{Discussion}\label{s:conclusions}

In this final section, we discuss some possible directions of investigation for future works.

\begin{itemize}
	\item As discussed in section \ref{s:dualities}, it would be useful to have a complete and rigorous characterization of the duality groups $O^+_n$ of SCFTs of $K3^{[n]}$-type, analogous to the one valid for NLSMs on K3 (see \cite{Aspinwall:1996mn}). In particular, it should be possible to prove (based on very basic assumptions on the moduli space $\M_n$, such as the fact that it is Hausdorff), that the group $O_n^+$ always acts by automorphisms of the lattice $L_v$. This result would already give further support to the analysis of section \ref{s:symmorbmoduli}. Determining which precise subgroup of $O(L_v)$ corresponds to the duality group seems more difficult.
	\item The precise location of the locus $\M^{sym}_n$ inside the moduli space $\M_n$ seems to be still an open problem. This is probably related with the precise identification of the holographic duals of symmetric orbifolds (see for example \cite{Argurio:2000tb,Giribet:2018ada,Gaberdiel:2018rqv,Eberhardt:2018ouy,Eberhardt:2019qcl} for older and more recent results on this subject). It would be interesting to combine the results of our article, which arise only from considerations about the boundary CFT, with the analyses in these papers.
	\item The study of symmetric orbifolds leads to some puzzles about the location of singular models in the moduli space.  Some puzzling features of symmetric orbifolds were already stressed in \cite{Seiberg:1999xz}. For example, as observed in \cite{Seiberg:1999xz}, it seems natural to expect that a system of $N$ fundamental strings and one $NS5$-brane (or, dually, an analogous D1-D5 brane system) in the absence of R-R flux be described by a non-linear sigma model on the symmetric orbifold of $N$ copies of K3. On the other hand, the very analysis in \cite{Seiberg:1999xz} suggests that at such a point in the moduli space the SCFT should be singular,  while the symmetric orbifold seems to be perfectly consistent. Thus,  the physical intuition seems to lead to wrong conclusions in this case, but is not clear what the loophole is. Clarifying these issues might lead to a major improvement of our results.
	\item The classification of symmetries in section \ref{s:callifsymm} is conditional to certain assumptions and subject to some restrictions. Can we obtain a more general statement by removing some of these restriction? In particular, if the models corresponding to four-planes $\Pi\subset L_v\otimes \RR$ that are orthogonal to roots $r\in L_v$ , $r^2=-2$, are consistent, one might want to include their groups of symmetries in a more general classification. This problem is superficially similar to the one studied in \cite{Cheng:2016org}, where the results of \cite{K3symm} concerning non-linear sigma models on K3 were extended to include the `singular' K3 models, in order to obtain the full group of symmetries of the type IIA string theory on K3. However, there are some technical difficulties in extending the results of the present paper to include those models -- in particular, it is not obvious that one can always embed the resulting symmetry groups in the groups of automorphisms of Niemeier lattices, which was one of the main tools used in \cite{Cheng:2016org}.
	\item As explained in section \ref{s:elevenandfriends}, our results suggest that there exist a number of highly non-trivial identities among Siegel modular forms that can be written as Borcherds products. From a mathematical perspective, it would be nice if one could prove those identities rigorously, maybe along the lines described in section \ref{s:elevenandfriends}. From a physicists' viewpoint, those Siegel forms represent the generating functions for the multiplicities of 1/4 BPS dyons in compactifications of type II string theory on $K3\times T^2$ or their orbifolds \cite{Strominger:1996sh,Dijkgraaf:1996it,Dijkgraaf:1996xw,gaiotto2005re, Shih:2005uc, shih2006exact,  Jatkar:2005bh,DS, DJS1, DJS2}. The fact that two such generating functions differ only for the constant term means that, for certain  pairs of duality orbits of charges, the 1/4 BPS multiplicities are different only when the electric and magnetic charges $Q$  and $P$ are null and orthogonal $P^2=Q^2=P\cdot Q=0$. It would be interesting to understand the physical meaning of this difference.\\
	These identities might be relevant in trying to understand the relationship between the K3 twining genera and the weak Jacobi forms appearing in Mathieu \cite{EOT,Cheng:2010pq,Gaberdiel:2010ch,Gaberdiel:2010ca,Eguchi:2010fg,Gannon:2012ck} and Umbral moonshine \cite{Cheng:2012tq,Cheng:2013wca,Cheng:2014zpa,DuncanGriffinOno2015}. The Siegel modular forms appear in certain low energy effective couplings in compactifications of type II  superstrings on CHL models \cite{Bossard:2016zdx,Bossard:2017wum,Bossard:2018rlt}. Such effective couplings might be important in trying to relate the moonshine conjectures to low dimensional string models whose symmetry groups are exactly the umbral groups. \cite{Kachru:2016ttg,Zimet:2018dev}.
	Finally, similar identities might exist also for the Siegel modular forms related to the second quantized twisted-twining genera \cite{Persson:2013xpa}, which play a role in generalized moonshine \cite{Gaberdiel:2012gf,Gaberdiel:2013nya,Cheng:2016nto}.
\end{itemize}

\bigskip
{\bf Acknowledgements.}
I would like to thank Sarah Harrison, Shamit Kachru, Natalie Paquette,  Daniel Persson, Anne Taormina, Katrin Wendland, and especially Max Zimet for useful discussions. I would like to thank the organizers and participants of the workshop on Moonshine, organized at ESI Vienna, in September 2018, and of the program Automorphic Structures in String Theory at Simons Center for Geometry and Physics, Stony Brook, in March 2019, where some of the ideas that led to this article were developed. This work is supported by a grant from Programma per Giovani Ricercatori Rita Levi Montalcini, from MIUR.

\appendix
\section*{Appendices}

\section{Moduli spaces}
\subsection{The  moduli space of type II on K3}\label{s:stringsonK3}

In this section, we  describe the moduli space of type IIA and type IIB string theories compactified on a K3 surface. It is useful to consider first the moduli space of supersymmetric NLSM on K3, describing the world-sheet of a fundamental type II superstring in weak string coupling limit. Our main references are \cite{Aspinwall:1996mn,Dijkgraaf:1998gf}.

The moduli space of NLSM on a K3 surface $X$ corresponds to a choice of an hyperk\"ahler structure and a closed B-field on $X$. For fixed volume $V$, the hyperk\" ahler structure is parametrized by a choice of an oriented positive definite 3-plane
\be U=\langle \omega_1,\omega_2,\omega_3\rangle\subset H^2(X,\RR)\cong \RR^{3,19}\ ,
\ee  spanned by three positive pairwise orthogonal $\omega_1,\omega_2,\omega_3\in \RR^{3,19}$.    The 3-plane is spanned by the real and imaginary part of $\Omega\in H^{2,0}(X,\CC)$, and the K\"ahler class of the metric. For example, one can take $\Omega=\omega_2+i\omega_3$ and $\omega_1$ as the K\"ahler class. The volume form can be seen as an element $V\in H^4(X,\RR)$, and it is useful to normalize $\omega_1,\omega_2,\omega_3$ by
\be \frac{1}{2}(\omega_1^2+\omega_2^2+\omega_3^2)=V\ .
\ee The B-field takes values in $H^2(X,\RR)$ (modulo $H^2(X,\ZZ)$). This determines an oriented positive definite\footnote{The metric on $H^*(X,\RR)$ is induced by linearity by Mukai pairing on the lattice $H^*(X,\ZZ)$; this differs from the intersection form by the sign of the $H^0(X,\RR)\oplus H^4(X,\RR)$ component. The lattice $H^*(X,\ZZ)$ is even unimodular of signature $(4,20)$. Notice that all such lattices are isomorphic to $\Gamma^{4,20}=U^{\oplus 4}\oplus E_8\oplus E_8$, where $U$ is generated by $u,v$ with $u^2=v^2=0$, $u\cdot v=1$, and $E_8$ is the root lattice of $E_8$.} $4$-plane $\Pi$ in $H^*(X,\RR)=H^0(X,\RR)\oplus H^2(X,\RR)\oplus H^4(X,\RR)$ spanned by
\be \Pi=\langle 1+B-V+\frac{1}{2}B\wedge B, \omega_1+B\wedge \omega_1, \omega_1+B\wedge \omega_1, \omega_1+B\wedge \omega_1\rangle\subset H^*(X,\RR)\ .
\ee The moduli space of NLSM on K3 is 
\be \M_{K3}=O^+(\Gamma^{4,20})\backslash Gr^+(4,20) \,\setminus \calS \ ,
\ee
where
\be  Gr^+(4,20)=O^+(4,20,\RR)/(SO(4,\RR)\times O(20,\RR))
\ee is the Grassmannian of oriented positive definite $4$-dimensional subspaces of $H^*(X,\RR)\cong \RR^{4,20}$ and
\be \calS:=\{\Pi\in Gr^+(4,20)\mid \Pi\subset u^\perp \text{ for some } u\in H^*(X,\ZZ),\ u^2=-2\}\ ,
\ee is the locus where the NLSM is not well-defined \cite{Aspinwall:1996mn}. In the limit where one approaches the locus $\calS$, some D-brane becomes massless and cannot be decoupled from the  dynamics of the fundamental string, even in the weak string coupling limit.

Type IIA string theory on K3 is perfectly well defined even on the locus $\calS$, and the moduli space of type IIA string theory on K3 is $\M_{IIA}=(O^+(\Gamma^{4,20})\backslash Gr^+(4,20))\times \RR_+$, where $\RR_+$ represents the string coupling. D-branes carry R-R charge taking values in $H^*(X,\ZZ)\cong \Gamma^{4,20}$. A D-brane carrying charge $v\in H^*(X,\ZZ)$, $v^2>0$, has mass proportional to $v_L^2=v^2+v_R^2$, where $v_L$ and $v_R$ are, respectively, the components of $v$ along $\Pi$ and orthogonal to $\Pi$. For a given D-brane charge $v$, there is a locus of attractor moduli in $\M_{K3}$ where the mass is minimized. This means that $v_R=0$, i.e. the four plane $\Pi$ must contain $v$.

For type IIB on K3, besides the metric and B-field on K3 and the string coupling, there are in addition RR moduli $\C\in H^*(X,\RR)$ (modulo $H^*(X,\ZZ)$). These moduli determine an oriented positive definite $5$-plane in $H^*(X,\RR)_{RR}\oplus H^0(X,\RR)_{NS}\oplus H^4(X,\RR)_{NS}$ (there are two copies of $H^0$ and $H^4$, one related to D1 and D5 branes, and the other to fundamental strings and NS5 branes) spanned by
\be (\Pi ,0,\C\cdot \Pi),\quad (\C,1,1/g_s)\quad \in H^*(X,\RR)_{RR}\oplus H^0(X,\RR)_{NS}\oplus H^4(X,\RR)_{NS}\ .
\ee More explicitly, if
\be \C=(\theta,\tilde B,G)\in H^0(X,\RR)\oplus H^2(X,\RR)\oplus H^4(X,\RR)
\ee then the $5$-plane is spanned by
\be \begin{matrix}
	H^*(X,\RR)_{RR}&\oplus& H^0(X,\RR)_{NS}&\oplus& H^4(X,\RR)_{NS}\\ \hline
	1+B-V+\frac{1}{2}B\wedge B,&&  0,&& -G+\tilde B\cdot B -\theta\cdot(\frac{1}{2}B\wedge B-V)
	\ ,\\
	\theta+\tilde B+G,&& 1,&& 1/g_s\\
	\omega_i+B\wedge \omega_i,&&0,&&\tilde B\cdot \omega_i-\theta\cdot (B\wedge \omega_i)
\end{matrix}
\ee where $i=1,2,3$. Consider a charge vector $(v,0,0)\in H^*(X,\ZZ)_{RR}\oplus H^0(X,\RR)_{NS}\oplus H^4(X,\RR)_{NS}$, with $v=(r,0,-p)\in  H^0(X,\ZZ)\oplus H^2(X,\ZZ)\oplus H^4(X,\ZZ)=H^*(X,\ZZ)_{RR}$, $r,p>0$. This charge corresponds to bound state of D1-D5 branes, carrying $r$ units of D5-brane charge and $p$ units of D1-brane charge. The attractor manifold for this system is given by the Grassmannian of five planes $\Pi$ that contain $v$. This means that $v$ must be a linear combination of the five vectors above. This gives the conditions
\be rB+\sum_i\zeta_i\omega_i=0\ ,\qquad V+\frac{1}{2}B^2=\frac{p}{r}\ ,\qquad rG+p\theta=0\ .
\ee The first condition means that the NS B-field must be contained in the $3$-dimensional subspace $\langle \omega_1,\omega_2,\omega_3\rangle\subset H^2(X,\RR)$, i.e. must be self-dual. The second condition fixes the volume $V$ in terms of the charges; for $B=0$, this is just the ratio $p/r$. Finally, the third condition implies that the RR moduli must have the form
\be \C=\beta (r,0,p)+(0,\tilde B,0)\in H^*(S,\RR)/H^*(S,\ZZ),
\ee where $\beta\in \RR$ and $\tilde B$ is an arbitrary class in $H^2(S,\RR)/H^2(S,\ZZ)$.

\subsection{Non-linear sigma models on $K3^{[n]}$}\label{a:Hilbertscheme}

The worldvolume theory of the D1-D5 system wrapping $K3\times S^1\times \RR_t$, in the limit where the volume of the K3 surface is much smaller than the volume of the circle $S^1$ ($V^{1/4}\ll R_{S^1}$), is described by a non-linear sigma model with target space a hyperk\"ahler manifold of dimension $4n$, $n=rp>1$, in the same moduli space of the symmetric product $\Sym^n S$ of the K3 surface $S$. Mathematically, these hyperk\"ahler manifolds can be described as deformations of $S^{[n]}$, the Hilbert scheme of $n$ points of $S$. The latter is a crepant resolution $\pi:S^{[n]}\to \Sym^nS$ of $\Sym^nS$, which is singular at the `small diagonal', the locus of points $(x_1,\ldots,x_n)\in Sym^n S$ where $x_i=x_j$ for some $i\neq j$.  

Let $X$ be a generic deformation of some Hilbert scheme $S^{[n]}$. The integral second cohomology $H^2(X,\ZZ)$ can be endowed with a non-degenerate even bilinear form (the Beaville-Bogomolov form); with this bilinear form $H^2(X,\ZZ)$ is isomorphic to an even lattice of signature $(3,20)$
\be H^2(X,\ZZ)\cong \Gamma^{3,19}\oplus \langle 2-2n\rangle\ .
\ee  
A non-linear sigma model on a hyperk\"ahler manifold $X$ of type K3$^{[n]}$, $n>1$, is specified by a hyperk\"ahler structure on $X$ and a closed B-field. For fixed volume $V$, the hyperk\" ahler structure is parametrized by a choice of an oriented positive definite 3-plane
\be U=\langle \omega^{[n]}_1,\omega^{[n]}_2,\omega^{[n]}_3\rangle\subset H^2(X,\RR)\cong \RR^{3,20}\ ,
\ee  spanned by three positive pairwise orthogonal $\omega^{[n]}_1,\omega^{[n]}_2,\omega^{[n]}_3\in \RR^{3,20}$. The B-field $B^{[n]}$ takes values in $H^2(X,\RR)$ (modulo $H^2(X,\ZZ)$).
When $X$ is a Hilbert scheme $X=S^{[n]}$ of $n$ points on a K3 surface $S$, there is a canonical lattice isomorphism (which extends by linearity to an isomorphism of real and complex cohomology)
\be H^2(S^{[n]},\ZZ)\cong H^2(S,\ZZ)\oplus \ZZ \delta\ ,
\ee where the first summand  is spanned by pullbacks $\pi^* f^{(n)}$ of symmetrization $f^{(n)}$ of integral cohomology classes $f\in H^2(S,\ZZ)$ on $S$, while the second summand is spanned by a class $\delta \in H^2(S^{[n]},\ZZ)$, $\delta^2=2-2n$, such that $2\delta$ is Poincar\'e dual to the exceptional divisor. The complex structure on $S$ induces a complex structure on $S^{[n]}$: if $\Omega=\omega_2+i\omega_3$ spans $H^{2,0}(S)$, then its image $\Omega^{[n]}=\pi^*\Omega^{(n)} \in H^{2,0}(S^{[n]})$ (the image under the isomorphism above) spans $H^{2,0}(S^{[n]})$. Note that, with respect to this complex structure, the class $\delta$ is always contained in the $H^{1,1}(S^{[n]})$ component of the Hodge cohomology. A generic manifold $X$ in the moduli space can be obtained by the deforming the complex structure of the Hilbert scheme, so that $H^{2,0}(X)$ is spanned by
\be \Omega^{[n]}=\pi^*\Omega^{(n)}+(\alpha_1+i\alpha_2)\delta\ ,
\ee for some $\alpha_1,\alpha_2\in \RR$. A K\"ahler class $\omega_1\in H^{1,1}(X)$ for the K3 surface determines a $1$-parameter family of Kahler classes on its Hilbert scheme $S^{[n]}$
\be \omega^{[n]}=\pi^* \omega_1^{(n)}+\lambda u\ ,\qquad \lambda\in \RR_+\ .
\ee In the limit $\lambda\to 0$, one obtains the singular geometry corresponding to the symmetric product $\Sym^n S$. Similarly,  a B-field $B$ on $S$ determines a $1$-parameter family of B-fields on $S^{[n]}$
\be B^{[n]}=\pi^* B^{(n)}+\beta  u\ ,\qquad \beta\in \RR/\ZZ\ .
\ee Here, $\beta$ represents the B-field flux through the the exceptional divisor of $S^{[n]}$. Finally, the volume of $S^{[n]}$ can be fixed to be simply the $n$-th power $V^n$ of the volume of $S$.


\section{Some basic facts about lattices}\label{a:lattices}

In this appendix, we review some useful facts about lattices. See \cite{ConwaySloane}, \cite{Nikulin} for more information.

A lattice $L$ of rank $n$ (over $\ZZ$) is a free $\ZZ$-module $L\cong \ZZ^n$ of rank $n$, with a quadratic form $Q:L\to \RR$, $$Q(av)=a^2Q(v)\ ,\qquad a\in \ZZ\ ,$$ and a symmetric $\ZZ$-bilinear form $B:L\times L\to \RR$  with
\be Q(v)=B(v,v)\qquad v\in L
\ee so that
\be  B(v,w):=\frac{1}{2}(Q(v+w)-Q(v)-Q(w))\ ,\qquad v,w\in L\ .
\ee  We will also use the notation $v\cdot w$ or $(v,w)$ for the bilinear form $B$. A vector $v\in L$ is called primitive if it is not an integral multiple $mw$, $m>1$,  of another vector $w\in L$. An embedding $S\hookrightarrow L$ of a lattice $S$ into a lattice $L$ is called primitive if $L/S$ is free; equivalently, if every primitive vector of $S$ is mapped to a primitive vector of $L$.

One can think of a lattice $L$ as being embedded in the real space $L\otimes \RR$  with bilinear form induced by extending $B$ by $\RR$-linearity. The signature of $L$ is the signature of the bilinear form $B$ on the real vector space $L\otimes \RR$. The lattice is non-degenerate if the only vector $v\in L$ such that $B(v,w)=0$ for all $w$ is $v=0$; in the following,  we implicitly assume that lattices are non-degenerate.  

The lattice $L$ is called integral if the bilinear form $B$ takes values in $\ZZ$, and even if the quadratic form $Q$ takes values in $2\ZZ$; notice that even lattices are also integral. The dual lattice $L^*$ is the lattice $\Hom_\ZZ(L,\ZZ)$ of $\ZZ$-linear homomorphisms from $L$ to $\ZZ$, and can be identified with a lattice in $L\otimes \RR$, by
\be L^*=\{x\in L\otimes \RR\mid B(x,v)\in \ZZ,\ \forall z\in L\}\ .
\ee A lattice $L$ is called self-dual (or unimodular), if $L^*\cong L$. An even unimodular lattice of signature $(a,b)$ exists if and only if $a-b\equiv 0\mod 8$. Two indefinite even unimodular lattices  with the same signature are always isomorphic; we will denote by $\Gamma^{a,b}$ such a lattice. Given a basis of generators $v_1,\ldots,v_n$ of $L$, the Gram matrix of $L$ is the $n\times n$ real symmetric matrix with entries $B_{ij}=B(v_i,v_j)$. Notice that the Gram matrix of $L^*$ is the inverse of the one of $L$.   The group $O(L)$ of automorphisms of the lattice $L$ is the subgroup of the real orthogonal group on $\RR^n$ preserving the lattice $L$.

If $L$ is integral, then $L\subset L^*$, and $A_L:=L^*/L$ is a finite abelian group called the discriminant group. The order $|A_L|$ of $A_L$ is called the discriminant of $L$, and is equal to the determinant of the Gram matrix. We will mostly be interested in discriminant groups of even lattices. The discriminant group $A_L$ of an even lattice $L$ has a quadratic form (the discriminant form) $q_L:A_L\to \QQ/2\ZZ$ induced by the quadratic form on $L^*$ and $L$.  The group $O(A_L)\equiv O(q_L)$ is the group of automorphisms of the abelian group $A_L$ that preserve the quadratic form $q_L$. There is a natural homomorphism $O(L)\to O(q_L)$, which is in general neither injective nor surjective. The kernel of this homomorphism is the subgroup $O^0(L)\subset O(L)$ acting trivially on $L^*/L$. As shown in \cite{Nikulin} (theorem 1.14.2), if an even lattice $L$ contains a copy of the even unimodular lattice $\Gamma^{1,1}$, then $O(L)\to O(q_L)$ is surjective. Two even lattices have the same signature and isomorphic discriminant forms if and only if they are in the same genus (see \cite{Nikulin} for a proof and \cite{ConwaySloane} for the definition of genus). More information about discriminant forms will be given in appendix \ref{a:discriminants}.

Let $M$ be an even lattice, $L$ be an even unimodular lattice, and $M\hookrightarrow L$ a primitive embedding. Then, if $K:=M^\perp \cap L$ is the sublattice of $L$ orthogonal to $M$ (here, identified with the image in $L$ of the primitive embedding), one has the inclusions
\be M\oplus  K\subseteq L\subseteq M^*\oplus K^*\ .
\ee
Furthermore, there is an isomorhpism $\gamma:A_M\stackrel{\cong}{\rightarrow} A_K$ of abelian groups that inverts the discriminant form, i.e. 
\be q_K\circ \gamma=-q_M\ ,
\ee and such that 
\be\label{glue} L=\{(x,y)\in M^*\oplus K^*\mid \bar y\equiv \gamma(\bar x)\}\ ,
\ee where $\bar x$ and $\bar y$ are the images of $x\in M^*$ and $y\in K^*$ in the discriminant groups $A_M=M^*/M$ and $A_K=K^*/K$, respectively (see \cite{Nikulin}, proposition 1.6.1). Vice versa, if $M$ and $K$ are two even lattices for which there exists an isomorphism $\gamma:A_M\stackrel{\cong}{\rightarrow} A_K$ that inverts the discriminant form, then the lattice $L$ defined by \eqref{glue} is even unimodular and contains $M$ and $K$ as primitive sublattices. In this case, we say that $M$ and $K$ have been `glued' to obtain $L$.

 Let $L$ be an even lattice. Given a primitive vector $x\in L$, the divisor $\Div(x)$ is the maximal positive integer such that $\frac{x}{\Div(x)}\in L^*$ (and this implies that $\frac{x}{\Div(x)}$ is primitive in $L^*$). Thus $\frac{x}{\Div(x)}$ modulo $L$ is a generator of order $\Div(x)$ of the discriminant group. This implies that $\Div(x)$ divides the order of the discriminant group. Furthermore, since $\frac{x}{\Div(x)}\in L^*$, we have $\frac{(x,x)}{\Div(x)}\in \ZZ$, so that $\Div(x)$ divides also $(x,x)$.

Now, given $v\in L$ of norm $2n-2$, one has that $\Div(v)$ divides both $2n-2$ and the order of $A_L$. In particular, when $2n-2$ is coprime to $|A_L|$, one has $\Div(v)=1$.

Clearly, a necessary condition for two primitive vectors $u,v\in L$ to be related by some automorphism in $O(L)$ is that they have the same length and the same divisor. For certain lattices $L$, this condition is also sufficient. The following criterion is well-known.

%

\begin{proposition} (Eichler criterion). Let $L$ be an even lattice containing two copies $U, U_1\cong \Gamma^{1,1}$ of the even unimodular lattice of signature $(1,1)$, so that $L=U\oplus U_1\oplus L_0$. Then any two vectors $u,v\in L$ that are primitive, with the same length $(u,u)=(v,v)$ and such that $\frac{u}{\Div(u)}\equiv \frac{v}{\Div(v)}\mod L$ are related by an $O^+(L)$ transformation acting trivially on the discriminant form.
\end{proposition}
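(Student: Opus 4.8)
The plan is to realise the required transformation as a product of \emph{Eichler transvections}. Given an isotropic vector $u\in L$ (so $(u,u)=0$) and a vector $a\in L$ orthogonal to $u$, set
\be
E_{u,a}(x)\ :=\ x-(a,x)\,u+(u,x)\,a-\tfrac12(a,a)(u,x)\,u\ ,\qquad x\in L\otimes\QQ\ .
\ee
First I would record the three elementary properties of these maps: (i) $E_{u,a}\in O(L)$ --- it is an isometry (a short computation using $(u,u)=(u,a)=0$), it is invertible with inverse $E_{u,-a}$, and it preserves $L$ because $a,u\in L$ and $\tfrac12(a,a)\in\ZZ$ ($L$ being even); (ii) $E_{u,a}$ acts trivially on the discriminant group $A_L=L^*/L$, since for $x\in L^*$ both $(a,x)$ and $(u,x)$ are integers, so $E_{u,a}(x)-x\in \ZZ a+\ZZ u\subseteq L$; (iii) $E_{u,a}\in O^+(L)$, because the path $t\mapsto E_{u,ta}$, $t\in[0,1]$, lies in $O(L\otimes\RR)$ and connects $E_{u,a}$ to the identity, so $E_{u,a}$ lies in the identity component and in particular preserves orientations of maximal positive-definite subspaces. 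Hence the subgroup $E(L)\subseteq O(L)$ generated by all $E_{u,a}$ is contained in $O^+_0(L):=O^+(L)\cap O_0(L)$.

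Next I would reduce the proposition to a transitivity statement. Since every element of $E(L)$ fixes $(v,v)$, fixes $\Div(v)$, and fixes the class of $\tfrac{v}{\Div(v)}$ in $A_L$ --- and since that class has order exactly $\Div(v)$ --- two primitive vectors $u,v$ satisfying the hypotheses of the proposition lie in the same set
\be
P(N,\xi)\ :=\ \{\,v\in L \text{ primitive}\ \mid\ (v,v)=N,\ \tfrac{v}{\Div(v)}\equiv\xi \ \ (\mathrm{mod}\ L)\,\}\ ,
\ee
with $N=(u,u)$ and $\xi\in A_L$. So it suffices to show that $E(L)$ acts transitively on each nonempty $P(N,\xi)$. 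For this I would fix the decomposition $L=U\oplus U_1\oplus L_0$ with hyperbolic bases $U=\langle e,f\rangle$, $U_1=\langle e_1,f_1\rangle$; note $L^*=U\oplus U_1\oplus L_0^*$, so $A_L\cong A_{L_0}$, and choose a convenient lift $\eta\in L_0^*$ of $\xi$. Using $\bigl(\tfrac{v}{\Div(v)},\tfrac{v}{\Div(v)}\bigr)\equiv(\eta,\eta)\ (\mathrm{mod}\ 2\ZZ)$ one checks that, with $d=\operatorname{ord}(\xi)$, the integer $m:=\bigl(N-d^2(\eta,\eta)\bigr)/(2d^2)$ is well-defined and that, for a suitable choice of $\eta$, the vector $v_{N,\xi}:=d\,e_1+m d\,f_1+d\,\eta$ is a primitive element of $P(N,\xi)$ depending only on $N$ and $\xi$. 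The claim then becomes: every $v\in P(N,\xi)$ is $E(L)$-equivalent to $v_{N,\xi}$.

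I would prove this by a normal-form reduction, using the two hyperbolic planes to clear components one at a time. Writing $v=\alpha e+\beta f+w$ with $w\in U_1\oplus L_0$, the basic moves are: $E_{e,c}$ for $c\in U_1\oplus L_0$ translates $w\mapsto w+\beta c$ and changes only $\alpha$; $E_{f,c}$ translates $w\mapsto w+\alpha c$ and changes only $\beta$; and transvections $E_{e,c}$, $E_{f,c}$ with $c$ having a component in $U_1$ shuffle material between the two hyperbolic planes. A Euclidean-algorithm-style sequence of such moves, using primitivity of $v$, eliminates the $U$-component, bringing $v$ into $U_1\oplus L_0$ with $U$ still free for pivoting; repeating the same moves with the roles of $U$ and $U_1$ interchanged normalises the $U_1$-part to $d\,e_1+m d\,f_1$ and the $L_0$-part to $d\,\eta$ --- using that $(v,v)$, $\Div(v)$ and $\tfrac{v}{\Div(v)}\bmod L$ are invariants that pin down the endpoint. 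Since $E(L)\subseteq O^+_0(L)$, the element carrying $u$ to $v$ is then an orientation-preserving automorphism of $L$ acting trivially on the discriminant form, as required.

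The main obstacle is exactly this normal-form reduction when $d=\Div(v)>1$. In that regime no transvection can make $v$ pair to $1$ with a lattice vector, so one cannot simply "split off" a hyperbolic plane containing $v$; instead one must carry the discriminant datum $\tfrac{v}{\Div(v)}\in L_0^*/L_0$ through the elimination, and the content of the hypothesis $U\oplus U_1\subseteq L$ is precisely that two orthogonal copies of $\Gamma^{1,1}$ supply just enough room to do so. This is the classical core of Eichler's lemma (cf.\ \cite{Nikulin}); the case $d=1$ is considerably easier, since one can then locate an isotropic $e'\in L$ with $(e',v)=1$ and work inside the unimodular hyperbolic plane it spans together with $v$.
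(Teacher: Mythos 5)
The paper offers no proof of this proposition: it defers entirely to \cite{GHS2009}, Prop.~3.3. Your proposal reconstructs the strategy of that reference -- generate the required isometry from Eichler transvections $E_{u,a}$ -- and the parts you actually carry out are correct and complete: the verification that $E_{u,a}$ preserves $L$ (using that $L$ is even, so $\tfrac12(a,a)\in\ZZ$), acts trivially on $L^*/L$ (so a fortiori on the discriminant form), and lies in the identity component of $O(L\otimes\RR)$ and hence in $O^+(L)$; the reduction of the statement to transitivity of $E(L)$ on each set $P(N,\xi)$, since norm, divisor and the class of $v/\Div(v)$ are $E(L)$-invariants and the hypotheses of the proposition place $u$ and $v$ in the same $P(N,\xi)$; and the construction of the representative $v_{N,\xi}=d\,e_1+md\,f_1+d\eta$, whose integrality condition on $m$ holds automatically for any lift $\eta$ whenever $P(N,\xi)\neq\emptyset$, because $N/d^2-(\eta,\eta)\in2\ZZ$ there.

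The gap is the transitivity statement itself. The sentence about ``a Euclidean-algorithm-style sequence of such moves'' that ``eliminates the $U$-component'' and then ``normalises the $U_1$-part'' \emph{is} the theorem: it is exactly here that primitivity, the presence of two \emph{orthogonal} hyperbolic planes, and the divisor bookkeeping must all be used, and you explicitly defer this step (``the classical core of Eichler's lemma''), so the proposal is a roadmap rather than a proof. Moreover, the two-pass structure as described would need repair: once $v$ has been pushed into $U_1\oplus L_0$, the transvections $E_{e_1,a},E_{f_1,a}$ available to normalise it have $a\in U\oplus L_0$ and generically reintroduce a $U$-component, so one cannot literally eliminate one plane and then independently normalise the other. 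The standard arguments instead interleave the two planes, e.g.\ first using pivots through $U_1$ to run the Euclidean algorithm on the pairings $(v,e),(v,f)$ and the values $(v,x)$ for $x\in U_1\oplus L_0$ until $(v,e)=\Div(v)$, and only then clearing the remaining components; this is the step that must be written out (or cited, as the paper does) for the argument to be complete.
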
  See \cite{GHS2009} proposition 3.3 for a proof.  We notice that a primitive null vector $e\in L$ can be a generator of a hyperbolic lattice $U\subset L$ (i.e. there exist a null $f\in L$ such that $(e,f)=1$) if and only if $\Div(e)=1$. 

Let $A_L=L^*/L$ be the discriminant group of the even lattice $L$; there is an orthogonal (with respect to the induced bilinear form $A_L\times A_L\to \QQ/\ZZ$) decomposition $A_L=\oplus_{p\text{ prime}} A_L^{(p)}$ of $A_L$ into its Sylow $p$-subgroups $A_L^{(p)}$ (the maximal subgroups of order a power of $p$). Let $l(A_L^{(p)})$ be the minimal number of generators of $A_L^{(p)}$. By Corollary 1.13.5 of \cite{Nikulin}, for a lattice $L$ of signature $(t^+,t^-)$ with $t^+\ge 1$, $t^-\ge 1$ and such that $l(A_L^{(p)})\le t^++t^--3$ for all $p$, then the lattice is of the form $L=U+T$, where $U$ is the hyperbolic lattice. This means that if $t^+\ge 2$, $t^-\ge 2$ and $l(A_L^{(p)})\le t^++t^--5$ then $L=U\oplus U_1\oplus T$ and one can apply the Eichler criterion.

\section{Some technical proofs.}
\subsection{No symmetries acting trivially on exactly marginal operators.}\label{a:nokernel}

In this section, we prove the statement in section \ref{s:callifsymm}, that the K3 NLSM $\C'$ with `the largest symmetry group' \cite{Gaberdiel:2013psa}, and all its symmetric products $\Sym^n \C'$, admit no symmetries acting trivially on the space of exactly marginal operators.

Let us first prove the statement above for the case $n=1$, i.e. for the model $\C'$ itself. This model is a rational CFT, with the bosonic chiral and antichiral algebras both isomorphic to $(\hat{su}(2)_1)^{\oplus 6}$. Let us denote by $[0]$ and $[1]$ the modules of $\hat{su}(2)_1$ with conformal weights $0$ and $1/4$ and by $[a_1,\ldots,a_6;b_1,\ldots,b_6]\in \ZZ_2^6\oplus\ZZ_2^6$ the representations of $(\hat{su}(2)_1)^{\oplus 6}\oplus (\hat{su}(2)_1)^{\oplus 6}$. Then, the theory contains one copy of each of the representations $[a_1,\ldots,a_6;b_1,\ldots,b_6]$ with $\sum_ia_i\equiv 0\mod 2$ and either $a_i=b_i$ for all $i$, or $a_i=b_i+1$ for all $i$. The holomorphic and antiholomorphic supercurrents are contained in the $[1,1,1,1,1,1;0,0,0,0,0,0]$ and $[0,0,0,0,0,0;1,1,1,1,1,1]$ modules, respectively. The RR ground fields are contained in the six modules where $a_i=b_i$ for all $i$ and only one of them is non-zero, that is
\be [1,0,0,0,0,0;1,0,0,0,0,0],\ [0,1,0,0,0,0;0,1,0,0,0,0],\ \ldots,\ [0,0,0,0,0,1;0,0,0,0,0,1]\ .
\ee
From the description of the spectrum, it is clear that, by taking OPEs of the RR ground fields, of the $\N=(4,4)$ supercurrents, and of the $(\hat{su}(2)_1)^{\oplus 6}\oplus (\hat{su}(2)_1)^{\oplus 6}$ currents, one can obtain every field in the theory. Thus, we just need to prove that a symmetry acting trivially on $\N=(4,4)$ and on RR ground fields acts trivially also on the Kac-Moody algebra currents. The groups of automorphisms of $(\hat{su}(2)_1)^{\oplus 6}$ is $SO(3)^6:S_6$, where $SO(3)^6=(SU(2)/(-1))^6$ is the group of inner automorphisms generated by the zero modes of the currents. Considering both the chiral and anti-chiral algebras, one gets a group $(SO(3)^6\times SO(3)^6):S_6$; notice that the spectrum is only invariant under the diagonal $S_6$ acting simultaneously on the left- and right-movers. Clearly, a symmetry exchanging the different $\hat{su}(2)_1$ subalgebras must also exchange the corresponding RR ground fields, so it cannot act trivially on them. As for symmetries acting by $SO(3)^6\times SO(3)^6$  automorphisms on the current algebra, compatibility of the OPE between currents and RR ground states require the latter to transform under the corresponding lift in the covering $SU(2)^6\times SU(2)^6$. The latter group does not act faithfully on the RR ground states, but the kernel of the action is contained in the $\ZZ_2^6\times \ZZ_2^6$ central subgroup, which is the kernel of the covering $SU(2)^6\times SU(2)^6\to SO(3)^6\times SO(3)^6$. Thus every non-trivial element in $SO(3)^6\times SO(3)^6$ lifts to an element acting non-trivially on the RR ground fields. It follows that symmetries acting trivially on the ground fields must act trivially also on the currents and we conclude. 

The extension of this proof to the case of $\Sym^n\C'$ is simple. The untwisted sector of the symmetric orbifold $\Sym^n \C'$ contains all the purely holomorphic and purely anti-holomorphic fields of the model, since all twisted sectors ground states have strictly positive $L_0$ and $\bar L_0$ eigenvalue. The whole NS-NS untwisted sector of $\Sym^n \C'$  is generated (via OPE) by the fields of the form $\chi^{[n]}:=\sum_{\sigma\in S_n} \sigma(\chi\otimes 1\otimes\cdots\otimes 1)$, where $\chi$ is a holomorphic or a antiholomorphic field in $\C'$. A symmetry of $\Sym^n \C'$ must preserve the sets of purely holomorphic and purely anti-holomorphic fields, so in particular it must preserve the whole untwisted sector. Furthermore, if a symmetry acts trivially on the exactly marginal operators in the untwisted sector of $\Sym^n\C'$, then it must act trivially also on the generators $\chi^{[n]}$; if this were not true, then an analogous symmetry would exist for $\C'$. Therefore, a symmetry acting trivially on the exactly marginal operators must act trivially on the whole untwisted sector of the theory. Finally, the whole model $\Sym^n \C'$ is generated by the untwisted sector and by the twisted ground fields in the $[\sigma]$-twisted sector, where $[\sigma]$ is the $S_n$ conjugacy class corresponding to a single transposition. But the supersymmetric descendants of these twisted ground states are also exactly marginal operators. Therefore, a symmetry preserving the $\N=(4,4)$ superconformal algebra, the spectral flow, and acting trivially on all exactly marginal operators must necessarily be trivial.

\subsection{Proof of proposition \ref{th:lemma}}\label{a:pflemma}
	
	Choose a primitive vector $v\in \Gamma^{5,21}$ of norm $v^2=2n-2$ and identify $L_n$ with $L_v=v^\perp\cap \Gamma^{5,21}$. We have to prove that the subgroup $G_\Pi$ of $Stab^+(v)\subset O(\Gamma^{5,21})$ fixing $Z=\langle v\rangle\oplus \Pi$ pointwise, is isomorphic to the group $O^0(L_\Pi)$ of automorphisms of $L_\Pi$ acting trivially on $L_\Pi^*/L_\Pi$. Recall that $L_\Pi=L_v\cap \Pi^\perp=\Gamma^{5,21}\cap Z^\perp$. Let $L^\Pi:=\Gamma^{5,21}\cap (L_\Pi)^\perp$ and consider $g\in G_\Pi$. Then, $g$ acts trivially on $L^\Pi$, because, for every $v\in L^\Pi$, $v-g(v)\in L^\Pi $ must be orthogonal to $Z$ (since $X$ is fixed by $g$), so it must be contained in $L_\Pi$; but $L^\Pi\cap L_\Pi=0$, so that $g(v)=v$. Therefore, $g$ preserves the lattice $L_\Pi$ and must be an element of $O(L_\Pi)$. Now, $L_\Pi$ and $L^\Pi$ are primitive sublattices of $\Gamma^{5,21}$, that are the orthogonal complements of each other.  As explained in appendix \ref{a:lattices} (see \cite{Nikulin}), this implies there is an isomorphism $\gamma: L_\Pi^*/L_\Pi\to (L^\Pi)^*/L^\Pi$ such that $(x,y)\in L_\Pi^*\oplus (L^\Pi)^*$ is a vector of $\Gamma^{5,21}$ if and only if the images $\bar x$, $\bar y$ in the discriminant groups are related by $\bar y=\gamma(\bar x)$. Since $g$ is an automorphism of $\Gamma^{5,21}$, it must preserve this condition and therefore the action of $g$ on $ L_\Pi^*/L_\Pi$ and $(L^\Pi)^*/L^\Pi$ must be compatible with $\gamma$. Since $g$ acts trivially on $L^\Pi$, and therefore also on $(L^\Pi)^*/L^\Pi$, then it must act trivially also on $L_\Pi^*/L_\Pi$. Therefore, $g\in O^0(L_\Pi)$, and $G_\Pi\subseteq O^0(L_\Pi)$.
	
	Vice versa, suppose that $g\in O^0(L_\Pi)$. Then, we can extend the action on $\Gamma^{5,21}$ by requiring it to act trivially on $L^\Pi$. This action is compatible with the isomorphism $\gamma$, and therefore it is an automorphism of $\Gamma^{5,21}$. Furthermore, it acts trivially on the space $L^\Pi\otimes \RR=L_\Pi^\perp\subset \Gamma^{5,21}\otimes\RR $, which contains $Z=\langle v\rangle\oplus \Pi$. It follows that $g\in G_\Pi$ and $O^0(L_\Pi)\subseteq G_\Pi$.  We conclude that $G_\Pi=O^0(L_\Pi)$.
\subsection{Proof of proposition \ref{th:main}}\label{a:pfthmain}

Choose a primitive vector $v\in \Gamma^{5,21}$ of norm $v^2=2n-2$ and identify $L_n$ with $L_v=v^\perp\cap \Gamma^{5,21}$. Let \be \Gamma^G:=\{ v\in \Gamma^{5,21}\mid g(v)=g,\forall g\in g\}\ee
\be \Gamma_G:=(\Gamma^G)^\perp\cap \Gamma^{5,21}=\{v\in \Gamma^{5,21}\mid (v,w)=0\ \forall w\in\Gamma^G\}\ .
\ee Since $\Gamma^G\otimes \RR$ contains the positive definite $5$-dimensional subspace $Z$, it must have signature $(5,d)$ for some $0\le d\le 21$, so that $\Gamma_G$ is negative definite of rank $21-d$. By \cite{Nikulin} 1.12.3, $\Gamma^G$ can be primitively embedded in the even unimodular lattice $\Gamma^{9+d,1+d}$ of signature $(9+d,1+d)$. If $S:=(\Gamma_G)^\perp \cap \Gamma^{9+d,1+d}$ is the orthogonal complement of $\Gamma^G$ in $\Gamma^{9+d,1+d}$, then $S(-1)$ has signature $(1,4+d)$ and the discriminant form $q_{S(-1)}$ satisfies
\be q_{S(-1)}\cong -q_S\cong q_{\Gamma^G}\cong -q_{\Gamma_G}\ .
\ee
Therefore $S(-1)$ and $\Gamma_G$ can be glued to get the even unimodular lattice $\Gamma^{1,25}$. This means that both $S(-1)$ and $\Gamma_G$ can be primitively embedded in $\Gamma^{1,25}$. Furthermore, since the induced action of $G$ on the the discriminant group $A_{\Gamma_G}\cong A_{\Gamma^G}$ is trivial, the action of $G$ on $\Gamma_G$ can be extended to an action on $\Gamma^{1,25}$ acting trivially on the orthogonal complement of $\Gamma_G$. We conclude that $G$ is a subgroup of the group of authomorphisms $O^+(\Gamma^{1,25})$ acting trivially on a sublattice $S(-1)\subset \Gamma^{1,25}$ of signature $(1,4+d)$.

The lattice $\Gamma^{1,25}$ can be described as
\be \Gamma^{1,25}:=\{(x_0;x_1,\ldots,x_{25})\in \ZZ^{26}\cup (\ZZ+\frac{1}{2})^{26},\ \sum_i x_i\in 2\ZZ\}\subset \RR^{1,25}
\ee i.e. the lattice of vectors $(x_0;x_1,\ldots,x_{25})\in \RR^{1,25}$ with all integral or all half-integral entries whose sum is even, with quadratic form $(x_0;x_1,\ldots,x_{25})^2=x_0^2-\sum_{i=1}^{25} x_i^2$.
The group $O^+(\Gamma^{1,25})$ is known (see \cite{ConwaySloane}, chapter 27) to be isomorphic to $W\rtimes Co_\infty$, where $W$ is the (infinite) Weyl group generated by reflections with respect to hyperplanes $r^\perp\subset \Gamma^{1,25}\otimes \RR\cong \RR^{1,25}$ orthogonal to the roots $r\in \Gamma^{1,25}$, $r^2=-2$, and $Co_\infty=Leech\rtimes Co_0$ is the affine Conway group. The union $\bigcup_{r\in\Gamma^{1,25},r^2=-2}r^\perp$ of hyperplanes orthogonal to the roots divides the space $\Gamma^{1,25}\otimes \RR\cong \RR^{1,25}$ into infinitely many connected components;  the closure of any such component is called a Weyl chamber. The Weyl group acts transitively by permutation on the set of Weyl chambers, and if $D$ is the interior of any Weyl chamber, then for any $t\in W$, $t\neq 1$, one has $t(D)\cap D=\emptyset$. We call the Weyl chamber containing the null vector $w=(70;0,1,2,3,\ldots,24)\in \RR^{1,25}$, $w^2=0$ (Weyl vector) the fundamental Weyl chamber, and the roots corresponding to the hyperplanes delimiting the fundamental chamber are called fundamental roots. The Weyl vector $w$ has the property that $(w,r)=-1$ for every fundamental root $r$. The group $Co_\infty$ can be identified with the subgroup of $O^+(\Gamma^{1,25})$ fixing the fundamental chamber (setwise), and it fixes the Weyl vector $w$. Finally, notice that the lattice $(\Gamma^{1,25}\cap w^\perp)/\langle w\rangle$ is isomorphic to the negative definite Leech lattice $\Lambda$.

Since $\Gamma_G$ contains no roots, then $S(-1)\otimes \RR =(\Gamma_G)^\perp$ cannot be contained in $r^\perp$ for any root $r$, and therefore cannot be contained in the union $\bigcup_{r\in\Gamma^{1,25},r^2=-2}r^\perp$ (given any two vectors of $S(-1)\otimes \RR$ not contained in the same hyperplane, a suitable linear combination of them is not contained in the union). Therefore, the group $G\subset O^+(\Gamma^{1,25})$ fixes a vector in the interior of a Weyl chamber. The embedding of $\Gamma_G$ in $\Gamma^{1,25}$ is defined up to $O^+(\Gamma^{1,25})$ transformations, and one can use this freedom to make sure that $G$ fixes a vector in the interior of the fundamental chamber. This implies that $G$ is a subgroup of the group $Co_\infty$ preserving this chamber -- any automorphism outside $Co_\infty$ maps the fundamental chamber to a different chamber, so it cannot fix a vector in its interior. It follows that $G$ fixes the Weyl vector $w$, so that $\Gamma_G\subset \Gamma^{1,25} \cap w^\perp$. Composing the embedding $\Gamma_G\hookrightarrow \Gamma^{1,25}\cap  w^\perp$ with the map  $\Gamma^{1,25} \cap w^\perp\to (\Gamma^{1,25} \cap w^\perp)/\langle w\rangle$, one obtains a primitive embedding of $\Gamma_G$ in the Leech lattice $\Lambda$. By the same argument as above, the action of $G$ on $\Gamma_G$ can be extended to an action on the Leech lattice $\Lambda$ acting trivially on the orthogonal complement of $\Gamma_G$. Therefore, $G$ is contained in the  subgroup of the Conway group $Co_0$ acting trivially on the orthogonal complement $\Lambda^G:=(\Gamma_G)^\perp \cap \Lambda$. In fact, $G$ coincides with the subgroup of $Co_0$ fixing pointwise the lattice $\Lambda^G$: any generator  in this group has a well defined action on $(\Gamma^G)^\perp \cap \Lambda\cong \Gamma_G$, and the action can be extended to an automorphism of $\Gamma^{5,21}$ acting trivially on the orthogonal complement $\Gamma^G$, and therefore fixing $Z$ pointwise.

For the vice versa, let $\tilde G\subset Co_0$ be the pointwise stabilizer of a sublattice $\Lambda^{\tilde G}\subset \Lambda$ of rank $3+d$, $d\ge 0$, and let $\Lambda_{\tilde G}:=(\Lambda^{\tilde G})^\perp\cap \Lambda$ be its orthogonal complement in $\Lambda$. If $d>0$, then the results in \cite{K3symm} imply that there exists a well defined non-linear sigma model $\C$ on K3 with symmetry group $\tilde G$. Therefore, the $n$-th symmetric product $\Sym^n \C$ of this non-linear sigma model has a group of symmetries containing $\tilde G$. In fact, the symmetry group of $\Sym^n\C$ is strictly larger, because of the $\ZZ_2$ symmetry acting by $(-1)^{\sgn(\sigma)}$ on the $\sigma$-twisted sector. On the other hand, the group $\tilde G$ of symmetries is preserved by deformations induced by exactly marginal operators in the twisted sector, so that we have a continuous family of models whose symmetry group contains $\tilde G$. A generic element in this family is non-singular and its symmetry group is exactly $\tilde G$.

Let us now consider the case where $d=0$. Consider a lattice $\Lambda^G(-1)\oplus_\perp \langle v \rangle$ of signature $(4,0)$ given by the orthogonal sum of $\Lambda^G(-1)$ and a $1$-dimensional lattice $ \langle v \rangle$ generated by a vector $v$ of square norm $2n-2$. By theorem 1.12.3 of \cite{Nikulin}, this lattice can be primitively embedded in the $E_8$ lattice. Let $S:=(\Lambda^G(-1))^\perp \cap E_8$ and $S_0:=(\Lambda^G(-1)\oplus \langle v\rangle )^\perp \cap E_8$ be the orthogonal complements of $\Lambda^G(-1)$ and $\Lambda^G(-1)\oplus_\perp \langle v\rangle$ in $E_8$. Then, $S$ has rank $5$ and $q_S\cong -q_{\Lambda^G(-1)}=q_{\Lambda^G}\cong -q_{\Lambda_G}$, so that $S$ and $\Lambda_G$ can be glued to obtain the even unimodular lattice $\Gamma^{5,21}$. By taking $Z\equiv S\otimes \RR\subset \RR^{5,21}$, we obtain that the stabilizer of $Z$ in $O^+(\Gamma^{5,21})$ is exactly the group $\tilde G$. Furthermore, $Z$ contains the vector $v$ of length $2n-2$, so that it is an attractor point for the stringy-like object of the corresponding charge. It remains to check that the point in the moduli space of NLSM on $K3^{[n]}$ determined by $Z$ is non-singular. By construction, $Z^\perp \cap \Gamma^{5,21}=\Lambda_G$, which contains no roots, since it is a sublattice of the Leech lattice. We need to prove that there are no vectors $v',v''\in \Gamma^{5,21}$ such that $v'+v''= v$ and $v'_+=\alpha v$ for some $0<\alpha<1$. Let us suppose by absurdity that such a vector exists. Since $\Gamma^{5,21}\subset S^*\oplus \Lambda_G^*$, and since $v'_+$ is the component of $v'$ along $S^*\otimes \RR\subset \Gamma^{5,21}\otimes \RR$, we obtain that $v'_+\in S^*$. Furthermore, since $v'_+$ is proportional to $v $, it must be orthogonal to $S_0\subset S^*\cap v^\perp$. Recall that $S$ and $\Lambda^G$ can be `glued' to get the $E_8$ lattice, so that for every vectors in $S^*$, and in particular $v'_+$, there is a vector $\lambda\in(\Lambda^G)^*$ such that $(v'_+,\lambda)\in E_8\subset S^*\oplus (\Lambda^G)^*$. Since both $v'_+$ and $\Lambda^G$ are orthogonal to $S_0\subset S\subset E_8$, we have $(v'_+,\lambda)\in E_8\cap (S_0)^\perp$. But $E_8\cap (S_0)^\perp=\Lambda^G\oplus \langle v\rangle$, so $v'_+$ must be an integral multiple of $v$, contradicting the expectation that $0<\alpha<1$. We conclude that the $K3^{[n]}$ model specified by $Z=S\otimes \RR \in \Gamma^{5,21}\otimes \RR$ is non-singular.

\section{Discriminant forms}\label{a:discriminants}

In this section, we describe the main facts about discriminant forms and a dictionary between the notations used by different authors. Furthermore, we describe the computations that lead to the results reported in table \ref{tab:big}.

A discriminant form is a finite abelian group $A_q$ (we use additive notation for this group) with a quadratic form $q:A\to \QQ/2\ZZ$. If $L$ is an even lattice, then $L^*/L$ with the induced quadratic form is a discriminant form, see appendix \ref{a:lattices}. (Some authors, including Miranda and Morrison \cite{MirandaMorrison1,MirandaMorrison2}, define $q$ as $1/2$ of this, so that it is a map into $\QQ/\ZZ$ instead of $\QQ/2\ZZ$. I use convention of \cite{Nikulin} here).
The rank of a discriminant form is the minimal number of generators of the group.


Two discriminant forms $A_q$ and $A'_{q'}$ are equivalent if there is an isomorphism of abelian groups $\gamma:A_q\stackrel{\cong}{\to} A'_{q'}$  that preserves the quadratic forms $q'\circ \gamma=q$.

One can prove that any discriminant form decomposes into orthogonal components
\be A=\oplus_{p\text{ prime}} A^{(p)}\ ,
\ee where each component $A^{(p)}$ has order a prime power $p^k$. So, it is sufficient to study discriminant forms whose order is a prime power. Furthermore, if $p$ is odd, $A^{(p)}$ can always be diagonalized, i.e. written (not uniquely, in general) as an orthogonal sum of Jordan components of rank $1$. For $p=2$, any discriminant form $A^{(2)}$ can be decomposed into Jordan components of rank at most $2$.

\medskip
Let us describe the elementary Jordan components and their notation in Miranda-Morrison's (MM)  \cite{MirandaMorrison1,MirandaMorrison2} and Nikulin's (N) article \cite{Nikulin}.

For $p$ odd, the inequivalent discriminant forms of rank $1$ are denoted by $w^\epsilon_{p,k}$, $\epsilon=\pm 1$ (MM) or $q^{(p)}_\theta(p^k)$, $\theta\in(\ZZ/{p^k}\ZZ)^\times$ (considered mod $((\ZZ/{p^k}\ZZ)^\times )^2$) and  are given by a cyclic group of order $p^k$, $A\cong \ZZ/p^k\ZZ$, with generator $x$ such that $q(x)=\frac{\theta}{p^k}$, where  $\left(\frac{\theta}{p}\right)=\epsilon$ (Jacobi symbol). Our notation here is as follows: $\theta$ is an element in $(\ZZ/{p^k}\ZZ)^\times$, the multiplicative group of integers modulo $p^k$ that are coprime to $p$.  $((\ZZ/{p^k}\ZZ)^\times )^2$ is the subgroup of elements in $(\ZZ/{p^k}\ZZ)^\times$ that are squares mod $p^k$ (i.e., squares modulo $p$). Two discriminant forms $q^{(p)}_\theta(p^k)$ and $q^{(p)}_{\theta'}(p^k)$ are equivalent iff $\theta'= \alpha^2\theta\mod p^k$, for some $\alpha^2\in ((\ZZ/{p^k}\ZZ)^\times )^2$, so that the form depends on $\theta$ only modulo $((\ZZ/{p^k}\ZZ)^\times )^2$. In fact, for $p$ odd, there are only two cosets in $(\ZZ/{p^k}\ZZ)^\times/((\ZZ/{p^k}\ZZ)^\times )^2$, namely the squares mod $p^k$ (i.e., the trivial coset) and the non-squares. The Jacobi symbol $\left(\frac{\theta}{p}\right)$, for $\theta$ coprime to $p$, equals $+1$ is $\theta$ is a square mod $p$ and $-1$ if it is not. To summarize,
\be
q^{(p)}_{\theta}(p^k)\cong \frac{\theta}{p^k}\cong \begin{cases}w^{+1}_{p,k} & \text{if }\theta\text{ is a square}\mod p\ ,\\
	w^{-1}_{p,k} & \text{if }\theta\text{ is not a square}\mod p\ .\end{cases}
\ee

\medskip

For $p=2$, the irreducible Jordan components can have rank $1$ or $2$. The inequivalent components of rank $1$ are denoted by $w^\epsilon_{2,k}$, $\epsilon\in \ZZ_8^\times $ (MM) or $q^{(2)}_\theta(2^k)$, $\theta\in (\ZZ/{2^k}\ZZ)^\times$ (considered modulo $((\ZZ/{2^k}\ZZ)^\times)^2$) and  are given by a cyclic group of order $2^k$, $A\cong \ZZ/2^k\ZZ$, with a generator $x$ such that $q(x)=\frac{\theta}{2^k}\mod 2\ZZ$, where $\theta\equiv \epsilon\mod 8$. For $k=1$, $w^\epsilon_{2,1}$ really depends on $\epsilon \mod 4$ rather than $\mod 8$ (see the relations below). To summarize,
\be w^\epsilon_{2,k}=q_2^\theta(2^k)=\frac{\theta}{2^k}\qquad \theta=\epsilon\mod 8\ .
\ee
The different range of $\epsilon$ with respect to the odd case depends on the fact that the quotient $(\ZZ/{2^k}\ZZ)^\times/((\ZZ/{2^k}\ZZ)^\times )^2$ has four cosets rather than $2$.

The inequivalent $2$-components of rank $2$ are given by: 
\begin{itemize}
	\item $u_k$ (MM) or $u^{(2)}_+(2^k)$ (N) is given by a group $(\ZZ/2^k\ZZ)\times (\ZZ/2^k\ZZ)$ with generators $x,y$ and quadratic form $q(x)=q(y)=0\mod 2\ZZ$ and $q(x+y)=2^{-k+1}\mod 2\ZZ$, i.e.
	\be u_k=u^{(2)}_+(2^k)=\begin{pmatrix}
		0 & 2^{-k}\\ 2^{-k} & 0
	\end{pmatrix}
	\ee
	\item $v_k$ (MM) or $v^{(2)}_+(2^k)$ (N) is given by a group $(\ZZ/2^k\ZZ)\times (\ZZ/2^k\ZZ)$ with generators $x,y$ and quadratic form $q(x)=q(y)=q(x+y)=2^{-k+1}\mod 2\ZZ$, i.e.
	\be v_k=v^{(2)}_+(2^k)=\begin{pmatrix}
		2^{-k+1} & 2^{-k}\\ 2^{-k} & 2^{-k+1}
	\end{pmatrix}
	\ee
\end{itemize}

In general, there are many equivalent ways to decompose a discriminant form into  elementary Jordan components. All these equivalences are consequence of the following relations (see \cite{Nikulin}):
\be\label{rel1} (w_{p,k}^{\epsilon})^2\cong (w_{p,k}^{\epsilon'})^2\qquad p\text{ odd}\ ,
\ee
\be\label{rel2} w_{2,1}^{\epsilon}=w_{2,1}^{5\epsilon}
\ee
\be\label{rel3} (u_k)^2=(v_k)^2\ ,
\ee
\be\label{rel4} (w_{2,k}^{\epsilon})^2\cong (w_{2,k}^{5\epsilon})^2\ ,
\ee
\be\label{rel5} (w_{2,k}^{\epsilon})^2\oplus (w_{2,k}^{\epsilon'})\cong \begin{cases} v_k\oplus (w_{2,k}^{-5\epsilon'})\ , & \text{if }\epsilon\equiv \epsilon'\mod 4\ ,\\
	u_k\oplus (w_{2,k}^{-\epsilon'})\ , & \text{if }\epsilon\equiv -\epsilon'\mod 4\ ,\end{cases}
\ee
\be\label{rel6} v_k\oplus w_{2,k+1}^\epsilon=u_k\oplus w_{2,k+1}^{5\epsilon}\ ,
\ee
\be\label{rel7} w_{2,k}^\epsilon \oplus v_{k+1} =w_{2,k}^{5\epsilon}\oplus u_{k+1}\ ,
\ee
\be\label{rel8} w_{2,k}^\epsilon\oplus w_{2,k+1}^{\epsilon'}=w_{2,k}^{\epsilon+2\epsilon'}\oplus w_{2,k+1}^{5(\epsilon'-2\epsilon)}\ ,
\ee
\be\label{rel9} w_{2,k}^\epsilon\oplus w_{2,k+2}^{\epsilon'}=w_{2,k}^{5\epsilon}\oplus w_{2,k+2}^{5\epsilon'}\ .
\ee
 In \cite{MirandaMorrison1,MirandaMorrison2}, Miranda and Morrison use these relations to decompose every 2-group discriminant form as
\be A^{(2)}= \oplus_{k\ge 1} (u_k^{n(k)}\oplus v_k^{m(k)}\oplus w(k))\ ,
\ee where $m(k)\le 1$, rank$(w(k))\le 2$ and $w(k)$ is a sum of components $w_{2,k}^\epsilon$.

\bigskip

In the following, we will be interested in calculating the opposite of a discriminant form. This can be found using the rules
\be -w_{p,k}^\epsilon =\begin{cases} w_{p,k}^{\epsilon}\ , & \text{if }p\equiv 1\mod 4\ ,\\ w_{p,k}^{-\epsilon} & \text{if }p\equiv -1\mod 4\ ,
\end{cases}
\ee
\be -u_k=u_k\ ,
\ee
\be -v_k=v_k\ ,
\ee
\be -w_{2,k}^\epsilon=w_{2,k}^{-\epsilon}\ .
\ee

\bigskip

The genus of a lattice is determined by its signature and its discriminant form  (see \cite{Nikulin}, section 1.9). 
A compact symbol for the genus was given in Conway and Sloane's book \cite{ConwaySloane}. From that symbol, one can derive the corresponding discriminant form using the following dictionary:
\be (p^k)^{\pm n}=\oplus_{i=1}^n w_{p,k}^{\epsilon_i}\ ,\quad  \prod_i \epsilon_i=\pm 1\qquad \qquad p\text{ odd}\ ,
\ee
\be (2^k)^{+2n}_{II}=(u_k)^n\ ,
\ee 
\be (2^k)^{-2n}_{II}=(v_k)^n\ ,
\ee 
\be (2^k)^{\pm n}_t=\oplus_{i=1}^n w_{2,k}^{\epsilon_i}\ ,\qquad \sum_i\epsilon_i\equiv t\mod 8\ ,\qquad \prod_i\left(\frac{\epsilon_i}{2}\right)=\pm 1
\ee where 
\be \left(\frac{\epsilon}{2}\right)=\begin{cases}+1 & \text{if } \epsilon=\pm 1\mod 8\ ,\\ -1 & \text{if } \epsilon=\pm 3\mod 8\ .\end{cases}
\ee In \cite{HohnMason}, H\"ohn and Mason provide the genus of the sublattice $Leech^g$ of the positive definite\footnote{We recall that $\Lambda$ denotes the \emph{negative definite} Leech lattice, so that there is a canonical isomorphism $\Lambda\cong Leech(-1)$ given by flipping the sign of the quadratic form. This also induces an isomorphism $O(\Lambda)\cong O(Leech)$.} Leech lattice $Leech$ fixed by some $g\in Co_0$. The genus is given using the symbol in\cite{ConwaySloane}, so that one can reconstruct the corresponding discriminant form using the dictionary above.

\bigskip

	In the following table, we provide for each Frame shape $\pi_g$, the genus of the  fixed sublattice $Leech^g$ in the (positive definite) Leech lattice (as reported in \cite{HohnMason}), the number of the corresponding group in the list in \cite{HohnMason}, and the \emph{opposite} $-q_{Leech^g}$ of the discriminant form of $Leech^g$. This will be useful for computing the data in table \ref{tab:big}.
	  Notice that
	$$ -q_{Leech^g}\cong q_{\Lambda^g}=-q_{\Lambda_g}=-q_{\Gamma_g}\ .
	$$  where  $q_{\Gamma_g}$ of the coinvariant sublattice $\Gamma_g:=(\Gamma^g)^\perp\cap \Lambda$. The discriminant form is given in Miranda and Morrison (MM) notation \cite{MirandaMorrison1,MirandaMorrison2}, so that it is easier to apply their theorems.
	\begin{tabularx}{\linewidth}{CCCCC}
	\pi_g & \text{HM group \#}  &	\text{genus }Leech^g &-q_{Leech^g}\\
	\midrule\endhead
 1^{24} &  1 & 1
\\
	\rowcolor{gray!11}{}
	1^82^8 & 2 & 2_{II}^{+8} &  u_1^{\oplus 4} \\
	1^{-8}2^{16}	& 14		&	2_{II}^{+8}		& u_1^{\oplus 4}\\
	\rowcolor{gray!11}{}
	2^{12} 				& 5			&	2_4^{+12}	& u_1^{\oplus 4}\oplus v_1\oplus w_{2,1}^1\oplus w_{2,1}^{-1}\\
	1^6 3^6			   & 4 			& 3^{+6} 				& (w_{3,1}^{-1})^{\oplus 6}\\
	\rowcolor{gray!11}{}
	1^{-3}3^9 		  & 35		& 3^{+5}			& (w_{3,1}^{-1})^{\oplus 5}\\
	3^8					& 22		& 3^{+8}			& (w_{3,1}^{-1})^{\oplus 8}\\
	\rowcolor{gray!11}{}
	1^42^24^4		& 9			& 2_2^{+2}4_{II}^{+4}					& w_{2,1}^1\oplus w_{2,1}^{-1}\oplus u_2^{\oplus 2}\\
	1^82^{-8}4^8	& 14		& 2_{II}^{+8}		& u_1^{\oplus 4}\\
	\rowcolor{gray!11}{}
	1^{-4}2^64^4	& 41		& 2_6^{+2}4_{II}^{+4}		&(w_{2,1}^1)^{\oplus 2}\oplus (u_2)^{\oplus 2}\\
	2^{-4}4^8			& 99		& 2_{II}^{-2}4_{II}^{-2} 	& v_1\oplus v_2\\
	\rowcolor{gray!11}{}
	2^44^4				& 21		& 2_{II}^{+4}4_{II}^{+4}			& u_1^{\oplus 2}\oplus u_2^{\oplus 2}\\
	4^6						& 64		& 4_6^{+6}					& u_2^{\oplus 2}\oplus (w_{2,2}^5)^{\oplus 2}\\
	\rowcolor{gray!11}{}
	1^45^4				& 20				& 5^{+4}					& (w_{5,1}^1)^{\oplus 4}\\
	1^{-1}5^5			& 122			& 5^{+3}		& (w_{5,1}^1)^{\oplus 3}\\
	\rowcolor{gray!11}{}
	1^22^23^26^2	& 18			& 2_{II}^{+4}3^{+4}				& [u_1^{\oplus 2}]\oplus[(w_{3,1}^1)^{\oplus 4}]\\
	1^42^13^{-4}6^5 	& 33		& 2_{II}^{-6}3^{-1}		& [u_1^{\oplus 2}\oplus v_1]\oplus [w_{3,1}^1]\\
	\rowcolor{gray!11}{}
	1^52^{-4}3^16^4	& 35			& 3^{+5}		& (w_{3,1}^{-1})^{\oplus 5}\\
	1^{-2}2^43^{-2}6^4	& 104	& 2_{II}^{+4}3^{+2}	& [u_1^{\oplus 2}]\oplus[(w_{3,1}^1)^{\oplus 2}]\\
	\rowcolor{gray!11}{}
	1^{-1}2^{-1}3^36^3	& 114	& 2_{II}^{-2}3^{+4}		& [v_1]\oplus [(w_{3,1}^1)^{\oplus 4}]\\
	1^{-4}2^53^46^1		& 62	& 2_{II}^{-6}3^{-5}		& [u_1^{\oplus 2}\oplus v_1]\oplus [(w_{3,1}^1)^{\oplus 5}]\\
	\rowcolor{gray!11}{}
	2^36^3				& 63		& 2_4^{-6} 3^{-3}		& [u_1^{\oplus 2}\oplus w_{2,1}^{-1}\oplus w_{2,1}^1]\oplus [(w_{3,1}^1)^{\oplus 3}]\\
	6^4						& 161		& 2_4^{+4}3^{+4}		& [v_1\oplus w_{2,1}^{-1}\oplus w_{2,1}^1]\oplus [(w_{3,1}^1)^{\oplus 4}]\\
	\rowcolor{gray!11}{}
	1^37^3					& 52					& 7^{+3}					& (w_{7,1}^{-1})^{\oplus 3}\\
	1^22^14^18^2 		& 55				& 2_5^{+1}4_1^{+1}8_{II}^{+2}					& v_3\oplus w_{2,1}^{-1}\oplus w_{2,2}^{-1}**\\
	 \rowcolor{gray!11}{}
	 1^42^{-2}4^{-2}8^4		& 99	& 2_{II}^{-2}4_{II}^{-2}	& v_1\oplus v_2\\
	 1^{-2}2^34^18^2	& 143		& 2_3^{+1}4_1^{+1}8_{II}^{+2}	& v_3\oplus w_{2,2}^1\oplus w_{2,1}^3**\\
	 \rowcolor{gray!11}{}
	 2^44^{-4}8^4		& 107			& 4_4^{+4}			& v_2\oplus w_{2,2}^5\oplus w_{2,2}^{-1}\\
	 4^28^2				& 147				& 4_{II}^{-2}8_{II}^{-2}	& v_2\oplus v_3\\
	 \rowcolor{gray!11}{}
	 1^3 3^{-2}9^3	&	101			& 3^{+2}9^{+1}			& (w_{3,1}^1)^{\oplus 2}\oplus w_{3,2}^{-1}\\
	 1^22^15^{-2}10^3	& 100	& 2_{II}^{-4}5^{-1}		& [v_1^{\oplus 2}]\oplus [w_{5,1}^{-1}]\\
	 \rowcolor{gray!11}{}
	 1^32^{-2}5^110^2	& 122		& 5^{+3}			& (w_{5,1}^1)^{\oplus 3}\\
	 1^{-2}2^35^210^1	& 159		& 2_{II}^{-4}5^{-3}		& [v_1^{\oplus 2}]\oplus[(w_{5,1}^{-1})^{\oplus 3}]\\
	  \rowcolor{gray!11}{}
	  2^210^2			& 149			& 2_4^{+4}5^{+2}	& [v_1\oplus w_{2,1}^1\oplus w_{2,1}^{-1}]\oplus [(w_{5,1}^1)^{\oplus 2}]\\
	  1^211^2			& 120 			& 11^{+2}			& (w_{11,1}^1)^{\oplus 2}\\
	  \rowcolor{gray!11}{}
	  1^22^{-2}3^24^26^{-2}12^2		& 104	& 2_{II}^{+4}3^{+2}		& [u_1^{\oplus 2}]\oplus [(w_{3,1}^1)^{\oplus 2}]\\
	  1^12^23^14^{-2}12^2	& 109			& 2_2^{+2}3^{+3}		& [(w_{2,1}^{1})^{\oplus 2}]\oplus [(w_{3,1}^{-1})^{\oplus 3}]\\
	   \rowcolor{gray!11}{}
	   1^23^{-2}4^16^212^1	& 123		& 2_2^{-2}4_{II}^{+2}3^{+1}	& [u_2\oplus w_{2,1}^1\oplus w_{2,1}^5]\oplus[w_{3,1}^{-1}]\\
	   1^{-2}2^23^24^112^1	& 157		& 2_6^{-2}4_{II}^{+2}3^{+3}	& [u_2\oplus w_{2,1}^{-1}\oplus w_{2,1}^3]\oplus[(w_{3,1}^{-1})^{\oplus 3}]\\
	   \rowcolor{gray!11}{}
	   2^14^16^112^1	& 135			& 2_{II}^{-2}4_{II}^{-2}3^{-2}		& [v_1\oplus v_2]\oplus[w_{3,1}^1
	   \oplus w_{3,1}^{-1}]\\
	   1^12^17^114^1	& 129		& 2_{II}^{+2}7^{+2}		& [u_1]\oplus [(w_{7,1}^{1})^{\oplus 2}]\\
	   \rowcolor{gray!11}{}
	   1^13^15^115^1	& 128		& 3^{-2}5^{-2}			& [w_{3,1}^{1}\oplus w_{3,1}^{-1}]\oplus [w_{5,1}^{1}\oplus w_{5,1}^{-1}]\\
	\caption{\small  For each Frame shape $\pi_g$, we provide: the number of the smallest group containing $g$, as listed by H\"ohn and Mason in \cite{HohnMason}; the genus of the fixed sublattice $Leech^g$, using the conventions of \cite{HohnMason}; and the opposite $-q_{Leech^g}$ of the discriminant form of $Leech^g$ in Miranda and Morrison's notation \cite{MirandaMorrison1,MirandaMorrison2}.}\label{tab:discr}\end{tabularx}

The discriminant form of $\Lambda_{g,n}=\langle v\rangle \oplus \Gamma_g$ is the orthogonal sum $q_v\oplus q_{\Gamma_g}$ of the discriminant form of $\Gamma_g$, and the discriminant form $q_v$ of $\langle v\rangle$. The latter can be obtained as follows. Since $v^2=2n-2$, the lattice dual to $\langle v\rangle$ has generator $\frac{v}{2n-2}$, of norm $\frac{1}{2n-2}$. Therefore, the corresponding discriminant group is isomorphic to $\ZZ_{2n-2}$, with a generator of norm $\frac{1}{2n-2}\mod 2\ZZ$. Let $2n-2=\prod_p p^{r_p}$ the prime power decomposition of $2n-2$. Then, 
\be\label{quvi} q_v=\oplus_p w^{\epsilon_p}_{p,r_p}\ ,
\ee where the sum runs over the primes dividing $2n-2$, and
\be \epsilon_p=\left( \frac{\prod_{p'\neq p} p'^{r_{p'}}}{p}\right)\ ,\qquad p\text{ odd}\ ,
\ee and
\be \epsilon_2=\prod_{p'\neq 2} p'^{r_{p'}}\mod 8\ .
\ee 
The number of embeddings of $\Lambda_{g,n}=\langle v\rangle\oplus \Gamma_g$ in $\Gamma^{5,21}$ modulo $O^+(\Gamma^{5,21})$ can be computed using the algorithm described in \cite{MirandaMorrison2}, once the discriminant form $q_N$ of $N:=(\Lambda_{g,n})^\perp\oplus \Gamma^{5,21}$ is known. This is given by
$$ q_N=-q_{\Lambda_{g,n}}=(-q_v)\oplus (-q_{\Gamma_g})= (-q_v)\oplus (-q_{Leech^g})\ .
$$ For each Frame shape $\pi_g$, the discriminant form $-q_{Leech^g}$ is reported in the table above, while $-q_v$ can be easily determined for each $m$ from \eqref{quvi}. Given these data and the algorithm in \cite{MirandaMorrison2}, computing the number of cosets in the third column of table \ref{tab:big} is a tedious but straightforward exercise.

\section{Modular forms from Borcherds products}

In this appendix, we show that the infinite product \eqref{PhiSieg} defines a meromorphic Siegel modular form of genus two with respect to a congruence subgroup of $Sp(4,\ZZ)$. The proof is a straightforward application of \cite{Borcherds98}, along the lines  \cite{GritClery}. 

Given a non-linear sigma model on $K3$ with symmetry group $G$, for any two \emph{commuting} elements $g,h\in G$ ($gh=hg$) one can define the \emph{twisted-twining genus}
\be\label{twtwdef} \phi_{g,h}(\tau,z):=\Tr_{RR,g-twisted}(hq^{L_0-\frac{c}{24}}\bar q^{\bar L_0-\frac{\bar c}{24}}y^{J_0^3}(-1)^{F+\bar F})\ .
\ee This is just a generalization of the twining genera defined in eq.\eqref{twindef}, where the trace is taken in the $g$-twisted RR sector of the $K3$ model, rather than in the untwisted sector; eq.\eqref{twindef} is recovered when $g$ is the identity. Standard path integral arguments show that each twisted-twining genus $\phi_{g,h}$ is a weak Jacobi forms of weight $0$ and index $1$ for some congruence subgroup $\Gamma_{g,h}\subseteq SL(2,\ZZ)$ depending on $g$ and $h$. More generally, under   an $SL(2,\ZZ)$ transformation\footnote{In our conventions, the $g$-twisted sector is defined so that the symmetry $g$ acts by $e^{2\pi i(\bar L_0-L_0)}$. The convention where $g\equiv e^{2\pi i(L_0-\bar L_0)}$ in the $g$-twisted sector is also quite common; the two conventions are related by exchanging the $g$ and $g^{-1}$-twisted sector. The detailed form of the modular transformations \eqref{twtwmod} depends on this choice.} 
	\be\label{twtwmod} \phi_{g,h}\left(\frac{a\tau+b}{c\tau+d},\frac{z}{c\tau+d}\right)=e^{\frac{2\pi i z^2}{c\tau+d}}\phi_{g^ah^c,g^bh^d}(\tau,z)\ ,\qquad \begin{pmatrix}
		a & b\\ c & d
	\end{pmatrix}\in SL(2,\ZZ)\ ,
	\ee i.e. they are the components of a vector valued weak Jacobi form for $SL(2,\ZZ)$. For simplicity, we assumed that the multiplier is trivial; our analysis can be easily extended to the general case, as explained at the end of this section.
	
 Let us focus on twisted twining genera of the form $\phi_{g^s,g^k}$, where both the `twist' and the `twining'  are powers of the same element of $g\in G$, and therefore generate a cyclic group $\ZZ_N$. We can define the discrete Fourier transforms
 \be\label{FourTr} \hat \phi_{s,t}(\tau,z):=\frac{1}{N}\sum_{k=1}^N e^{-2\pi i \frac{kt}{N}} \phi_{g^s,g^k}(\tau,z)=\sum_{n,l} \hat c_{s,t}(n,l) q^ny^l\ ,\qquad s,t\in \ZZ/N\ZZ\ ,
 \ee where now $n\in \frac{1}{N}\ZZ$, $n\ge 0$, and $\ell\in \ZZ$. The Fourier coefficients $\hat c_{s,t}(n,l)$ of $\hat \phi_{s,t}$ are the $\ZZ_2$-graded dimensions of the $g$-eigenspace with eigenvalue $e^{\frac{2\pi it}{N}}$ in the $g^s$-twisted sector, and in particular they are all integral. 
 
 The elliptic property of weak Jacobi forms of index $1$ implies that, for each $s$ and $t$, $\hat c_{s,t}(n,l)$ depends only on the discriminant $4n-l^2$ and on $l \mod 2$
 \be \hat c_{s,t}(n,l)=\hat c_{s,t,\ell}(4n-l^2)\ ,\qquad s,t\in \ZZ/N\ZZ,\ \ell\in \ZZ/2\ZZ\ ,
 \ee where $\ell\equiv l\mod 2$ \cite{eichler_zagier}. As a consequence, every weak Jacobi form of weight $1$ admits a  decomposition \cite{eichler_zagier}
 \be\label{thetadec} \hat \phi_{s,t}(\tau,z)=\sum_{\ell\in\ZZ/2\ZZ} F_{s,t,\ell}(\tau)\theta_{2,\ell}(\tau,z)\ ,
 \ee where all the $z$-dependence is encoded in the level two theta functions
 \be \theta_{2,\ell}(\tau,z)=\sum_{r\in \ZZ+\frac{\ell}{2}} q^{r^2}y^{2r}\ ,\qquad \ell\in \ZZ/2\ZZ
 \ee  and
 \be F_{s,t,\ell}(\tau)=\sum_{n} \hat c_{s,t,\ell}(n)q^n\ ,\qquad s,t\in \ZZ/N\ZZ,\ \ell\in \ZZ/2\ZZ\ ,
 \ee are holomorphic functions on the upper half-plane. The theta functions $ \theta_{2,\ell}(\tau,z)$ transform as
 \be\label{thetamod}  \theta_{2,\ell}(\tau+1,z)=e^{\frac{\pi i\ell^2}{2}} \theta_{2,\ell}(\tau,z)\ ,\qquad \theta_{2,\ell}(-\frac{1}{\tau},\frac{z}{\tau})=\sqrt{\frac{\tau}{2i}}e^{2\pi i \frac{z^2}{\tau}}\sum_{\ell'\in \ZZ/2\ZZ} (-1)^{\ell\ell'} \theta_{2,\ell'}(\tau,z)\ .
 \ee 
  Eqs.\eqref{twtwmod}, \eqref{FourTr}, \eqref{thetadec}, and \eqref{thetamod} imply that the functions $F_{s,t,\ell}$  transform as \begin{align*} F_{s,t,\ell}(\tau+1)&=e^{2\pi i \frac{st}{N}} e^{-\frac{\pi i\ell^2}{2}} F_{s,t,\ell}(\tau)\ ,\\ 
 F_{s,t,\ell}(-\frac{1}{\tau})&=\frac{\sqrt{2i}}{2N\sqrt{\tau}}\sum_{s',t'\in \ZZ/N\ZZ}\sum_{\ell'\in \ZZ/2\ZZ}e^{-2\pi i \frac{st'+s't}{N}} e^{\frac{2\pi i\ell\ell'}{2}} F_{s',t',\ell'}(\tau)\ .
 \end{align*} This mean that the $F_{s,t,\ell}$, $s,t\in \ZZ/N\ZZ$, $\ell\in \ZZ/2\ZZ$, are the components of a vector valued weakly holomorphic (i.e. holomorphic in the interior of the upper half-space and possibly with poles at the cusps) modular form $F$ of weight $-1/2$ for the metaplectic group $Mp(2,\ZZ)$. The latter is a double cover of $SL(2,\ZZ)$ whose elements can be written as
 \be \left(\begin{pmatrix}
 	a & b\\ c & d
 \end{pmatrix},\pm \sqrt{c\tau+d}\right)\ ,\qquad \begin{pmatrix}
 	a & b\\ c & d
 \end{pmatrix}\in SL(2,\ZZ)\ .
 \ee Let us analyse this representation of $Mp(2,\ZZ)$ more in detail. Given an even lattice $M$ of signature $(b^+,b^-)$, one can define the Weil representation $\rho_M$ of $Mp(2,\ZZ)$ on the group ring $\CC[M^*/M]$, with generators $e_\delta$, $\delta\in M^*/M$ (see for example section $4$ of \cite{Borcherds98}). The representation $\rho_M$ is defined by
 $$ \rho_M(T)e_\delta=e^{\pi i q_M(\delta)}e_\delta\ ,\qquad \rho_M(S)e_\delta=\frac{\sqrt{i}^{b^--b^+}}{\sqrt{|M^*/M|}}\sum_{\delta'\in M^*/M} e^{-2\pi i (\delta,\delta')}e_{\delta'}
 $$ where $q_M$ is the discriminant form on $M^*/M$, $(\cdot,\cdot)$ the corresponding $\QQ/\ZZ$-valued bilinear form, and $$ T=\left(\begin{pmatrix}
 1 & 1\\ 0 & 1
 \end{pmatrix},1\right)\ ,\qquad  S=\left(\begin{pmatrix}
 0 & -1\\ 1 & 0
 \end{pmatrix},\sqrt{\tau}\right)
 $$ are the generators of $Mp_2(\ZZ)$. In particular, let us consider the even lattice $M:= \Gamma^{1,1}\oplus \Gamma^{1,1}(N)\oplus \langle -2\rangle$ of signature $(b^+,b^-)=(2,3)$, so that $M^*\cong \Gamma^{1,1}\oplus \Gamma^{1,1}(\frac{1}{N})\oplus \langle -\frac{1}{2}\rangle$. The group $M^*/M\cong \ZZ/N\ZZ\times \ZZ/N\ZZ\times \ZZ/2\ZZ$ of order $|M^*/M|=2N^2$ has generators $\alpha,\beta,\gamma$ with discriminant form
 $$ q_M(s\alpha+t\beta+\ell\gamma)=\frac{2st}{N}-\frac{\ell^2}{2}\mod 2\ZZ\ ,
 $$
  for $s,t\in \ZZ/N\ZZ$ and $\ell\in \ZZ/2\ZZ$.  Then, $F$ can be seen as a $\CC[M^*/M]$-valued function $F\equiv(F_\delta)_{\delta\in M^*/M}$, with components
 $$ F_{s\alpha+t\beta+\ell\gamma}(\tau):=F_{s,t,\ell}(\tau)\ ,\qquad s,t\in \ZZ/N\ZZ,\ \ell\in \ZZ/2\ZZ\ ,
 $$  transforming as a modular form of weight $(b^+-b^-)/2=-1/2$ with respect to the Weil representation $\rho_M$, i.e.
 $$ F(\frac{a\tau+b}{c\tau+d})=\frac{1}{\sqrt{c\tau+d}}\,\rho_M\left(\begin{pmatrix}
 a & b\\ c & d
 \end{pmatrix},\sqrt{c\tau+d}\right)  F(\tau)\ .
 $$

 Theorem 13.3 of \cite{Borcherds98} then associates with the modular form $F$ an automorphic form $\Psi_M(F)$ under a certain group $Aut(M,F)\subseteq O(M)$ of automorphisms of $M$ (see \cite{Borcherds98} for details). In particular, for a lattice $M$ of signature $(b^+,b^-)=(2,3)$, the domain of definition of $\Psi_M(F)$ is the Siegel upper half-space of genus $2$, and $Aut(M,F)\subset O(2,3,\RR)$ can be seen as a discrete subgroup of $ Sp(4,\RR)\cong O(2,3,\RR)$, whose intersection with $Sp(4,\ZZ)$ is a suitable congruence subgroup (see \cite{GritClery} for more details). The automorphic form $\Psi_M(F)$, therefore, is a meromorphic Siegel modular form of genus two, whose zeroes and poles are described in point 2 of Theorem 13.3 of \cite{Borcherds98}. Finally, for each primitive null vector $z\in M$, the form $\Psi_M(F)$ admits an infinite product expansion convergent in a suitable region of the Siegel upper half-space  (see point 5 of Theorem 13.3 of \cite{Borcherds98}). In particular, if we take $z$ to be a primitive null vector in the component $\Gamma^{1,1}(N)$ of $M=\Gamma^{1,1}\oplus \Gamma^{1,1}(N)\oplus \langle -2\rangle$, the infinite product becomes exactly the one in eq.\eqref{PhiSieg}. This proves our claims in section \ref{s:elevenandfriends}. Expansions of $\Psi_M(F)$ at other cusps (i.e. different primitive null vectors) can also be related to `second quantized twisted-twining genera', i.e. generating functions for twisted-twining genera in $K3^{[n]}$ models. 
 
 More generally, for any commuting $g,h\in G$, generating a group $\ZZ_N\times \ZZ_M$, one can obtain a `second quantized twisted-twining genus' starting from the vector valued Jacobi form whose components are the twisted-twining genera $\phi_{g^sh^r,g^kh^j}$, $s,k\in \ZZ/N\ZZ$, $r,j\in \ZZ/M\ZZ$ where the group generated by $g$ and $h$ is not necessarily cyclic. The procedure is analogous: one considers a double `discrete Fourier transform' $\hat \phi_{s,r,t,u}$, analogous to \eqref{FourTr}, whose Fourier coefficients are the $\ZZ_2$-graded dimensions of simultaneous $g$- and $h$-eigenspaces in the $g^sh^r$-twisted sector. Then, one performs a theta decomposition analogous to \eqref{thetadec} to get some functions $F_{s,r,t,u,\ell}$. One can easily show that these functions are the components of a vector-valued modular form of weight $-1/2$ for the Weil representation $\rho_M$ of $Mp(2,\ZZ)$ with $M=\Gamma^{1,1}(M)\oplus \Gamma^{1,1}(N)\oplus \langle -2\rangle$. By applying Theorem 13.3 of \cite{Borcherds98}, one gets a meromorphic Siegel modular form of genus two for a congruence subgroup of $Sp(4,\ZZ)$, which admits several different infinite product expansions near its cusps. These expansions are related to generating functions for twisted-twining genera of symmetric products of the K3 model. 
 
 The construction also generalizes to the case where the multiplier is not trivial, at least in the case where $g,h$ generate a cyclic group. The presence of a non-trivial multiplier can be interpreted as the fact that the order $\tilde N$ of the symmetry $g$, when acting on the twisted sectors, is larger (in fact, a multiple) than the order $N$ of $g$ on the untwisted sector. The construction then works simply by replacing $N$ with $\tilde N$ everywhere. In the context of orbifolds of vertex operator algebras, similar ideas were used in an example in section 3.5 of \cite{Carnahan2014}, and were described in some detail in appendix B of \cite{Paquette:2016xoo}; we refer to these references for more details. Second quantized twisted-twining genera when the group generated by $g,h$ is not cyclic and the multiplier is not trivial were considered in \cite{Persson:2013xpa}.
 
One of the hypotheses in Theorem 13.3 of \cite{Borcherds98} is that the Fourier coefficients $c_\delta(n)$ of $F_\delta$ are integral for all $n<0$ and for all $\delta\in M^*/M$. In our case, the `proof' that this condition is satisfied (for all $n$, not just for the negative ones) comes from physics. Indeed, it follows from the fact that the coefficients $c_\delta(n)$  can be interpreted as $\ZZ_2$-graded dimensions of certain eigenspaces in the twisted sector of the theory. In turn, this interpretation is based on the claim, justified by path integral arguments, that the $SL(2,\ZZ)$-transformations of  the twining genus $\phi_g$  correspond to traces of the form \eqref{twtwdef}. In order to prove that eq.\eqref{PhiSieg} is a Siegel modular form without using any input from physics, one needs to verify  this assumption through a (tedious) case by case analysis -- notice that the functions $F_\delta$ are known explicitly, and that there are only a finite number of coefficients $c_\delta(n)$ with negative $n$.

\bibliographystyle{utphys}
	\bibliography{Refs}

	
\end{document}